\documentclass[prodmode]{acmsmall}
\acmVolume{X}
\acmNumber{X}
\acmArticle{X}
\acmYear{2014}
\acmMonth{01}
\usepackage[numbers]{natbib}
\bibliographystyle{acmsmall}

\usepackage{amssymb}





\RequirePackage{ifthen,calc}
\newsavebox{\ffbox}\newlength{\ffboxlen}
\newcommand{\todo}[1]{%
  {\sbox{\ffbox}{\textbf{TODO:}\ \textit{{#1}}\ \textbf{:ODOT}}
    \settowidth{\ffboxlen}{\usebox{\ffbox}}
		\addtolength{\ffboxlen}{-5mm}
    \ifthenelse{\ffboxlen>\linewidth}{%
      \noindent\marginpar{$>>>>$}\textbf{TODO:}\ \textit{{#1}}\ \textbf{:ODOT}\marginpar{$<<<<$}}{%
      \noindent\marginpar{$>><<$}\textbf{TODO:}\ \textit{{#1}}\ \textbf{:ODOT}}}}




\usepackage{array,xspace,multirow,hhline,tikz,tabularx,booktabs,fixltx2e}



\usetikzlibrary{trees}
\usetikzlibrary{positioning,chains,fit,shapes,calc}
\usepackage{pgflibraryshapes}

\usepackage{subfig}
\usepackage{colortbl,color,array,hhline}
\usepackage{tikz}
\usetikzlibrary{arrows}
\usetikzlibrary{decorations.pathreplacing}
\usepackage{lscape}

\DeclareSymbolFont{largesymbolsA}{U}{txexa}{m}{n}
\DeclareMathSymbol{\varprod}{\mathop}{largesymbolsA}{"10}

\usepackage{amsmath,amssymb,amsfonts}

\usetikzlibrary{arrows}
\usetikzlibrary{decorations.pathreplacing}
\usetikzlibrary{patterns}


\usepackage{paralist}
\usepackage[boxed]{algorithm}
\usepackage{algorithmic}

	\newlength{\wordlength}
	\newcommand{\wordbox}[3][c]{\settowidth{\wordlength}{#3}\makebox[\wordlength][#1]{#2}}

%




\newcommand{\eg}{e.g.,\xspace}

\newcommand{\pref}{\succ \xspace}
\newcommand{\preffn}[2]{\; \pref_{#1}\!\!\!({#2}) \xspace}

\newcommand{\prefinvfn}[2]{\; \pref^{-1}_{#1}\!\!({#2})}

\newcommand{\prefcloned}{\succ^{\textsc{cloned}} \xspace}

\newcommand{\hclone}{H^{\textsc{cloned}} \xspace}

\newcommand{\prefbest}{\pref^{\textsc{best}}}

\newcommand{\bestalloc}{\textsc{Best-Alloc}}

\newcommand{\prefclonedbest}{\succ^{\textsc{cloned-best}} \xspace}

\newcommand{\Indiff}[1][]{
	\ifthenelse{\equal{#1}{}}{\mathrel I}{\mathop{I_{#1}}}
}
\newcommand{\prefset}[1][]{\ifthenelse{\equal{#1}{}}{\mathcal{R}}{\mathcal{R}_{#1}}}



\usepackage{enumitem}
\setenumerate[1]{label=\rm(\it{\roman{*}}\rm),ref=({\it\roman{*}}),leftmargin=*}



\newtheorem{claim}{Claim}


\usepackage[mathcal]{euscript}


\newcommand{\midd}{\mathrel{:}}

\newcommand{\set}[1]{\{#1\}}

%


		\newcommand{\sPref}[1][]{                  
			\ifthenelse{\equal{#1}{}}{\mathrel P}{\mathop{P_{#1}}}
		}

				\newcommand{\ps}{\mbox{\sc PS}}

		\newcommand{\BestEURes}{\mbox{\sc BestEUresponse}}
		\newcommand{\BestEUResAlgo}{\mbox{\sc BestEUresponseAlgo}}
		
		\tikzset{
		  treenode/.style = {align=center, inner sep=0pt, text centered,
		    font=\sffamily},
		  arn_n/.style = {treenode, circle, white, font=\sffamily\bfseries, draw=black,
		    fill=black, text width=1.5em},
		  arn_r/.style = {treenode, circle, red, draw=red,
		    text width=1.5em, very thick},
		  arn_x/.style = {treenode, rectangle, draw=black,
		    minimum width=0.5em, minimum height=0.5em}
		}

\markboth{H. Aziz et al.}{Strategic aspects of the probabilistic serial rule}
\author{HARIS AZIZ and SERGE GASPERS and NICK MATTEI and TOBY WALSH\affil{NICTA and University of New South Wales, Australia} and NINA NARODYTSKA\affil{University of Toronto, Canada} 
}

\title{Strategic aspects of the probabilistic serial rule\\ for the allocation of goods}

\begin{abstract}
	The probabilistic serial (PS) rule is one of the most prominent randomized rules for the assignment problem. It is well-known for its superior fairness and welfare properties. However, PS is not immune to manipulative behaviour by the agents. We examine computational and non-computational aspects of strategising under the PS rule. Firstly, we study the computational complexity of an agent manipulating the PS rule. We present polynomial-time algorithms for optimal manipulation. 
	Secondly, we show that expected utility best responses can cycle.
	Thirdly, we examine the existence and computation of Nash equilibrium profiles under the PS rule. We show that a pure Nash equilibrium is guaranteed to exist under the PS rule. For two agents, we identify two different types of preference profiles that are not only in Nash equilibrium but can also be computed in linear time. Finally, we conduct experiments to check the frequency of manipulability of the PS rule under different combinations of the number of agents, objects, and utility functions.
\end{abstract}

\category{F.2.2}{Analysis of Algorithms and Problem Complexity}{Nonnumerical Algorithms and Problems}[Computations on discrete structures]
\category{I.2.11}{Artificial Intelligence}{Distributed Artificial Intelligence}[Multi\-agent Systems]
\category{J.4}{Computer Applications}{Social and Behavioral Sciences}[Economics]
\terms{Theory, Algorithms, Economics}

\terms{Algorithms, Economics, Theory}

\keywords{fair division, strategyproofness, random assignment, probabilistic serial rule, Nash dynamics, best responses.}


\begin{document}
	\begin{bottomstuff}
	Emails: \texttt{haris.aziz@nicta.com.au},
	\texttt{sergeg@cse.unsw.edu.au}, \texttt{Nicholas.Mattei@nicta.com.au},
	\texttt{ninan@cs.toronto.edu}, \texttt{toby.walsh@nicta.com.au} 
	\end{bottomstuff}


	\sloppy

	\maketitle

	\section{Introduction}

	The \emph{assignment problem} is one of the most fundamental and important problems in economics and computer science~\citep[see \eg][]{BoMo01a,Gard73b,HyZe79a,ABS13a, SeSa13a}. Agents express preferences over objects and, based on these preferences, the objects are allocated to the agents.
	A randomized or fractional assignment rule takes the preferences of the agents into account in order to allocate each agent a fraction of the object. If the objects are indivisible, the fraction can also be interpreted as the probability of receiving the object. Randomization is widespread in resource allocation since it is one of the most natural ways to ensure procedural fairness~\citep{BCKM12a}. Randomized assignments have been used to assign public land, radio spectra to broadcasting companies, and US permanent visas to applicants~\citep[Footnote~1 in ][]{BCKM12a}.

	Typical criteria for randomized assignment being desirable are fairness and welfare.
	The \emph{probabilistic serial (PS)} rule is an ordinal randomized/fractional assignment rule that fares better on both counts than any other random assignment rule~\citep{BoHe12a,BoMo01a,BCKM12a, KaSe06a,Koji09a, Yilm10a,SaSe13b}.
	In particular, it satisfies strong envy-freeness and efficiency with respect to both \emph{stochastic dominance (SD)} and \emph{downward lexicographic (DL)} relations~\citep{BoMo01a,ScVa12a,Koji09a}.
	  SD is one of the most fundamental relations between fractional allocations because one allocation is SD-preferred over another iff for any utility representation consistent with the ordinal preferences, the former yields at least as much expected utility as the latter. DL is a refinement of SD and based on lexicographic comparisons between fractional allocations.
	Generalizations of the PS rule have been recommended in many settings~\citep[see \eg][]{BCKM12a}.
	The PS rule also satisfies some desirable incentive properties. If the number of objects is not more than the number of agents, then PS is weak strategyproof with respect to stochastic dominance~\citep{BoMo01a}. 
	However, PS is not immune from manipulation.\footnote{Another well-established rule \textit{random serial dictator (RSD)} is strategyproof but it is not envy-free and not as efficient as PS~\citep{BoMo01a}. Moreover, in contrast to PS, the fractional allocations under RSD  are \#P-complete to compute~\citep{ABB13b}.}

	PS works as follows. Each agent expresses linear orders over the set of houses (we use the term house throughout the paper though we stress any object could be allocated with these mechanisms). Each house
	is considered to have a divisible probability weight of one, and agents simultaneously and with the same speed consume the probability weight of their most preferred house. Once a house has been consumed, the agent proceeds to eat the next most preferred house that has not been completely consumed. The procedure terminates after all the houses have been consumed. The random allocation of an agent by PS is the amount of each object he has eaten.\footnote{Although PS was originally defined for the setting where the number of houses is equal to the number of agents, it can be used without any modification for fewer or more houses than agents~\citep[see \eg][]{BoMo01a,Koji09a}.}

	We examine the following natural questions for the first time: \emph{what is the computational complexity of an agent computing a different preference to report so as to get a better PS outcome? How often is a preference profile manipulable under the PS rule?}. 
\footnote{This problem of computing the optimal manipulation has already been studied in great depth for voting rules~\citep[see \eg][]{FaPr10a,FHH10a}.}
	The complexity of manipulation of the PS rule has bearing on another issue that has recently been studied---preference profiles that are in Nash equilibrium.  \citet{EkKe12a} showed that when agents are not truthful, the outcome of PS may not satisfy desirable properties related to efficiency and envy-freeness. 
	Because the PS rule is manipulable it is important to understand how hard, computationally, it is for an agent
	to compute a beneficial misreporting as this may make it difficult in practice to exploit the mechanism.
	It is also interesting to identify  preference profiles for which no agent has an incentive to unilaterally deviate to gain utility with respect to his actual preferences.
	Hence, we consider the following problem: \emph{for a preference profile, does a (pure) Nash equilibrium exist or not and if it exists how efficiently can it be computed?}
	%

	In order to compare random allocations, an agent needs to consider relations between random allocation. We consider three well-known relations between lotteries~\citep[see \eg][]{BoMo01a,ScVa12a,SaSe13b,Cho12a}:
	$(i)$ \textit{expected utility (EU)}, $(ii)$ \textit{stochastic dominance (SD)},  and $(iii)$ \textit{downward lexicographic (DL)}.
	For EU, an agent seeks a different allocation that yields more expected utility. For SD, an agent seeks a different allocation that yields more expected utility for all cardinal utilities consistent with the ordinal preferences. For DL, an agent seeks an allocation that gives a higher probability to the most preferred alternative that has different probabilities in the two allocations.
	Throughout the paper, we assume that agents express \emph{strict} preferences, i.e., they are not indifferent between any two houses. 
	

	\paragraph{Contributions}

	We initiate the study of computing best responses and checking for Nash equilibrium for the PS mechanism --- one of the most established randomized rules for the assignment problem.
	We present a polynomial-time algorithm to compute the DL best response for multiple agents and houses. The algorithm
	works by carefully simulating the PS rule for a sequence of partial preference lists. 
	For the case of two agents\footnote{The two-agent case is also of special importance since various disputes arise between two parties.}, we present a polynomial-time algorithm to compute an EU best response for any utilities consistent with the ordinal preferences.
	The result for the EU best response relies on an interesting connection between the PS rule and the sequential allocation rule for discrete objects. We leave open the problem of computing the expected utility response for arbitrary number of agents. The fact that a similar problem has also remained open for sequential allocation~\citep{BoLa11a} gives some indication of the challenge of the problem.

	We then examine situations in which all agents are strategic.
	We first show that expected utility best responses can cycle. Nash dynamics in matching theory has been active area of research especially for the stable matching problem~\citep[see \eg][]{AGM+11a}.
	We then prove that a (pure) Nash equilibrium exists for any number of agents and houses.
	To the best of our knowledge, this is the first proof of the existence of a Nash equilibrium for the PS rule. 
	For the case of two agents we present two different linear-time algorithms to compute a preference profile that 
	is in Nash equilibrium with respect to the original preferences. One type of equilibrium profile 
results in the \emph{same} assignment as the one by original profile.

	Finally, we perform an experimental study of the frequency of manipulability of the PS mechanism.
	We investigate, under a variety of utility functions and preference distributions, the likelihood that some agent in a profile
	has an incentive to misreport his preference. The experiments identify settings and utility models in which PS is less susceptible to manipulation.




	\section{Preliminaries}


	An assignment problem $(N, H, \pref)$ consists  of a set of agents $N=\{1,\ldots, n\}$, a set of houses $H=\{h_1, \ldots, h_m\}$ and a preference profile $\pref=(\pref_1,\ldots, \pref_n)$ in which $\pref_i$ denotes a complete, transitive and strict ordering on $H$ representing the preferences of agent $i$ over the houses in  $H$. Since each $\pref_i$ will be strict throughout the paper, we will also refer to it simply as $\succ_i$.

	A fractional assignment is a $(n\times m)$ matrix $[p(i)(j)]$ such that for all $i\in N$, and $h_j\in H$, $0\leq p(i)(j)\leq 1$;  and for all $j\in \{1,\ldots, n\}$, $\sum_{i\in N}p(i)(j)= 1$ 
	The value $p(i)(j)$ is the fraction of house $h_j$ that agent $i$ gets. Each row $p(i)=(p(i)(1),\ldots, p(i)(m))$ represents the allocation of agent $i$.
A fractional assignment can also be interpreted as a random assignment where $p(i)(j)$ is the probability of agent $i$ getting house $h_j$. We will also denote $p(i)(j)$ by $p(i)(h_j)$.

	\paragraph{Relations between random allocations}
		A standard method to compare lotteries is to use the \emph{SD (stochastic dominance)} relation. 
		%
		 Given two random assignments $p$ and $q$, $p(i) \succ_i^{SD} q(i)$ i.e.,  a player $i$ \emph{SD~prefers} allocation $p(i)$ to $q(i)$ if
		$\sum_{h_j\in \set{h_k\midd h_k\pref_i h}}p(i)(h_j) \ge \sum_{h_j\in \set{h_k\midd h_k\pref_i h}}q(i)(h_j)$  for all  $h\in H$ and 
		$\sum_{h_j\in \set{h_k\midd h_k\pref_i h}}p(i)(h_j) > \sum_{h_j\in \set{h_k\midd h_k\pref_i h}}q(i)(h_j) \text{ for some } h\in H.$


		Given two random assignments $p$ and $q$, $p(i) \pref_i^{DL} q(i)$ i.e.,  a player $i$ \emph{DL~prefers} allocation $p(i)$ to $q(i)$ if $p(i)\neq q(i)$ and for the most preferred house $h$ such that $p(i)(h)\neq q(i)(h)$, we have that $p(i)(h)>q(i)(h)$.



			When agents are considered to have cardinal utilities for the objects, we denote by $u_i(h)$ the utility that agent $i$ gets from house $h$. We will assume that total utility of an agent equals the sum of the utilities that he gets from each of the houses. Given two random assignments $p$ and $q$, $p(i) \pref_i^{EU} q(i)$ i.e.,  a player $i$ \emph{EU (expected utility)~prefers} allocation $p(i)$ to $q(i)$ iff
		$\sum_{h\in H}u_i(h)p(i)(h)> \sum_{h\in H}u_i(h)q(i)(h).$

	Since for all $i\in N$, agent $i$ compares assignment $p$ with assignment $q$ only with respect to his allocations $p(i)$ and $q(i)$, we will sometimes abuse the notation and use $p\pref_i^{SD} q$ for $p(i)\pref_i^{SD} q(i)$.
	A \emph{random assignment rule} takes as input an assignment problem $(N,H,\pref)$ and returns a random assignment which specifies how much fraction or probability of each house is allocated to each agent.


	%

	\section{The Probabilistic Serial Rule and its Manipulation}

	Recall that the \emph{Probabilistic Serial (PS) rule} is a random assignment algorithm in which we consider each house as infinitely divisible.
	At each point in time, each agent is consuming his most preferred house that has not completely been consumed and each agent has the same unit speed. Hence all the houses are consumed at time $m/n$ and each agent receives a total of $m/n$ unit of houses.
	The probability of house $h_j$ being allocated to $i$ is the fraction of house $h_j$ that $i$ has eaten. The PS fractional assignment can be computed in time $O(mn)$.
	We refer the reader to \citep{BoMo01a} or \citep{Koji09a} for alternative definitions of PS. The following example adapted from \citep[Section 7, ][]{BoMo01a} shows how PS works.

	\begin{example}[PS rule]\label{example:PS}
		Consider an assignment problem with the following preference profile.
	\begin{align*}
		\succ_1:\quad& h_1,h_2,h_3 & \succ_2:\quad& h_2,h_1,h_3&	\succ_3:\quad& h_2,h_3,h_1
		\end{align*}
		Agents $2$ and $3$ start eating $h_2$ simultaneously whereas agent $1$ eats $h_1$. When $2$ and $3$ finish $h_2$, agent $3$ has only eaten half of $h_1$.  The timing of the eating can be seen below.
	\begin{center}
	             \begin{tikzpicture}[scale=0.2]
	                 \centering
	                 \draw[-] (0,0) -- (0,6);
	                 \draw[-] (0,0) -- (20,0);

	                 \draw[-] (20,6) -- (20,0);

	\draw[-] (0,2) -- (20,2);
	\draw[-] (0,4) -- (20,4);
	\draw[-] (20,0) -- (20,6);

	\draw[-] (10,0) -- (10,6);

	\draw[-] (0,6) -- (20,6);

	\draw[-] (15,0) -- (15,6);

	                                        \draw (0,-.6) node(c){\small $0$};
	                             \draw (20/2,-1) node(c){\small $\frac{1}{2}$};

	 \draw (20/2,-2.5) node(c){\small Time};

	                             \draw (20,-1) node(c){\small$1$};

	\draw (15,-1) node(c){\small$\frac{3}{4}$};

	    \draw(-3,6) node(z){\small Agent $1$};
	                 \draw(-3,4) node(z){\small Agent $2$};
	                 \draw(-3,2) node(z){\small Agent $3$};

	\draw(5,6.6) node(z){\small $h_1$};

	\draw(5,4.6) node(z){\small $h_2$};

	\draw(5,2.6) node(z){\small $h_2$};

	\draw(12.5,6.6) node(z){\small $h_1$};

	\draw(12.5,4.6) node(z){\small $h_1$};

	\draw(12.5,2.6) node(z){\small $h_3$};

	\draw(17.5,6.6) node(z){\small $h_3$};

	\draw(17.5,4.6) node(z){\small $h_3$};

	\draw(17.5,2.6) node(z){\small $h_3$};
	  \end{tikzpicture}
	\end{center}

		The final allocation computed by PS is
	$
	PS(\succ_1,\succ_2,\succ_3)=\begin{pmatrix}
		3/4&0&1/4\\
	  1/4&1/2& 1/4 \\
	  0&1/2 &  1/2

	 \end{pmatrix}.
	$

	\end{example}

	%
	%
	%
	%
	%
	%
	%
	%
	%

		Consider the assignment problem in Example~\ref{example:PS}. If agent $1$ misreports his preferences as follows: $\succ_1':\quad h_2,h_1,h_3,$ then $
	PS(\succ_1',\succ_2,\succ_3)=\begin{pmatrix}
		1/3&1/2&1/6\\
	  1/3 & 1/2 & 1/6 \\
	  1/3 & 0 & 2/3
		\end{pmatrix}.
	$
	\noindent
	Then, if $u_1(h_1)=7$, $u_1(h_2)=6$, and $u_1(h_3)=0$, then agent $1$ gets more expected utility when he reports $\succ_1'$. In the example, although truth-telling is a DL best response, it is not necessarily  an EU best response for agent $1$.

	Examples 1 and 2 of \citep{Koji09a} show that manipulating the PS mechanism can lead to an SD improvement when each agent can be allocated more than one house. In light of the fact that the PS rule can be manipulated, we examine the complexity of a single agent computing a manipulation, in other words, the best response for the PS rule.\footnote{Note that if an agent is risk-averse and does not have information about the other agent's preferences, then his maximin strategy is to be truthful. The reason is that if all all agents have the same preferences, then the optimal strategy is to be truthful.} We then study the existence and computation of Nash equilibria. For $\mathcal{E}\in \{SD, EU, DL\}$, we define the problem \textsc{$\mathcal{E}$BestResponse}: given $(N,H,\pref)$ and agent $i\in N$,
	 compute a preference $\pref_i'$ for agent $i$  such that there exists no preference $\pref_i''$ such that $PS(N,H,(\pref_i'',\pref_{-i})) \succ_i^{\mathcal{E}} PS(N,H,(\pref_i',\pref_{-i}))$.
	For a constant $m$, the problem \textsc{$\mathcal{E}$BestResponse} can can be solved by brute force by trying out each of the $m!$ preferences. Hence we won't assume that $m$ is a constant.

	We establish some more notation and terminology for the rest of the paper.
	We will often refer to the PS outcomes for partial lists of houses and preferences.
	We will denote by $PS(\pref_i^{L}, \pref_{-i})(i)$, the allocation that agent $i$ receives when
	his preferences are restricted to the list $L$ where $L$ is an ordered list of a subset of houses.  When an agent runs out of houses
	in his preference list, he does not eat any other houses.
	The \emph{length} of a list $L$ is denoted $|L|$, and we refer to the $k$th house in $L$ as $L(k)$.
	In the PS rule, the \emph{eating start time} of a house is the time point at which the house starts to be eaten by some agent. In Example~\ref{example:PS}, the eating start times of $h_1,h_2$ and $h_3$ are $0,0$ and $0.5$, respectively.

	\section{Lexicographic best response}
	\label{sec:dl}

	In this section, we present a polynomial-time algorithm for \textsc{DLBestResponse}. Lexicographic preferences are well-established in the assignment literature~\citep[see \eg][]{SaSe13b,ScVa12a,Cho12a}. 
Let $(N,H,\pref)$ be an assignment problem where $N=\{1,\dots,n\}$ and $H=\{h_1,\dots,h_m\}$. We will show how to compute a DL best response for agent $1\in N$. It has been shown that when $m\leq n$, then truth-telling is the DL best response but if $m>n$, then this need not be the case~\citep{SaSe13b,ScVa12a,Koji09a}.

	Recall that a preference $\succ_1'$ is a DL best response for agent 1 if the fractional allocation agent 1 receives by reporting $\succ_1'$ is DL preferred to any fractional allocation agent 1 receives by reporting another preference.
	That is, there is no preference $\succ_1''$ such that his share of a house $h$ when reporting $\succ_1''$ is strictly larger than when reporting $\succ_1'$ while the share of all houses he prefers to $h$ (according to his true preference $\succ_1$) is the same whether reporting $\succ_1'$ or $\succ_1''$.

	Our algorithm will iteratively construct a partial preference list for the $i$ most preferred houses of agent 1.
	Without loss of generality, denote
	 $\succ_1: h_1, h_2, \dots, h_m.$

	For any $i, 1\le i\le m$, denote $H_i = \{h_1, \dots, h_i\}$.
	A (partial) preference of agent 1 \emph{restricted} to $H_i$ is a preference over a subset of $H_i$.
	Note that a preference for $H_i$ need not list all the houses in $H_i$.
	For the preference of agent 1 restricted to $H_i$, the PS rule computes an allocation where the preference of agent 1 is replaced with this preference and the preferences of all other agents remain unchanged.
	Recall that agent 1 can only be allocated a non-zero fraction of a house if this house is in the preference list he submits.
	The notions of DL best response and DL preferred fractional assignments with respect to a subset of houses $H_i$ are defined accordingly for restricted preferences of agent 1.

	For a house $h\in H$, let $PS1(L,h)$ denote the fraction of house $h$ that the PS rule assigns to agent 1 when he reports the (partial) preference $L$.

	We start with a simple lemma showing that a DL best response for agent 1 for the whole set $H$ can be no better and no worse on $H_i$ than a DL best response for $H_i$.

	\begin{lemma}\label{lem:eq-ass}
	 Let $i\in\{1,\dots,m\}$. A DL best response for agent 1 on $H$ gives the same fractional assignment to the houses in $H_i$ as a DL best response for agent 1 on $H_i$.
	\end{lemma}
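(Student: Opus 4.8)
The plan is to restate DL best responses purely in terms of the vector of fractions that agent 1 receives, and then prove the equality by two opposite inequalities. Write agent 1's true order as $\succ_1\colon h_1,h_2,\dots,h_m$, and for a (possibly partial) report $L$ of agent 1 let $\vec a(L)=(PS1(L,h_1),PS1(L,h_2),\dots)$ be the vector of fractions agent 1 obtains, listed in his true order. By the definition of DL, a report is a DL best response on $H$ (respectively on $H_i$) exactly when $\vec a(L)$ is lexicographically largest among all full lists (respectively among all lists over subsets of $H_i$, comparing only the houses of $H_i$); since any set of vectors has a unique lexicographic maximum, the fraction assigned to each house of $H_i$ under a DL best response is well defined, and the lemma is precisely the claim that the first $i$ entries of the lexicographic maximum over full lists coincide with the lexicographic maximum over lists on $H_i$.

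The basic tool is a \emph{truncation invariance} property, proved by running PS twice in parallel: if $L'$ is obtained from $L$ by appending some houses not in $L$ at the end, then agent 1 receives exactly the same fraction of every house already in $L$. The reason is that agent 1 consumes his list in order and only moves past a house once it has been completely eaten, so at the instant agent 1 exhausts $L$ every house of $L$ is already gone; up to that instant the two executions coincide (agent 1 eats the same houses, hence all other agents always face the same remaining houses), and afterwards agent 1's fractions of the houses of $L$ can no longer change. In particular, agent 1's fractions of the houses in any prefix of his list depend only on that prefix.

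The first, easy, inequality --- that a DL best response on $H$ is ``no worse'' on $H_i$ than a DL best response on $H_i$ --- uses only truncation invariance. Let $L^\ast$ be a DL best response on $H_i$ and extend it to a full list $\sigma'$ by appending the remaining houses in any order. By truncation invariance $\sigma'$ gives agent 1 the same fraction of every house of $L^\ast$, and a nonnegative fraction of any house of $H_i$ omitted by $L^\ast$, so on $H_i$ the allocation of $\sigma'$ is lexicographically at least that of $L^\ast$. A DL best response $\sigma$ on $H$ has $\vec a(\sigma)$ lexicographically at least $\vec a(\sigma')$ over all $m$ entries, hence also over the first $i$, and chaining the two comparisons yields the claim.

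For the second inequality --- ``no better'' --- I would take a DL best response $\sigma$ on $H$ and delete from it, one house at a time, every house outside $H_i$, ending with a list over $H_i$. Together with the first inequality, the lemma then follows once we establish this monotonicity claim: deleting a house $g$ from agent 1's list leaves unchanged agent 1's fraction of every house above $g$ in the list --- immediate from truncation invariance, since those houses form a prefix --- and does not decrease agent 1's fraction of any house below $g$. Given the claim, each of our deletions leaves every house of $H_i$ (which lies above or below the deleted house) with a weakly larger fraction, so the final $H_i$-list --- and hence also a DL best response on $H_i$ --- is, on $H_i$, lexicographically at least as good as $\sigma$; with the first inequality this gives the equality. \textbf{This monotonicity claim is the one genuinely technical step, and I expect it to be the main obstacle:} proving it needs a coupling of the two PS executions --- they agree until agent 1 would start eating $g$, and from then on one must maintain an invariant saying that in the run without $g$ agent 1 is always further along his list and has eaten at least as much of each house below $g$, while keeping track of which houses the other agents are currently eating as the sets of available houses in the two runs diverge. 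Everything else is bookkeeping with truncation invariance and the definition of the lexicographic order.
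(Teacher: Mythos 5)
Your proof has the same two\-/inequality skeleton as the paper's. The easy direction is identical: extend an $H_i$-best response to a full list by appending the remaining houses, invoke truncation invariance (agent 1's shares of a prefix are unaffected by what he appends after it, because every house of the prefix is already exhausted by the time he leaves it), and use that the first $i$ coordinates of the DL comparison on $H$, taken in the true order, are exactly the houses of $H_i$. For the converse direction the paper likewise passes from the $H$-best response $\succ_1'$, with allocation $p$, to its restriction to $H_i$ (all at once rather than by deleting one house at a time), calls the resulting allocation $q$, and rules out $q$ being strictly DL preferred to $p_{|H_i}$ by re-extending and contradicting optimality --- exactly the move you make.

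The one substantive issue is the step you yourself flag and do not carry out: you need that deleting the houses of $H\setminus H_i$ from agent 1's list does not decrease his share of any house of $H_i$, and you only sketch how one ``would'' prove it by coupling the two PS executions. Without that claim the chain (best response on $H_i$) $\succeq^{DL} q \succeq^{DL} p_{|H_i}$ does not close, so as written the proof is incomplete at its crux. It is only fair to note that the paper's proof has the same hole: it treats the case where $q$ is weakly DL preferred to $p_{|H_i}$ and silently omits the case $p_{|H_i} \succ^{DL} q$, which is excluded by precisely the monotonicity you are after. Two cautions about your formulation of the claim. First, you assert it for \emph{every} house below the deleted house $g$ in the list, which is stronger than needed and possibly too strong: the paper's own Lemma~\ref{lem:before-half} explicitly allows the shares of houses \emph{less} preferred (in the true order) than the pivot to decrease when houses are removed from the list. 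What you actually need is only that the shares of the houses of $H_i$ do not decrease, and these are all truly preferred to every house you delete --- that restricted statement is the one to prove. Second, even the restricted statement genuinely requires the coupling invariant you allude to (tracking how exhaustion times and the other agents' positions shift when agent 1 skips $g$); it is not bookkeeping, and a complete proof must supply it.
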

	\begin{proof}
	 We have that a preference for agent 1 on $H_i$ can be extended to a preference for all houses that gives the same fractional allocation to agent 1 for the houses in $H_i$.
	 Namely, the remaining houses $H\setminus H_i$ can be appended to the end of his preference list, giving the same allocation to the houses in $H_i$ as before.

	 On the other hand, consider a DL best response $\succ_1'$ for agent 1 on $H$, giving a fractional allocation $p$ to agent 1.
	 Restricting this preference to $H_i$ gives a fractional allocation $q$ for $H_i$.
	 If $q$ is DL preferred to $p_{|H_i}$, i.e., the fractional allocation $p$ restricted to $H_i$, then $q=p_{|H_i}$, otherwise we would have a contradiction to $\succ_1'$ being a DL best response as per the previous argument that we can extend any preference for $H_i$ to $H$ giving the same fractional allocation to agent 1 for the houses in $H_i$.
	\end{proof}

	\noindent
	Our algorithm will compute a list $L_i$ such that $L_i \subseteq H_i$.\footnote{When we treat a list as a set we refer to the set of all elements occurring in the list.}
	The list $L_i$ will be a DL best response for agent 1 with respect to $H_i$.
	Suppose the algorithm has computed $L_{i-1}$.
	Then, when considering $H_i = H_{i-1} \cup \{h_i\}$, it needs to make sure that the new fractional allocation restricted to the houses in $H_{i-1}$ remains the same (due to Lemma \ref{lem:eq-ass}).
	For the preference to be optimal with respect to $H_i$, the algorithm needs to maximize the fractional allocation of $h_i$ to agent 1 under the previous constraint.

	Our algorithm will compute a canonical DL best response that has several additional properties.

	\begin{definition}
	 A preference $L_i$ for $H_i$ is \emph{no-$0$} if $L_i$ contains no house $h$ with $PS1(L_i,h)=0$.
	\end{definition}

	\noindent
	Any DL best response for agent $1$ for $H_i$ can be converted into a no-$0$ DL best response by removing the houses for which agent 1 obtains a fraction of $0$.

	\begin{definition}
	 For a no-$0$ preference $L_i$ for $H_i$,
	 the \emph{stingy ordering} for a position $j$
	 is determined by running the PS rule with the preference $L_i(1) \oplus \dots \oplus L_i(j-1)$ for agent 1 where $\oplus$ denotes concatenation.
	 It orders the houses from $\bigcup_{k=j}^{|L_i|} L_i(k)$ by increasing eating start times, and when 2 houses $h,h'$ have the same eating start time, we order $h$ before $h'$ iff $h\succ_1 h'$.
	\end{definition}

	Intuitively, houses occurring early in this ordering are the most threatened by the other agents at the time point when agent 1 comes to position $j$.
	The following definition takes into account that the eating start times of later houses may change depending on agent 1's ordering of earlier houses.

	\begin{definition}
	 A preference $L_i$ for $H_i$ is \emph{stingy} if it is a no-$0$ DL best response for agent 1 on $H_i$, and
	 for every $j\in\{1,\dots,i\}$, $L_i(j)$ is the first house in the stingy ordering for this position such that there exists a DL best response starting with $L_i(1) \oplus \dots \oplus L_i(j)$.
	\end{definition}

	\noindent
	We note that, due to Lemma \ref{lem:eq-ass}, there is a unique stingy preference for each $H_i$.

	\begin{example}
	 Consider the following assignment problem.
	 \begin{align*}
		\succ_1: h_1, h_2, h_3, h_4, h_5, h_6&&
		\succ_2: h_3, h_6, h_4, h_5, h_1, h_2
	 \end{align*}
	 The preferences $h_3, h_1, h_4, h_2$ and $h_3, h_2, h_4, h_1$ are both no-$0$ DL best responses for agent 1 with respect to $H_4$, allocating $h_1 (1), h_2 (1), h_3 (1/2), h_4 (1/2)$ to agent 1. When running the PS rule with $h_3$ as the preference list, $h_4$'s eating start time comes first among $\{h_1,h_2,h_4\}$. However, there is no DL best response for $H_4$ starting with $h_3,h_4$. The next house in the stingy ordering is $h_1$. The preference $h_3, h_1, h_4, h_2$ is the stingy preference for $H_4$.
	\end{example}

	\noindent
	The next lemma shows that when agent 1 receives a house partially (a fraction different from 0 and 1) in a DL best response, a stingy preference would not order a less preferred house before that house.

	\begin{lemma}\label{lem:before-half}
	 Let $L_{i}$ be a stingy preference for $H_i$.
	 Suppose there is a $h_j\in H_i$ such that $0<PS1(L_{i},h_j)<1$.
	 Then, $P \subseteq H_j$, where $L_i = P \oplus h_j \oplus S$.
	\end{lemma}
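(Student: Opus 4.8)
The plan is to argue by contradiction, combining an exchange argument on agent~$1$'s reported list with the greedy structure built into a stingy preference. Suppose $L_i$ is stingy, $0<PS1(L_i,h_j)<1$, but $P\not\subseteq H_j$; then $P$ contains a house $h'$ with $h_j\succ_1 h'$. Let $h_k$ be the last such house in $P$ and write $L_i=X\oplus h_k\oplus Y\oplus h_j\oplus S$, so that every house of $Y$ is $\succ_1$-preferred to $h_j$. The list we will compare against is $L':=X\oplus Y\oplus h_j\oplus h_k\oplus S$, obtained by sliding $h_j$ forward past $Y$ and $h_k$ (equivalently, sliding $h_k$ backward past $Y$ and $h_j$).

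Two facts about the PS dynamics drive the argument. First, \emph{prefix stability}: if agent~$1$ changes his list only by moving a single house $h$ to an earlier position, the two PS runs coincide up to the first instant agent~$1$ would reach a house that has moved --- which is exactly when he finishes eating every house he still lists before $h$ --- since up to that instant the remaining amount of every house, and hence every agent's current choice, evolves identically; so agent~$1$'s share of every house still listed before $h$ is unchanged. Second, \emph{own-share monotonicity}: bringing agent~$1$'s consumption of a house $h$ to an earlier eating time never decreases his final share of $h$, the intuition being that the other agents eat $h$ at unit speed precisely while $h$ is their favourite un-exhausted house, so deferring agent~$1$'s own consumption of $h$ only gives them more time to deplete it. I expect this second fact to be the main obstacle: agent~$1$'s reordering can change \emph{when} $h$ and other houses become exhausted, and this feeds back into the other agents' choices, so a careful proof needs an induction over the sequence of exhaustion events.

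Granting these, prefix stability applied to $L'$ leaves agent~$1$'s share of each house of $X$ unchanged, and own-share monotonicity --- each house of $Y$ and also $h_j$ is now reached no later than before --- gives $PS1(L',h)\ge PS1(L_i,h)$ for every $h\in Y\cup\{h_j\}$. A supplementary step, using that $L_i$ is a \emph{no-$0$} DL best response together with \lemref{lem:eq-ass}, shows that no house $\succ_1$-preferred to $h_j$ occurs in $S$ with a positive share, so the set of houses $\succ_1$-preferred to $h_j$ lies entirely inside $X\cup Y$ and their shares are unchanged between $L_i$ and $L'$.

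Finally we split into two cases. If agent~$1$'s share of some house in $Y\cup\{h_j\}$ strictly increases, let $h^{\ast}$ be the $\succ_1$-maximal such house; every house $\succ_1$-preferred to $h^{\ast}$ lies in $X\cup Y$ and keeps its share, so $PS(L')\succ_1^{DL}PS(L_i)$, contradicting that $L_i$ is a DL best response. Otherwise $L'$ gives agent~$1$ exactly the same shares as $L_i$ on every house $\succeq_1 h_j$, so --- after deleting any house agent~$1$ now receives nothing of --- $L'$ is a no-$0$ DL best response that agrees with $L_i$ on the prefix $X$ but places, at the position where $L_i$ has $h_k$, a house $\succ_1$-preferred to $h_k$ ($h_j$, or the first house of $Y$). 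Using that agent~$1$ obtains $h_j$ only partially, one checks from the PS run with agent~$1$'s list truncated to $X$ that this house has eating start time no later than $h_k$, hence precedes $h_k$ in the stingy ordering for this position, while it admits an extension to a DL best response; so the stingy construction could not have chosen $h_k$ there, contradicting that $L_i$ is stingy. Either way we reach a contradiction, so $P\subseteq H_j$. The equality case and this eating-start-time comparison are the second spot where genuine care is needed; I would expect most of the formal work to sit here and in the monotonicity fact.
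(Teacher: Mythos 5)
Your overall strategy (slide $h_j$ forward past the offending house $h_k$ and derive a contradiction either from DL-optimality or from stinginess) is workable in outline, but the supplementary claim it leans on is false: it is \emph{not} true that no house $\succ_1$-preferred to $h_j$ occurs in $S$ with positive share. The paper's own worked example refutes it: for $\succ_1: h_1,\dots,h_6$ and $\succ_2: h_3,h_6,h_4,h_5,h_1,h_2$, the stingy preference for $H_4$ is $L_4=h_3,h_1,h_4,h_2$ with shares $h_1(1),h_2(1),h_3(1/2),h_4(1/2)$; taking $h_j=h_4$ gives $S=(h_2)$ with $h_2\succ_1 h_4$ and $PS1(L_4,h_2)=1>0$. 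This claim is load-bearing in your case (a): without it, the $\succ_1$-maximal house $h^{\ast}$ whose share strictly increases may be less preferred than some house in $S$ whose share changes in passing from $L_i$ to $L'$ (your reordering changes the time at which agent $1$ reaches $S$, so shares there are not a priori preserved), and then $PS(L')\succ_1^{DL}PS(L_i)$ no longer follows. What is true, and what you would actually need, is that every house in $S$ preferred to $h_j$ is \emph{fully} allocated, together with an argument that it stays fully allocated under $L'$; neither is supplied. Your case (b) has a second soft spot: the assertion that the house now occupying $h_k$'s position has eating start time no later than $h_k$'s in the run truncated to $X$ is exactly the kind of statement the stingy ordering is built around ($h_k$ may well be the most threatened house at that position), and it is not justified.

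For comparison, the paper's proof is a one-step deletion argument that does not need stinginess at all: let $K$ be the set of all houses in $P$ less preferred than $h_j$; since $L_i$ is no-$0$, agent $1$ spends positive time on each of them before reaching $h_j$, so deleting $K$ from the list strictly increases his share of $h_j$ while only shares of less preferred houses can decrease, contradicting that $L_i$ is a DL best response. That route handles all offending houses at once and avoids your two-case analysis, though it too is terse about the houses in $S$ that agent $1$ prefers to $h_j$ and relies on the same unproven monotonicity of PS shares under earlier arrival that you flag as your ``main obstacle.''
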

	\begin{proof}
	 For the sake of contradiction, assume $P$ contains a house $h_k$ such that $h_j \succ_1 h_k$ (i.e., $j<k$).
	 Let $K$ denote all houses $h_k$ in $P$ such that $h_j \succ_1 h_k$.
	 Since $L_i$ is no-$0$, $PS1(L_i,h_k)>0$ for all $h_k \in K$.
	 But then, removing the houses in $K$ from $L_i$ gives a preference that is strictly DL preferred to $L_i$ since this increases agent 1's share of $h_j$ while only the shares of less preferred houses decrease.
	 This contradicts $L_i$ being a DL best response for $H_i$, and therefore proves the lemma.
	\end{proof}

	\noindent
	The next lemma shows how the houses allocated completely to agent 1 are ordered in a stingy preference.

	\begin{lemma}\label{lem:ones-same-order}
	 Let $L_i$ be a stingy preference for $H_i$.
	 If $h_j,h_k \in H_i$ are two houses such that
	 $PS1(L_i,h_j)=PS1(L_i,h_k)=1$, with $L_i = P\oplus h_j \oplus M \oplus h_k \oplus S$, then either the eating start time of $h_j$ is smaller than $h_k$'s eating start time when agent 1 reports $P$, or it is the same and $h_j \succ_1 h_k$.
	\end{lemma}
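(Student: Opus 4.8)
The plan is to argue by contradiction: suppose $h_j,h_k$ are both fully allocated to agent 1 under the stingy preference $L_i = P\oplus h_j \oplus M \oplus h_k \oplus S$, but the eating start time of $h_k$ when agent 1 reports only $P$ is strictly smaller than that of $h_j$ (or they are equal but $h_k \succ_1 h_j$). I would then show that $h_k$ is actually "available enough" to be placed at the position currently occupied by $h_j$, i.e.\ there is a DL best response for $H_i$ starting with $P\oplus h_k$, which contradicts stinginess — because in the stingy ordering for position $|P|+1$, the house $h_k$ would come before $h_j$, and the definition of stingy forces $L_i(|P|+1)$ to be the \emph{first} house in that ordering admitting a DL best response extension.

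The key steps would be: (1) Since agent 1 eats all of $h_j$ in $L_i$, and everything in $P$ (which is eaten before $h_j$) is also fully eaten (using Lemma~\ref{lem:before-half}: a partially-eaten house cannot be preceded by a less preferred house, and more to the point anything before a fully-eaten $h_j$ that itself were only partially eaten would contradict no-$0$/optimality — I'd need to verify that all houses in $P$ get fraction $1$, which follows since agent 1 consumes $P$ completely before starting $h_j$ at time $|P|$, as each house in $P$ finishes before the next starts). (2) Consider swapping to the list $P \oplus h_k \oplus (\text{rest})$ where the rest is $M\oplus h_j\oplus S$ with $h_k$ removed. When agent 1 reports $P$ then moves to $h_k$: by hypothesis $h_k$'s eating start time (under $P$) is $\le$ that of $h_j$, and since $h_j$ was fully consumed by agent 1 starting at time $|P|$, $h_j$'s eating start time under $P$ is at most $|P|$; hence $h_k$'s eating start time under $P$ is also $\le |P|$, so agent 1 can consume $h_k$ fully in the interval $[|P|, |P|+1]$ just as he did $h_j$. (3) Show the rest of the allocation is unaffected enough that the new list is still a DL best response on $H_i$: agent 1's share of every house in $P$ is unchanged (still $1$), his share of $h_k$ is now $1$ (same total mass as before since $h_k$ was fully his before too), and for houses less preferred than $h_k$ the DL criterion only cares that nothing more preferred decreased — which holds. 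The one subtlety is the relative order within $P$ versus $h_k$ when eating start times tie; the tie-breaking clause ($h_j\succ_1 h_k$ vs.\ $h_k\succ_1 h_j$) is exactly what makes the stingy ordering unambiguous, so in the equal-start-time case the assumption $h_k\succ_1 h_j$ is what puts $h_k$ earlier in the ordering.

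The main obstacle I expect is step (3): carefully checking that reordering $h_k$ to just after $P$ does not change the fractions agent 1 obtains on the houses he prefers to $h_k$ (it shouldn't, since those are exactly the houses in $P\cup\{$nothing$\}$ — wait, $M$ may contain houses more preferred than $h_k$ that are \emph{not} more preferred than $h_j$; no, $M$ sits between $h_j$ and $h_k$ in $L_i$, and I only moved $h_k$ \emph{earlier}, so every house agent~1 prefers to $h_k$ that appeared before $h_k$ in $L_i$ still appears before $h_k$, hence its fraction is weakly... this needs care). Concretely I'd argue: moving $h_k$ from its slot to immediately after $P$ only delays the consumption of the houses in $M$ (and possibly $h_j$), which can only \emph{decrease} agent 1's share of houses in $M\cup\{h_j\}\cup S$; but since $h_j$ is less preferred than $h_k$... no — $h_j\succ_1 h_k$! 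So $h_j$ \emph{is} more preferred than $h_k$, and its share must not drop. Here I'd use that $h_j$ was fully consumed \emph{before} position $|P|+\text{(its slot)}$ anyway, and that since $h_k$'s eating start time under $P$ is $\le |P|$, inserting $h_k$ right after $P$ still leaves agent~1 arriving at $h_j$'s (former) slot no later than before in terms of what matters — this is the delicate bookkeeping, and it is essentially the same "exchange argument" already used in Lemma~\ref{lem:before-half}, so I would model the write-up closely on that proof.
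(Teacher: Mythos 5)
Your skeleton matches the paper's: argue by contradiction that if $h_k$ preceded $h_j$ in the stingy ordering for position $|P|+1$, then reordering $h_k$ into that position would still yield a DL best response for $H_i$, contradicting the ``first house in the stingy ordering'' clause of stinginess. (The paper's own proof is a one-line version of exactly this; it simply asserts that swapping $h_j$ and $h_k$ preserves agent 1's fractional allocation.) The problem is that your justification of the central claim --- that agent 1 still receives all of $h_k$ after it is moved up --- runs in the wrong direction. You write that $h_k$'s eating start time under $P$ is $\le |P|$ and conclude that agent 1 ``can consume $h_k$ fully in $[|P|,|P|+1]$''. An \emph{early} eating start time means some other agent reaches $h_k$ early, which is evidence \emph{against} agent 1 getting all of it; that is precisely why the stingy ordering prioritizes houses with early start times (they are the threatened ones). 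What actually makes the reordering work is the hypothesis $PS1(L_i,h_k)=1$ at $h_k$'s \emph{later} position: no other agent ever eats $h_k$ under $L_i$, so every other agent exhausts the houses he prefers to $h_k$ only after agent 1 has finished $h_k$; since the other agents advance weakly more slowly when agent 1 reports only $P$, the eating start time of $h_k$ under $P$ is at least $T+1$, where $T=\sum_{h\in P}PS1(L_i,h)$ is the time agent 1 finishes $P$. That \emph{lower} bound is what lets agent 1 grab all of $h_k$ during $[T,T+1]$. The assumed comparison of the two eating start times is needed only to place $h_k$ ahead of $h_j$ in the stingy ordering, not to show availability.

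Two further slips in your bookkeeping. First, the houses in $P$ need not all be fully allocated to agent 1: a contested house (say his most preferred, shared with another agent) can give him a fraction $1/2$ and still precede a fully allocated $h_j$; Lemma~\ref{lem:before-half} restricts what may precede a \emph{partially} allocated house, not what may precede a fully allocated one. Hence agent 1 reaches position $|P|+1$ at time $T\le |P|$, not at time $|P|$. Second, your concern that $h_j$'s share might drop after the exchange is resolved by the same lower-bound observation applied to $h_j$ at its new, later position, not by the upper bound on eating start times you were reaching for. With the inequality turned the right way round, the exchange argument goes through and the rest of your plan is sound.
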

	\begin{proof}
	 Suppose not.
	 But then, $L_i$ is not stingy since swapping $h_j$ and $h_k$ in $L_i$ gives the same fractional allocation to agent 1.
	\end{proof}

	\noindent
	We now show that when iterating from a set of houses $H_{i-1}$ to $H_i$, the previous solution can be reused up to the last house that agent 1 receives partially.

	\begin{lemma}\label{lem:same-prefix}
	 Let $L_{i-1}$ and $L_i$ be stingy preferences for $H_{i-1}$ and $H_i$, respectively.
	 Suppose there is a $h\in H_{i-1}$ such that $0<PS1(L_{i-1},h)<1$.
	 Then the prefixes of $L_{i-1}$ and $L_i$ coincide up to $h$.
	\end{lemma}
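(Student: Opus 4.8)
The plan is to compare the two stingy preferences position by position, exploiting that both are built greedily under the same tie-breaking rule and that $L_i$ can differ from $L_{i-1}$ only through the single extra, least preferred house $h_i$.

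First I would record a few structural facts. By Lemma~\ref{lem:eq-ass}, restricting $L_i$ to $H_{i-1}$ gives a DL best response for $H_{i-1}$; hence $L_i$ and $L_{i-1}$ assign agent~1 the same fractions on $H_{i-1}$, in particular $0<PS1(L_i,h)<1$, and --- both being no-$0$ --- they list exactly the same houses of $H_{i-1}$. Moreover $h_i$ cannot precede $h$ in $L_i$: by Lemma~\ref{lem:before-half} every house before $h$ in $L_i$ is $\succ_1$-preferred to $h$, while $h_i$ is the least preferred house. It follows that, after any common prefix $Q$, the set of houses of $L_{i-1}$ still to be placed equals the set of houses of $L_i$ still to be placed, except that the latter may contain $h_i$ in addition; and since the eating start times used in the stingy ordering depend only on $Q$ and on the fixed preferences of the other agents, the position-$j$ stingy ordering for $H_i$ is the position-$j$ stingy ordering for $H_{i-1}$ with $h_i$ inserted at one spot.

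Let $h=L_{i-1}(t+1)$. I would then prove by induction on $j\in\{1,\dots,t+1\}$ that $L_{i-1}$ and $L_i$ agree on their first $j$ positions. Suppose they share the length-$(j-1)$ prefix $Q$, which contains neither $h$ nor $h_i$. By the definition of a stingy preference, $L_{i-1}(j)$ (respectively $L_i(j)$) is the first house in the position-$j$ stingy ordering for which some DL best response for $H_{i-1}$ (respectively for $H_i$) begins with $Q$ followed by that house. First, $L_i(j)\neq h_i$: otherwise $h_i$ would occur before $h$ in $L_i$ (since $h$ is not among $L_i(1),\dots,L_i(j)$ by the induction hypothesis), contradicting the preliminaries; so $L_i(j)$ is one of the houses of $L_{i-1}$. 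Restricting a DL best response for $H_i$ to $H_{i-1}$ is, by Lemma~\ref{lem:eq-ass}, a DL best response for $H_{i-1}$ with the same prefix, so any house that admits a valid continuation for $H_i$ admits one for $H_{i-1}$; hence $L_i(j)$ occurs no earlier than $L_{i-1}(j)$ in the (common) $H_{i-1}$-ordering. For the opposite inequality I would show that $L_{i-1}(j)$ itself admits a DL best response for $H_i$ beginning with $Q$ followed by $L_{i-1}(j)$, the witness being $L_{i-1}$ with $h_i$ spliced in at the position dictated by $h_i$'s eating start time in the PS run of $L_{i-1}$: the content of the claim is that this splice changes neither the chosen prefix nor agent~1's fraction of any house of $H_{i-1}$ --- which, by Lemma~\ref{lem:eq-ass}, is precisely the $H_{i-1}$-allocation of a DL best response for $H_i$ --- while realising the maximum fraction of $h_i$ attainable subject to that, so the result is indeed a DL best response for $H_i$. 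Granting this, $L_i(j)=L_{i-1}(j)$, and at $j=t+1$ the induction reaches $h$, giving the conclusion.

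The bookkeeping with Lemmas~\ref{lem:eq-ass}, \ref{lem:before-half} and \ref{lem:ones-same-order}, and the easy direction ``optimal for $H_i$ restricts to optimal for $H_{i-1}$'', are routine. The delicate point, and the step I expect to be the main obstacle, is the last claim: verifying that inserting $h_i$ into $L_{i-1}$ at the time point given by its eating start time leaves agent~1's allocation on every other house intact (so the prefix and the $H_{i-1}$-allocation are preserved) and simultaneously secures the largest possible share of $h_i$. This requires a direct analysis of how the PS eating trajectory is affected when $h_i$ is added to agent~1's preference list, and that analysis is where the real work of the proof lies.
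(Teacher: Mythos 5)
Your setup is sound (the use of Lemma~\ref{lem:eq-ass} to equate the $H_{i-1}$-allocations and of Lemma~\ref{lem:before-half} to place $h_i$ after $h$ in $L_i$ both match the paper), but the proposal has a genuine gap at exactly the point you flag. Your induction step reduces the lemma to the claim that $L_{i-1}$ with $h_i$ ``spliced in at the position dictated by $h_i$'s eating start time'' is a DL best response for $H_i$ preserving the prefix and the $H_{i-1}$-allocation. This is not a routine verification to be supplied later: it is essentially the correctness of the algorithm's entire iteration step (how to pass from $L_{i-1}$ to $L_i$), which is strictly stronger than Lemma~\ref{lem:same-prefix} itself, so the reduction goes the wrong way. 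Worse, the specific witness you name is not correct as stated: inserting $h_i$ at the slot given by its eating start time can delay agent~1's arrival at a subsequent house enough that another agent consumes part of it, destroying the $H_{i-1}$-allocation --- this is precisely why Algorithm~\ref{algo:DL-BR} must \emph{search} over insertion positions and test the $\mathsf{worse}[q]$ flag rather than read the position off the eating start times. Your ``easy direction'' (restricting an $H_i$-best response to $H_{i-1}$) also leans on Lemma~\ref{lem:before-half}, which is stated only for stingy preferences, whereas the witnessing continuations in the stingy definition are arbitrary DL best responses; deleting $h_i$ from the middle of such a list changes the eating trajectory and needs its own (monotonicity) argument.

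The paper sidesteps all of this with a short contradiction argument that you should compare against. Take the maximal common prefix $P$ of $L_{i-1}$ and $L_i$ and let $x_{i-1},x_i$ be the first houses where they diverge; by Lemma~\ref{lem:before-half} both lists still have $h$ (and, in $L_i$, also $h_i$) ahead of them. If either $x_{i-1}$ or $x_i$ were only partially allocated, then the list that postpones it would lose a nonzero fraction of it to a concurrently eating agent, contradicting the equality of allocations from Lemma~\ref{lem:eq-ass}. Hence both are fully allocated in both lists, and Lemma~\ref{lem:ones-same-order} --- applied once to each list, with the common prefix $P$ determining the same eating start times --- forces $x_{i-1}$ and $x_i$ into the same relative order in both, a contradiction. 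No construction of a best response for $H_i$ is ever needed. If you want to salvage your induction, the missing ingredient is precisely this ``delaying a partially eaten house strictly loses a fraction of it'' observation, used to rule out divergence before $h$ directly rather than via an explicit witness.
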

	\begin{proof}
	 Suppose not.
	 By Lemma \ref{lem:eq-ass}, $PS1(L_i,h) = PS1(L_{i-1},h)$.
	 Let $P_{i-1}=P_i$ denote a maximum common prefix of $L_{i-1}$ and $L_i$, and write
	     $L_{i-1} = P_{i-1} \oplus x_{i-1} \oplus M_{i-1} \oplus h \oplus S_{i-1}$ and
	     $L_{i}   = P_{i}   \oplus x_{i}   \oplus M_{i}   \oplus h \oplus S_{i}$.
	 By Lemma \ref{lem:before-half}, $h \succ_1 h_i$, and therefore, $h_i\in S_i$.
	 Since $L_{i-1}$ and $L_i$ are no-$0$, we have that $PS1(L_{i-1},x_{i-1})>0$ and $PS1(L_{i},x_{i})>0$.
	 Now, if $PS1(L_{i-1},x_{i-1})<1$, then since at least one other agent eats $x_{i-1}$ concurrently with agent 1 when he reports $L_{i-1}$, he loses a non-zero fraction of $x_{i-1}$ when instead he reports $L_i$ and eats $x_i$ after having exhausted $P_i$, we have that $PS1(L_{i},x_{i-1}) < PS1(L_{i-1},x_{i-1})$, a contradiction to Lemma \ref{lem:eq-ass}.
	 Similarly, we obtain a contradiction when $PS1(L_{i},x_{i})<1$.
	 Therefore, $PS1(L_{i-1},x_{i-1})=PS1(L_{i},x_{i})=1$.
	 Now, by Lemma \ref{lem:eq-ass}, we also have that $PS1(L_{i-1},x_{i-1})=PS1(L_{i},x_{i})=1$.
	 But only one of $x_i,x_{i-1}$ can come earlier in the stingy ordering. The other one contradicts Lemma \ref{lem:ones-same-order}.
	\end{proof}

		\begin{algorithm}[tb]
			\caption{DL best response for $n$ agents}
			\label{algo:DL-BR}
			\renewcommand{\algorithmicrequire}{\wordbox[l]{\textbf{Input}:}{\textbf{Output}:}}
			\renewcommand{\algorithmicensure}{\wordbox[l]{\textbf{Output}:}{\textbf{Output}:}}
			\renewcommand{\algorithmiccomment}[1]{\hfill // #1}
			\footnotesize
			\begin{algorithmic}
				\REQUIRE $(N,H,\succ)$ 
				\ENSURE DL Best response of agent $1$
			\end{algorithmic}
			\algsetup{linenodelimiter=\,}
			\begin{algorithmic}[1]
				\STATE $L_1 \leftarrow h_1$ \COMMENT{Best response for agent 1 w.r.t. $H_1 = \{h_1\}$}
				\FOR[Compute a best response w.r.t. $H_2, \dots, H_n$]{$i=2$ to $n$}
				  \STATE $p \leftarrow 0$ 	
				  \IF{$\exists q\in \{1,\dots,i-1\}$ such that $0<PS1(L_{i-1},L_{i-1}(q))<1$}
				    \STATE $p \leftarrow \max\{q\in \{1,\dots,i-1\} : 0<PS1(L_{i-1},L_{i-1}(q))<1\}$
				  \ENDIF
				  \FOR[New house $h_i$ inserted after position $p$]{$q \leftarrow p+1$ to $|L_i|+1$}
				    \STATE $L_i^q \leftarrow L_{i-1}(1) \oplus \dots \oplus L_{i-1}(q-1) \oplus h_i$
				    \WHILE[Complete the list according to the stingy ordering]{$|L_i^q| \le |L_{i-1}|$}
				      \STATE $est \leftarrow$ \textbf{EST}$(N,H,(L_i^q, \succ_2, \dots, \succ_n))$
				      \STATE $S \leftarrow \{h\in L_{i-1} \setminus L_i^q : est(h) \text{ is minimum}\}$
				      \STATE $h_s \leftarrow $ first house among $S$ in $\succ_1$
				      \STATE $L_i^q \leftarrow L_i^q \oplus h_s$
				    \ENDWHILE
				    \IF{$PS1(L_i^q,h_i)=0$}
				      \STATE $L_i^q \leftarrow L_{i-1}$
				    \ENDIF
				  \ENDFOR				  
				  \STATE $q\leftarrow p$
				  \COMMENT{Determine which $L_i^q$ is stingy}
				  \STATE $\mathsf{worse}[p-1] \leftarrow \mathsf{true}$
				  \STATE $\mathsf{finished} \leftarrow \mathsf{false}$
				  \WHILE{$\mathsf{finished}=\mathsf{false}$}
				    \IF{$\exists h\in H_{i-1}$ such that $PS1(L_i^q,h) \ne PS1(L_{i-1},h)$}
				      \STATE $\mathsf{worse}[q] \leftarrow \mathsf{true}$
				      \STATE $q \leftarrow q+1$
				    \ELSE
				      \STATE $\mathsf{worse}[q] \leftarrow \mathsf{false}$
				      \IF{$PS1(L_i^q,h_1)>0$ \AND $PS1(L_i^q,h_1)<1$}
				        \IF{$\mathsf{worse}[q-1] = \mathsf{false}$}
				          \STATE $q \leftarrow q-1$
				        \ENDIF
				        \STATE $\mathsf{finished} \leftarrow \mathsf{true}$
				      \ELSIF{$PS1(L_i^q,h_1)=1$}
				        \STATE $est \leftarrow$ \textbf{EST}$(N,H,(L_i^q(1) \oplus \dots \oplus L_i^q(q-1), \succ_2, \dots, \succ_n))$
				        \IF{$\exists h\in \{L_i^q(q+1), \dots, L_i^q(|L_i^q|)\}$ such that $est(h)\le est(h_i)$}
				          \STATE $q \leftarrow q+1$
				        \ELSE
				          \STATE $\mathsf{finished}=\mathsf{true}$
				        \ENDIF
				      \ENDIF
				    \ENDIF
				  \ENDWHILE
				  \STATE $L_i \leftarrow L_i^q$
				\ENDFOR
				\RETURN $L_n$
			\end{algorithmic}
			\end{algorithm}
	
	\noindent
	We are now ready to describe how to obtain $L_i$ from $L_{i-1}$. See Algorithm~\ref{algo:DL-BR} for the pseudocode.
	The subroutine \textbf{EST}$(N,H,\succ)$ executes the PS rule for $(N,H,\succ)$ and for each item, records the first time point where some agent starts eating it. It returns the eating start times $est(h)$ for each house $h\in H$.
	
	Let $p$ be the last position in $L_{i-1}$ such that the house $L_{i-1}(p)$ is partially allocated to agent 1.
	In case agent 1 receives no house partially, set $p:=0$ and interpret $L_{i-1}(p)$ as an imaginary house before the first house of $L_{i-1}$.
	By Lemma \ref{lem:same-prefix}, we have that $L_{i-1}(s) = L_i(s)$ for all $s\le p$.
	By Lemma \ref{lem:eq-ass}, we have that the fractional assignment resulting from $L_i$ must wholly allocate all houses $L_{i-1}(p+1), \dots, L_{i-1}(|L_{i-1}|)$ to agent 1, and allocate a share of $0$ to all houses in $H_{i-1}\setminus L_{i-1}$.

	It remains to find the right ordering for $\{L_{i-1}(s): p+1\le s\le |L_{i-1}|\} \cup \{h_i\}$.
	By Lemmas \ref{lem:before-half} and \ref{lem:ones-same-order}, the prefixes of $L_{i-1}$ and $L_i$ coincide up to $h$.
	We will describe in the next paragraph how to determine the position $q$ where $h_i$ should be inserted.
	Having determined this position one may then need to re-order the subsequent houses.
	This is because inserting $h_i$ in the list may change the eating start times of the subsequent houses.
	This leads us to the following insertion procedure.
	The list $L_i^q$ obtained from $L_{i-1}$ by inserting $h_i$ at position $q$, with $p<q\le|L_i|+1$, is determined as follows.
	Start with $L_i^q := L_{i-1}(1) \oplus \dots \oplus L_{i-1}(q-1) \oplus h_i$.
	While $|L_i^q|\le |L_{i-1}|$, we append to the end of $L_i^q$ the first house among $L_{i-1}\setminus L_i^q$ in the stingy ordering for this position.
	After the while-loop terminates, run the PS rule for the resulting list $L_i^q$.
	In case we obtain that $PS1(L_i^q,h_i)=0$, we remove $h_i$ again from this list (and actually obtain $L_i^q=L_{i-1}$).

	The position $q$ where $h_i$ is inserted is determined as follows.
	Start with $q:=p$.
	We have an array $\mathsf{worse}$ keeping track of whether the lists $L_i^p , \dots, L_i^{i}$ produce a worse
	outcome for agent 1 than the list $L_{i-1}$.
	Set $\mathsf{worse}[p-1]:=\mathsf{true}$.
	As long as the list $L_i$ has not been determined, proceed as follows.
	Obtain $L_i^q$ from $L_{i-1}$ by inserting $h_i$ at position $q$, as described earlier.
	Consider the allocation of agent 1 when he reports $L_i^q$.
	If this allocation is not the same for the houses in $H_{i-1}$ as when reporting $L_{i-1}$, then set $\mathsf{worse}[q] := \mathsf{true}$, otherwise set $\mathsf{worse}[q] := \mathsf{false}$.
	If $\mathsf{worse}[q]$, then increment $q$.
	This is because, by Lemma \ref{lem:eq-ass}, this preference would not be a DL best response with respect to $H_i$.
	Otherwise, if $0<PS1(L_i^q,h_i)<1$, then we can determine $h_i$'s position. If $\mathsf{worse}[q-1]$, then set $L_i := L_i^q$, otherwise set $L_i:= L_i^{q-1}$.
	This position for $h_i$ is optimal since moving $h_i$ later in the list would decrease its share to agent 1.
	Otherwise, we have that $\mathsf{worse}[q] = \mathsf{false}$ and $PS1(L_i^q,h_i) \in \{0,1\}$.
	This will be the share agent 1 receives of $h_i$.
	If $PS1(L_i^q,h_i) = 0$, then set $L_i:=L_{i-1}$.
	Otherwise ($PS1(L_i^q,h_i) = 1$), it still remains to check whether the current position for $h_i$ gives a stingy preference.
	For this, run the PS rule with the preference $L_i^q(1) \oplus \dots \oplus L_i^q(q-1)$ for agent 1. If $h_i$'s eating start time is smaller than the eating start time of each house $L_i^q(r)$ with $r>q$, then set $L_i := L_i^q$, otherwise increment $q$.

	Thus, given $L_{i-1}$, the preference $L_i$ can be computed by executing the PS rule $O(m)$ times.
	The DL best response computed by the algorithm is $L_m$.
	Since the PS rule can be implemented to run in linear time $O(nm)$, the running time of this DL best response algorithm is $O(nm^3)$.

	\begin{theorem}
	 \textsc{DLBestResponse} can be solved in $O(nm^3)$ time.
	\end{theorem}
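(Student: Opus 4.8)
The plan is to prove both the correctness and the running-time bound of Algorithm~\ref{algo:DL-BR}. For correctness I would argue by induction on $i$ that at the end of the $i$-th outer iteration the algorithm holds the stingy preference $L_i$ for $H_i$ (noted just after the definition of stingy to be unique). Since a stingy preference is in particular a no-$0$ DL best response, the returned list $L_m$ is then a DL best response for agent $1$ on all of $H$, which is what the problem \textsc{DLBestResponse} asks for. The base case $L_1=h_1$ is immediate: agent $1$ must list $h_1$ to receive a positive fraction of it, and this is both no-$0$ and optimal on $H_1$.

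For the inductive step I would first pin down everything that is forced once $L_{i-1}$ is fixed. By Lemma~\ref{lem:same-prefix}, if some house is partially allocated to agent $1$ under $L_{i-1}$, then $L_{i-1}$ and $L_i$ share the prefix up to the last such house — position $p$ in the algorithm; if no house is partial, $p=0$ and no prefix is forced. By Lemma~\ref{lem:eq-ass}, the allocation that $L_i$ induces on $H_{i-1}$ must agree with that of $L_{i-1}$, so all of $L_{i-1}(p+1),\dots,L_{i-1}(|L_{i-1}|)$ stay fully allocated to agent $1$, every house of $H_{i-1}\setminus L_{i-1}$ stays unlisted, and hence the houses placed after position $p$ in $L_i$ are exactly $\{L_{i-1}(s): p<s\le|L_{i-1}|\}\cup\{h_i\}$. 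Thus the only remaining freedom is where $h_i$ is inserted; and once that position is chosen, the order of the houses after it is forced by the definition of stingy, which is precisely what the inner while-loop (completing the list by the stingy ordering, using \textbf{EST}) reproduces.

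The hard part will be justifying the loop that picks the insertion position $q$. Here I would establish the following, in order: (i) $PS1(L_i^q,h_i)$ is non-increasing in $q$ — moving $h_i$ later cannot increase agent $1$'s share of it — so the first feasible position giving a positive share is the one maximising that share; (ii) a position $q$ is infeasible exactly when $L_i^q$ perturbs the allocation on $H_{i-1}$, which is the test recorded in $\mathsf{worse}[q]$, and by Lemma~\ref{lem:eq-ass} such a $q$ cannot give a DL best response; (iii) if the share at the first feasible $q$ lies strictly between $0$ and $1$, then Lemma~\ref{lem:before-half} forces $h_i$ to sit immediately after the block of more preferred houses, so the correct spot is $q$ or $q-1$ according to whether $q-1$ was itself infeasible (using that within a run of fully-allocated houses the stingy ordering is by eating start time with $\succ_1$ breaking ties); (iv) if the share is exactly $1$, Lemma~\ref{lem:ones-same-order} dictates where $h_i$ belongs among the fully-allocated houses, and the algorithm's eating-start-time comparison against the houses after $h_i$ checks exactly this, incrementing $q$ when it fails. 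The delicate point is to show these four cases are exhaustive, that the loop terminates, and that it halts at the genuinely stingy position; the rest is bookkeeping against the forced structure isolated in the previous paragraph.

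Finally, for the running time I would note that every call to \textbf{EST} and every evaluation of the $PS1$-values is a single execution of the PS rule, which runs in $O(nm)$ time. As in the discussion preceding the theorem, obtaining $L_i$ from $L_{i-1}$ uses $O(m)$ such executions, i.e.\ $O(nm^2)$ time, and summing over the $m$ outer iterations yields the claimed $O(nm^3)$ bound.
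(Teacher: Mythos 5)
Your proposal is correct and follows essentially the same route as the paper: an induction maintaining the (unique) stingy preference $L_i$, with Lemmas~\ref{lem:eq-ass} and \ref{lem:same-prefix} forcing the prefix and the set of houses after position $p$, Lemmas~\ref{lem:before-half} and \ref{lem:ones-same-order} justifying the choice of insertion position for $h_i$, and the same running-time accounting ($O(m)$ executions of the $O(nm)$-time PS rule per outer iteration, over $m$ iterations). The points you flag as needing detail (exhaustiveness and termination of the position-selection loop) are handled at the same informal level in the paper's own exposition.
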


	\begin{example}
	 Consider the following instance. 
	 \begin{align*}
		\succ_1: h_1, h_2, h_3, h_4, h_5, h_6, h_7, h_8, h_9, h_{10}\\
		\succ_2: h_8, h_3, h_5, h_2, h_{10}, h_1, h_6, h_7, h_4, h_9\\
		\succ_3: h_9, h_4, h_7, h_1, h_2, h_6, h_5, h_3, h_8, h_{10}
	 \end{align*}
	 After having computed $L_2=h_1, h_2$, the algorithm is now to consider $H_3$. Since $PS1(L_2,h_1)=PS1(L_2,h_2)=1$, the algorithm first considers $L_3^1 = h_3, h_2, h_1$. Note that $h_1$ and $h_2$ have been swapped with respect to $L_2$ since agent 2 starts eating $h_2$ before agent 3 starts eating $h_1$ when agent 1 reports the preference list consisting of only $h_3$. It turns out that $PS1(L_3^1,h_1)=PS1(L_3^1,h_2)=PS1(L_3^1,h_3)=1$. Thus, $\mathsf{worse}[1] = \textsf{false}$. Since $h_3$ does not come first in the stingy ordering, the algorithm needs to verify whether moving $h_3$ later will still give a DL best response with respect to $H_3$.
	 It then considers $L_3^2 = h_1, h_3, h_2$. However, this allocates only half of $h_3$ to agent 1, implying $\mathsf{worse}[2] = \mathsf{true}$. Since $\mathsf{worse}[1] = \mathsf{false}$, the algorithm sets $L_3 = L_3^1$.
	 The DL best response computed by the algorithm is $L_{10}=h_3, h_2, h_1, h_6$.
	\end{example}
	\begin{figure}[htb]
		 \centering
		     \[
		             \begin{tikzpicture}[scale=0.25]
		                 \centering
		                 \draw[-] (0,0) -- (0,8);
		                 \draw[-] (0,0) -- (20,0);
		                 \draw[-] (20,8) -- (20,0);

	\draw[-] (0,8) -- (20,8);
		\draw[-] (0,2) -- (20,2);
		\draw[-] (0,4) -- (20,4);
		\draw[-] (20,0) -- (20,6);


		\draw[-] (0,6) -- (20,6);

	\draw[-] (20/10,0) -- (20/10,8);
	\draw[-] (60/10,0) -- (60/10,8);
	\draw[-] (80/10,0) -- (80/10,8);
	\draw[-] (120/10,0) -- (120/10,8);
	\draw[-] (140/10,0) -- (140/10,8);
	\draw[-] (180/10,0) -- (180/10,8);
	\draw[-] (200/10,0) -- (200/10,8);

		                                        \draw (0,-.8) node(c){\small $0$};
		\draw (20/10,-.8) node(c){\small $\frac{1}{3}$};
		\draw (60/10,-.8) node(c){\small ${1}$};
		\draw (80/10,-.8) node(c){\small $\frac{4}{3}$};
		\draw (120/10,-.8) node(c){\small $2$};
		\draw (140/10,-.8) node(c){\small $\frac{7}{3}$};
			\draw (180/10,-.8) node(c){\small $\frac{9}{3}$};
					\draw (200/10,-.8) node(c){\small $\frac{10}{3}$};

		 \draw(-2.5,8) node(z){\small Agent $1$};
		    \draw(-2.5,6) node(z){\small Agent $2$};
		                 \draw(-2.5,4) node(z){\small Agent $3$};
		                 \draw(-2.5,2) node(z){\small Agent $4$};

		 \draw(10/10,8.4) node(z){\small $h_1$};
			 \draw(10/10,6.4) node(z){\small $h_5$};
				 \draw(10/10,4.4) node(z){\small $h_1$};
		 \draw(10/10,2.4) node(z){\small $h_1$};

		\draw(40/10,8.4) node(z){\small $h_2$};
		\draw(40/10,6.4) node(z){\small $h_5$};
		\draw(40/10,4.4) node(z){\small $h_8$};
	\draw(40/10,2.4) node(z){\small $h_{11}$};

		\draw(70/10,8.4) node(z){\small $h_2$};
		\draw(70/10,6.4) node(z){\small $h_6$};
		\draw(70/10,4.4) node(z){\small $h_8$};
	\draw(70/10,2.4) node(z){\small $h_{11}$};

		\draw(100/10,8.4) node(z){\small $h_4$};
		\draw(100/10,6.4) node(z){\small $h_6$};
		\draw(100/10,4.4) node(z){\small $h_9$};
	\draw(100/10,2.4) node(z){\small $h_{12}$};

		\draw(130/10,8.4) node(z){\small $h_4$};
		\draw(130/10,6.4) node(z){\small $h_7$};
		\draw(130/10,4.4) node(z){\small $h_9$};
	\draw(130/10,2.4) node(z){\small $h_{12}$};

		\draw(160/10,8.4) node(z){\small $h_3$};
		\draw(160/10,6.4) node(z){\small $h_7$};
		\draw(160/10,4.4) node(z){\small $h_{10}$};
	\draw(160/10,2.4) node(z){\small $h_{13}$};

		\draw(190/10,8.4) node(z){\small $h_3$};
		\draw(190/10,6.4) node(z){\small \textcolor{gray}{$h_4$}};
		\draw(190/10,4.4) node(z){\small $h_{10}$};
	\draw(190/10,2.4) node(z){\small $h_{13}$};

		%
		%
		%
		%
		%
		%
		%
		%
		%
		%
		  \end{tikzpicture}
		\vspace{-1em}
		         \]\begin{align*}
					\succ_1: &\quad h_1,h_2,h_3,h_4, \ldots&
					\succ_2: &\quad h_5,h_6,h_7,h_2, h_4,h_{14} \ldots\\
					\succ_3: &\quad h_1,h_8,h_9,h_{10}, h_3, \ldots&
					\succ_4: &\quad h_1,h_{11},h_{12},h_{13},  \ldots
				\end{align*}

		\caption{Illustration of constructing a DL best response for agent $1$ for the preference profile specified above.}
		\label{figure:serge-counter-example}
		 \end{figure}


\begin{example}
	Figure~\ref{figure:serge-counter-example} depicts how the DL best response of agent $1$ looks like. After $h_1$ is inserted, the starting eating time $h_3$ is before $h_4$. But after $h_2$ is inserted in to form $L_2$, then the starting eating time of $h_4$ comes before $h_3$ because agent $2$ won't be able to eat $h_2$. After $h_4$ is inserted to build $L_4$, it turns out that agent $2$ will not be able to eat $h_4$ at all. That is why $h_2$ is shaded in the eating line of agent $2$ because it will already be eaten by the time agent $2$ considers eating it at time $10/3$.

	The DL optimal best response algorithm carefully builds up the DL optimal preferences list while ensuring it is stingy.
\end{example}

	\noindent
	We note that a DL best response is also an SD best response. A best response was defined as a response that is not dominated. Hence a DL-best response is one which no other response DL-dominates. This means that no other response SD-dominates (as DL is a refinement of SD) it. Hence, a DL best response is also a SD best response.
	One may wonder whether an algorithm to compute the DL best response also provides us with an algorithm to compute an EU best response. However, a DL best response may not be an EU best response for three or more agents.
		Consider the preference profile in Example~\ref{example:PS}. Since the number of houses is equal to the number of agents, reporting the truthful preference is a DL best response~\citep{ScVa12a}. However, we have shown a different preference for agent 1 where he may obtain higher utility.


	\section{Expected utility best response}

	In this section
	we present an algorithm to compute an EU best response for two agents for the PS rule.
	First, we reveal a tight connection between a well-known mechanism for sequential allocation of indivisible houses and the $\ps$ mechanism (Section~\ref{s:connection_alt_ps}). Then we demonstrate how the expected utility best response algorithm for the sequential allocation of indivisible houses proposed by~\citet{KoCh71a} can be used to build a best response for the $\ps$ algorithm (Section~\ref{s:eu_br_alt_ps}).

		\subsection{A connection between allocation mechanisms for divisible and indivisible houses}
		\label{s:connection_alt_ps}

	We can obtain the same allocation given by the $\ps$ algorithm using the \emph{alternation policy}, which is a simple mechanism for dividing discrete houses between agents. The alternating policy lets the agents take turns in picking the house that they value most: the first agent takes his most preferred house, then the second agent takes his most preferred house from the remaining houses, and so on. We use the notation $1212\ldots$ to denote the alternation policy. To obtain the allocation of the $\ps$ algorithm using the alternation policy we split our houses into halves and treat them as indivisible houses and adjust agents' preferences over these halves in a natural way.

		Recall that  $H = \{h_1,\ldots,h_m\}$
		is the set of houses.
		Assume $\pref_2: h_1, \ldots, h_m$
	 and the preference of
		agent 1 is a permutation of $h_1, \ldots, h_m$ as follows
		$\pref_1: h_{\pi(1)},\ldots, h_{\pi(m)}.$
		We denote $\preffn{i}{k}$ the $k$th preferred house of the agent $i$,
		and by  $\prefinvfn{1}{h_i}$ we mean the position of $h_i$ in $\pref_1$.

		We split each house $h_i$, $i=1,\ldots,m$, into halves and treat
		these halves as indivisible houses. Given $h_i$, we say
		that $h_i^1$ and $h_i^2$ are two halves of $h_i$.
		Given the set of houses $H$, we denote $\hclone$ the
		set of all halves of all houses in $H$, so that
		$\hclone = \{h_1^1, h_1^2, \ldots, h_m^1, h_m^2\}$.
		Given $\pref_1$ and $\pref_2$, we introduce profiles $\prefcloned_1$ and $\prefcloned_2$ that are obtained
		by straightforward splitting of houses into halves in $\pref_1$ and $\pref_2$:
		$\prefcloned_1 = h_{\pi(1)}^1, h_{\pi(1)}^2, \ldots, h_{\pi(m)}^1, h_{\pi(m)}^2$
		and
		$\prefcloned_2 = h_1^1, h_1^2, \ldots, h_m^1, h_m^2$.
		We call this transformation the \emph{order-preserving bisection}.





		\begin{definition}
		Let $\pref_s$ be a preference over a subset of half-houses $S \subseteq \hclone$.
		The preference $\pref_s$ has the \emph{consecutivity} property if and only if $\prefinvfn{s}{h_i^1} + 1 = \prefinvfn{s}{h_i^2}$
		for all pairs $h_i^1, h_i^2 \in S$. In other words, all half-houses of the same house
		are ranked consecutively in $\pref_s$.
		\end{definition}



		The preference $\pref_s =  h_1^1,h_2^1, h_2^2, h_3^1 ,h_3^2$ has the consecutivity property over the set $S = \{h_1^1, h_2^1, h_2^2, h_3^1, h_3^2\}$,
		while $\pref_s = h_1^1,h_2^1,h_3^1, h_2^2 ,h_3^2$ does not since $h_2^1 \succ h_3^1 \succ h_2^2$.
		We observe that $\prefcloned_1$ and $\prefcloned_2$ that are obtained from $\pref_1$ and $\pref_2$ using
	the order-preserving bisection, respectively, have the consecutivity property.

	Next, we define the \emph{order-preserving join} operation. It is the reverse operation for the order-preserving bisection.
	Given a preference  $\prefcloned_s$ of the order-preserving join operation merges all halve houses that are ordered
	consecutively into a single house and leaves the other houses unchanged.  
	Applying the order-preserving join to $\prefcloned_s = h_1^1,h_1^2,h_2^1,h_3^1,h_3^2,h_2^2.$ gives $\pref_s =h_1,h_2^1,h_3,h_2^2$.

	%


		Next, we show the main result of this section. The outcome of the alternation
		policy
	over
	 $\prefcloned_1$  and $\prefcloned_2$ is identical to the outcome of $\ps$ over $\pref_1$ and $\pref_2$,
		where $\prefcloned_1$ and $\prefcloned_2$ are obtained by the order-preserving bisection from $\pref_1$ and $\pref_2$.
		In the alternation policy $12,\ldots, 12$ we call a pair of consecutive steps $12$ \emph{a round}.

		\begin{lemma}\label{l:ps2alt}
		The allocation obtained by the $\ps$ algorithm
		over the preferences $\pref_1$ and $\pref_2$ of length $m$ is the same as
		the allocation obtained by the alternation policy of length $2m$
		over  the preferences $\prefcloned_1$ and $\prefcloned_2$.
		\end{lemma}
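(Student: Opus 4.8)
The plan is to prove the lemma by induction on the number of houses $m$, exploiting the fact that $\ps$ with two agents proceeds in synchronized ``rounds.'' The case $m\le 1$ is the base case: for $m=0$ the statement is vacuous, and for a single house $h_1$, $\ps$ gives each agent the fraction $1/2$, while the alternation $12$ over $h_1^1,h_1^2$ assigns $h_1^1$ to agent $1$ and $h_1^2$ to agent $2$, which after the order-preserving join is again $1/2$ each. For the inductive step I would first record the structural observation that drives the argument: in $\ps$ an agent switches to a new house only once his current house has been eaten completely, both agents start eating at time $0$, and every house has weight $1$; hence, writing $x_1,x_2$ for the most preferred houses of agents $1$ and $2$, after the ``first round'' both agents are simultaneously free again and every house touched so far has been eaten completely, so $\ps$ continues on a genuine sub-instance.

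There are two cases. If $x_1=x_2=:x$, then in $\ps$ both agents eat $x$ at combined speed $2$, each obtaining exactly $1/2$ of it, and at time $1/2$ they resume on $H\setminus\{x\}$ with their preferences restricted. On the alternation side, since $x^1\succ^{\textsc{cloned}}_j x^2$ for $j\in\{1,2\}$ and, by consecutivity, nothing lies strictly between $x^1$ and $x^2$ in either cloned preference, the first alternation round is forced to give $x^1$ to agent $1$ and $x^2$ to agent $2$; the remaining $2(m-1)$ steps form exactly the alternation on $(H\setminus\{x\})^{\textsc{cloned}}$ with the cloned preferences restricted to it, and these restrictions are precisely the order-preserving bisections of $\succ_1,\succ_2$ restricted to $H\setminus\{x\}$ (restricting a consecutivity-preserving preference keeps consecutivity). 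By the induction hypothesis these match $\ps$ on the sub-instance, and adding the $1/2$--$1/2$ split of $x$ yields equality on the full instance. If $x_1\ne x_2$, then in $\ps$ agent $1$ eats all of $x_1$ and agent $2$ all of $x_2$, both finishing at time $1$, after which $\ps$ continues on $H\setminus\{x_1,x_2\}$; correspondingly, the first two alternation rounds are forced --- again by consecutivity, and because $x_1\ne x_2$ so the two agents never compete for the same half during these rounds --- to give $x_1^1,x_1^2$ to agent $1$ and $x_2^1,x_2^2$ to agent $2$, and the argument then recurses on $H\setminus\{x_1,x_2\}$ exactly as above.

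The step I expect to be the main obstacle is making the structural claim fully rigorous: that two-agent $\ps$ stays synchronized and ``resets'' to an honest sub-instance after each round --- i.e.\ that no house is left partially eaten at the end of a round and no third house starts being eaten in the middle of a round --- since this is precisely what licenses the recursion. Once it is in place, checking that the alternation picks are forced in each case is routine. A secondary point worth spelling out is that the order-preserving bisection commutes with restriction to a subset of houses, which is what allows the induction hypothesis to be applied to the sub-instances arising in the two cases.
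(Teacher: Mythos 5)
Your proposal is correct and follows essentially the same route as the paper: an induction that peels off one ``round'' at a time, with the same two-case analysis (shared top choice gives a $1/2$--$1/2$ split matching one alternation round; distinct top choices give whole houses matching two alternation rounds). The synchronization/reset invariant you flag as the main obstacle is exactly the implicit content of the paper's notion of a ``step,'' and for two agents it is immediate since in either case both agents finish their current house at the same instant, so your write-up is if anything slightly more explicit than the published proof.
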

		\begin{proof}
		The proof is by induction on the number of steps of the PS rule. A step in the PS rule starts when agent $1$ starts eating a house and finishes when agent 1 finishes eating that house.
		For the base case, at time point $0$, both the PS algorithm and the alternation policy have not allocated a house to any agent.

		Suppose the statement holds for $i-1$ steps of the PS rule, where $i\ge 1$.
		If both agents have the same most preferred house $h_k$ among the remaining houses,
		then each of them gets half of this house in the PS rule.
		Consider the next round of the alternation policy:
		agent $1$ gets a half of $h_k$, $h_k^1$ and
		agent $2$ gets the other half of $h_k$, $h_k^2$.
		Hence, the allocation is the same.

		If the most preferred houses of the two agents are different, say the most preferred house among the remaining houses of agent 1 is $h_j$, and the most preferred of agent 2 is $h_k$,
		then agent 1 completely receives house $h_j$ and agent 2 completely receives house $h_k$ in step $i$ of the PS rule.
		In the alternation policy, agent 1 gets $h_j^1, h_j^2$
		and agent $2$ gets $h_k^1, h_k^2$ in the next two rounds.
		Hence, the allocation is the same.
	   \end{proof}
	%

		\begin{example}\label{exm:eqalt2ps}
		Consider two agents with preferences $\pref_1 = h_5, h_6, h_1, h_3, h_4, h_2$
		and  $\pref_2 = h_1, h_2, h_3, h_4, h_5, h_6$.
		The allocation obtained by the $\ps$ algorithm over $\pref_1$ and $\pref_2$ is
	$
	PS(\pref_1,\pref_2)=\begin{pmatrix}
		   0 & 0 & 1/2 & 1/2 & 1 & 1\\
		   1 & 1 & 1/2 & 1/2 & 0& 0
	 \end{pmatrix}.
	$
	The identical allocation given by 
	the alternation policy with 
	$\prefcloned_1$ and $\prefcloned_2$ is
		$
	\small
		\begin{array}{|cccccc|}
		\hline
		\multicolumn{6}{|c|}{\mathrm{Rounds}} \\
	1 &   2 &  3 &  4  & 5   & 6 \\
		\hline
		\hline
		  h_5^1   &  h_5^2 &  h_6^1 & h_6^2  &   h_3^1&  h_4^1\\
		   h_1^1 &  h_1^2 &   h_2^1  &   h_2^2   & h_3^2   & h_4^2  \\
		\hline
		\end{array}.
		 $
		\end{example}

	\subsection{Computing an EU best response}
		\label{s:eu_br_alt_ps}

	In this section we present an algorithm to compute an expected utility best response for the $\ps$ mechanism.
	First, we recap our settings. We are given two agents $1$ and $2$ with profiles
	$\pref_1$ and $\pref_2$, respectively, over houses in $H$.
	We assume that  agent 1 plays strategically and agent 2 plays truthfully.
	The goal is to find an expected utility best response for agent 1 for the $\ps$ rule.
	To do so, we reuse an EU best response
	for the alternation policy over split houses,
	$\prefcloned_1$ and $\prefcloned_2$.
	Our algorithm is based on the following lemma.
	Let $\prefclonedbest_1$ be an expected utility best response for agent 1
	to $\prefcloned_2$ for the alternation policy.

		\begin{lemma}\label{l:altbr2psbr}
	Suppose $\prefclonedbest_1$ has the consecutivity property.
		Then, $\prefbest_1$, obtained by the order-preserving join
		from $\prefclonedbest_1$ is an EU best response to $\pref_2$.
		\end{lemma}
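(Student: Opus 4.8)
The plan is to transfer optimality between the two mechanisms through \lemref{l:ps2alt}: optimality of $\prefclonedbest_1$ for the alternation policy will be shown to force optimality of $\prefbest_1$ for PS. As in Section~\ref{s:connection_alt_ps} we may assume $\pref_2 : h_1,\dots,h_m$, and we extend agent $1$'s utilities to the half-houses additively and evenly, $u_1(h_i^1)=u_1(h_i^2)=u_1(h_i)/2$, which is the convention under which $\prefclonedbest_1$ is an EU best response. With this extension, \lemref{l:ps2alt} becomes \emph{utility}-preserving: for any report $\pref_1'$ of agent $1$, the lemma identifies $PS(\pref_1',\pref_2)$ with the alternation allocation on the order-preserving bisections of $\pref_1'$ and $\pref_2$, the fraction of each $h_i$ agent $1$ gets under PS being exactly half the number of halves of $h_i$ he gets under the alternation policy, so agent $1$'s expected utility under $PS(\pref_1',\pref_2)$ equals his (extended) expected utility under the alternation policy on those bisections. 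The second ingredient is the elementary observation that the order-preserving join and the order-preserving bisection are mutually inverse on rankings with the consecutivity property; since $\prefclonedbest_1$ has this property by hypothesis, it is precisely the order-preserving bisection of $\prefbest_1$.

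I would then argue as follows. On one hand, combining the two observations, agent $1$'s expected utility under $PS(\prefbest_1,\pref_2)$ equals his expected utility under the alternation policy on $(\prefclonedbest_1,\prefcloned_2)$. On the other hand, let $\pref_1'$ be an arbitrary preference agent $1$ could report under PS (without loss of generality a ranking of all houses), and let $(\pref_1')^{\textsc{cloned}}$ be its order-preserving bisection; this is a legal report for the alternation policy and automatically has the consecutivity property. Applying \lemref{l:ps2alt} to $\pref_1'$ (of which $(\pref_1')^{\textsc{cloned}}$ is literally the bisection), agent $1$'s expected utility under $PS(\pref_1',\pref_2)$ equals his expected utility under the alternation policy on $((\pref_1')^{\textsc{cloned}},\prefcloned_2)$, and since $\prefclonedbest_1$ is an EU best response to $\prefcloned_2$ for the alternation policy, this is at most his expected utility under the alternation policy on $(\prefclonedbest_1,\prefcloned_2)$. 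Chaining the equalities with this inequality shows that $PS(\pref_1',\pref_2)$ never gives agent $1$ strictly more expected utility than $PS(\prefbest_1,\pref_2)$, i.e., $\prefbest_1$ is an EU best response to $\pref_2$.

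I expect the one point requiring genuine care to be the scope of the best-response quantifier: the conclusion must hold against \emph{every} report agent $1$ could submit under PS, whereas $\prefclonedbest_1$ is a priori only known to beat half-house reports under the alternation policy. This gap is closed by the order-preserving bisection, which embeds the set of all whole-house preferences into the set of half-house preferences, and by \lemref{l:ps2alt}, which makes that embedding outcome- and hence utility-preserving. The remaining steps --- additivity of the extended utilities, the mutual-inverse relationship of join and bisection on consecutive rankings, and the standard reduction of partial-list reports to full rankings --- are routine from the definitions. Note that the consecutivity hypothesis is needed only to make $\prefbest_1 = $ (order-preserving join of $\prefclonedbest_1$) well defined and to identify its PS outcome via \lemref{l:ps2alt}; the optimality comparison against deviations needs no such hypothesis, since the bisection of any deviation is consecutive by construction.
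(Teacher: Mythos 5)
Your proof is correct and takes essentially the same route as the paper's: transfer optimality through Lemma~\ref{l:ps2alt} using the fact that the order-preserving bisection embeds every whole-house report into a consecutive half-house report, with the consecutivity hypothesis guaranteeing that $\prefclonedbest_1$ is itself the bisection of $\prefbest_1$. The paper phrases this as a one-line contradiction, whereas you spell out the utility-preservation and the quantifier over deviations explicitly, but the underlying argument is identical.
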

		\begin{proof}
		The proof is by contradiction.
		Suppose, ${\prefbest_1}'$ is EU preferred to $\prefbest_1$. We transform
		${\prefbest_{1}}'$ into ${\prefclonedbest_{1}}'$ using the order-preserving bisection.
		By Lemma~\ref{l:ps2alt}, if we run the alternation policy over
		${\prefclonedbest_{1}}'$ and $\prefcloned_2$, the agents get the same allocation
		as by running $\ps$. Hence, $\prefclonedbest_1$ is not the best response to
		$\prefcloned_2$. This leads to a contradiction.
		\end{proof}

	    \noindent
		Lemma~\ref{l:altbr2psbr} suggests a straightforward way
		to  compute agent 1's  best response $\prefbest_1$ for the $\ps$
		algorithm.
	We run Kohler and Chandrasekaran's algorithm that finds a best response
	for the alternation policy given agents' preferences $\prefcloned_1$ and $\prefcloned_2$.
	If $\prefclonedbest_1$ has the consecutivity property then we can use
	the order-preserving join  to obtain $\prefbest_1$ which is the expected utility   best response
	to $\pref_2$ in $\ps$ by Lemma~\ref{l:altbr2psbr}.
	The main problem with this approach is that the algorithm of \citet{KoCh71a} may return  $\prefclonedbest_1$ that does not have the  consecutivity property (we provide
	such an example in the full
	report).
	However, we show in Algorithm~\ref{algo:2agent-EU-BR} that we can always find another expected utility  best response ${\prefclonedbest_1}'$ that has the consecutivity property. We need to delay the allocation of some half-houses that agent 1 gets. 
		\begin{algorithm}[tb]
			  \caption{EU best response for $2$ agents}
			  \label{algo:2agent-EU-BR}	\renewcommand{\algorithmicrequire}{\wordbox[l]{\textbf{Input}:}{\textbf{Output}:}}
			 \renewcommand{\algorithmicensure}{\wordbox[l]{\textbf{Output}:}{\textbf{Output}:}}
				\footnotesize
			\begin{algorithmic}
				\REQUIRE $(\{1,2\},H,(\succ_1,\succ_2))$ where $\succ_2: h_1,\succ_2 \ldots,\succ_2 h_m$
				\ENSURE Best response $\prefbest_1$ of agent $1$
			\end{algorithmic}
			\algsetup{linenodelimiter=\,}
			  \begin{algorithmic}[1]
		\STATE Construct order-preserving bisection of $\succ_1$ and $\succ_2$: $\prefcloned_1$ and $\prefcloned_2$.
		\STATE Run Kohler and Chandrasekaran's algorithm for two agents with preferences $\prefcloned_1$ and $\prefcloned_2$ and use the alternation policy. We obtain $\prefclonedbest_1$.
		\IF{$\prefclonedbest_1$ does not satisfy consecutivity} \label{alg:delaybegin}
				\STATE Let $W$ be a set of half-houses such that agent $1$ gets a half-house $h^1$ but does not get $h^2$.
	            \FOR {all $h^1 \in W$} \label{alg:delaybegin_loop}
	            \STATE Suppose agent 2 gets  $h^2$ in the $i$th round.
	            \STATE Move $h^1$ at the $i$th position in $\prefclonedbest_1$.
	\ENDFOR \label{alg:delayend_loop}
	            \FOR {all $h^1 \in W$} \label{alg:conend_loop}
	            \STATE Rank $h^2$ right after $h^1$ in $\prefclonedbest_1$.
	\ENDFOR\label{alg:conbegin_loop}

	\ENDIF \label{alg:delayend}
	\STATE Use  order-preserving join to obtain $\prefbest_1$ from $\prefclonedbest_1$
	\RETURN $\prefbest_1$
					 \end{algorithmic}
			\end{algorithm}
	%
	The modifications of the best response $\prefclonedbest_1$  in lines~\ref{alg:delaybegin}--\ref{alg:delayend}
	produce another\emph{ best response } that has the consecutivity property for agent 1.
	A detailed description of the algorithm from~\citep{KoCh71a}  and a proof of correctness of Algorithm~\ref{algo:2agent-EU-BR}  can be found in
	the full report.

	\begin{remark}\label{remark:dl-eu}
	The EU best response algorithm is independent of particular utilities and holds for any utilities consistent with the ordinal preferences. Since PS for two agents only involves fractions $0,\frac{1}{2}$, and $1$,  a DL best response is also equivalent to an EU best response.
	Hence we have proved that the DL best response algorithm in Section~\ref{sec:dl} is also an EU best response algorithm for the case of two agents.
	\end{remark}

	\section{Nash dynamics and equilibrium}



	In contrast to the previous sections where a single agent is strategic, we consider the setting when all the agents are strategic. We first prove that for expected utility best responses,
	the preference profile of the agents can cycle when agents have Borda utilities.  This means
	that it is possible that self interested agents, acting unilaterally,
	may never stop reacting.

	\begin{theorem}\label{th:cycle}
	With 3 agents and 6 items where agents have Borda utilities, 
	a series of expected utility best responses by the agents can lead to a cycle
	in the profile.
	\end{theorem}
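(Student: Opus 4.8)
The statement is existential, so the plan is to exhibit a concrete instance together with an explicit infinite sequence of best-response deviations that is eventually periodic. Fix $N=\{1,2,3\}$ and $H=\{h_1,\dots,h_6\}$, and give every agent \emph{Borda} utilities: the $k$-th most preferred house is worth $6-k$, i.e. the utilities down a preference list are $5,4,3,2,1,0$. Fix a round-robin activation order $1,2,3,1,2,3,\dots$. Starting from a carefully chosen profile $\pref^{(0)}=(\pref^{(0)}_1,\pref^{(0)}_2,\pref^{(0)}_3)$, at step $t$ the currently active agent $i$ replaces $\pref^{(t-1)}_i$ by some EU best response $\pref^{(t)}_i$ to the fixed reports $\pref^{(t-1)}_{-i}$, and the claim is that the instance can be chosen so that $\pref^{(t+T)}=\pref^{(t)}$ for all sufficiently large $t$, for some period $T$ that is a multiple of $3$.

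Computing the best responses needs none of the machinery of the previous sections: with $m=6$ there are only $6!=720$ possible reports for the active agent, so for each of them one runs the PS rule (which, with $m=6$ and $n=3$, finishes at time $m/n=2$ and is computable in $O(mn)$ time), reads off the active agent's row $p(i)$ of the resulting fractional assignment, and evaluates $\sum_{h\in H}u_i(h)\,p(i)(h)$; a maximizer is an EU best response. Thus, once the instance and the cycle are written down, the whole verification is a finite, purely mechanical computation.

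The proof then consists of displaying the cycle explicitly: a table listing $\pref^{(0)},\pref^{(1)},\dots,\pref^{(T)}=\pref^{(0)}$ in which consecutive profiles differ in exactly one coordinate (the one belonging to the agent active at that step), together with, for each step, the PS allocation and the active agent's expected utility before and after the deviation, verifying that every such change is a \emph{strict} improvement for the agent who makes it. Strictness is what makes the cycle meaningful: it guarantees that none of the profiles in the list is a pure Nash equilibrium, since the agent who moves next always has a strictly profitable deviation, so the best-response dynamics genuinely never settles and the profile cycles, which is exactly the assertion of the theorem.

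The substantive difficulty, and the main obstacle, is \emph{finding} such an instance rather than checking it. A priori one expects best-response dynamics under PS to be drawn toward a pure Nash equilibrium (which, as announced, always exists under PS), so the example has to be engineered, by a search over small profiles and over the choice of which best response each agent takes when the best response is not unique, so that the sequence of improving deviations closes into a loop instead of terminating. Pinning down a tie-breaking rule, or choosing the instance so that the improving move is forced at each relevant step, is needed to make the dynamics well defined; after that, correctness of the displayed cycle is just repeated application of the PS rule.
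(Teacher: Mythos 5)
Your plan is exactly the paper's: the authors prove the theorem by exhibiting a concrete $3\times 6$ instance with Borda utilities found by computer search, and then listing, step by step, the reported profile, the PS allocation, and the expected utilities, so that each deviation is a verifiable strict improvement and the profile at step 15 equals the profile at step 11 (a cycle of length four entered after a transient of eleven moves). Your description of how one would verify such a witness --- brute force over the $6!$ reports of the active agent, running the $O(mn)$ PS rule for each --- is correct, and your remark that strictness of each improvement is what makes the cycle meaningful is also right.

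The gap is that you never produce the witness. For an existential statement of this kind the instance and the explicit sequence of deviations \emph{are} the proof; a description of the shape a proof would take, however accurate, does not establish that such an instance exists. You acknowledge this yourself (``the substantive difficulty\dots is finding such an instance''), but acknowledging the obstacle is not the same as overcoming it. A secondary point: you fix a round-robin activation order $1,2,3,1,2,3,\dots$, which is a stronger requirement than the theorem asks for and than the paper delivers --- in the paper's sequence the agents move in the order $3,1,3,1,3,2,3,1,2,3,2,3,2,3,2$, so the deviating agent is chosen freely at each step. Committing to round-robin narrows the search space of candidate cycles and might make the witness harder (or impossible) to find; since the theorem only claims that \emph{some} series of best responses cycles, you should leave the activation order free, as the paper does.
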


	Using a computer program we have found a sequence of best response that cycle.

Checking the existence of a preference profile that is in Nash equilibrium appears to be a challenging problem. The naive way of checking existence of Nash equilibrium requires going through $O({m!}^n)$ profiles, which is super-polynomial even when $n=O(1)$ or $m=O(1)$. 
	Although computing a Nash equilibrium is a challenging problem, we show that at least one (pure) Nash equilibrium is guaranteed to exist for any number of houses, any number of agents, and any preference relation over fractional allocations.\footnote{We already know from Nash's original result that a \emph{mixed} Nash equilibrium exists for any game.} The proof relies on showing that the PS rule can be modelled as a perfect information extensive form game. 	

	\begin{theorem}
	A pure Nash equilibrium is guaranteed to exist under the PS rule for any number of agents and houses, and for any relation between allocations.
	\end{theorem}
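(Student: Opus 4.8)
The plan is to exhibit the PS mechanism as a finite extensive-form game of perfect information and then apply backward induction (Zermelo/Kuhn), which always outputs a pure subgame-perfect — hence Nash — equilibrium. Concretely, I would take the eating procedure itself as the game tree: a node records the current time, the residual weight of every house, and which house each agent is presently consuming. Each time a house becomes fully consumed (and at the initialisation at time $0$), every agent who now has no house to eat must choose a not-yet-fully-consumed house; to avoid simultaneous moves these choices are made one at a time in increasing order of agent index — which is immaterial for the resulting PS outcome, since starting to eat a house consumes nothing instantaneously — so every decision node is a singleton. Because at most $m$ houses are ever consumed and each choice is among at most $m$ houses, the tree is finite. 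The payoff of agent $i$ at a leaf is the induced allocation $p(i)$, ranked by $i$'s true relation $\succ_i$. Backward induction goes through because at each decision node the mover only needs an action whose outcome is $\succ_i$-\emph{maximal} among the finitely many leaf-allocations its actions can reach, and such a maximal element exists whenever $\succ_i$ is acyclic (in particular for SD, DL and expected utility); a maximal element is exactly a best response. This is precisely why the statement holds for any (reasonable) relation between allocations and needs no utility representation.

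Backward induction yields a pure subgame-perfect equilibrium $s^*$ with play path $\pi^*$. From $\pi^*$ I read off, for each agent $i$, the sequence of houses $i$ eats along $\pi^*$, and let $\succ_i^*$ be that sequence extended by the remaining houses in arbitrary order. Two facts are then routine. First, running the PS rule on $(\succ_1^*,\dots,\succ_n^*)$ reproduces $\pi^*$: by induction over events, each house newly chosen along $\pi^*$ sits immediately after the just-finished house in $\succ_i^*$, while all houses earlier in $\succ_i^*$ are already fully consumed, so the $\succ_i^*$-top available house is exactly the one chosen in $\pi^*$. Second, a preference list induces a strategy of the eating game (``at each of my nodes, eat my top available house''), and running the eating game with the list strategies of $(\succ_1',\dots,\succ_n')$ produces exactly $PS(\succ_1',\dots,\succ_n')$.

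The delicate step — which I expect to be the main obstacle — is passing from ``$s^*$ is a Nash equilibrium of the eating game'' to ``$(\succ_1^*,\dots,\succ_n^*)$ is a Nash equilibrium of the preference-reporting game $\Gamma$, in which agent $i$ submits a preference and is paid $PS$ evaluated via $\succ_i$''. The issue is that a strategy of the eating game may depend on residual weights and timing, not only on the set of remaining houses, so the equilibrium strategies $s^*_j$ need not be list strategies \emph{off} $\pi^*$; consequently a unilateral deviation by agent $i$ in $\Gamma$ (with the others frozen at $\succ^*_{-i}$) is not obviously mirrored by a deviation in the eating game against $s^*_{-i}$, where the others would respond by whatever $s^*_j$ prescribes rather than by following $\succ^*_j$. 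I would resolve this in one of two ways. Either carry out the backward induction so that the selected equilibrium uses only list strategies — at each node of agent $j$, break ties among the $\succ_j$-maximal actions in a way that depends only on the current set of remaining houses and is ``revealed-preference consistent'', so that $j$'s choices across all nodes are induced by a single ordering of $H$ — in which case $s^*$ is already a pure Nash equilibrium of the eating game in list strategies, hence a fortiori of $\Gamma$ (which merely restricts each agent to list strategies and so can only remove potential deviations, never add them). Or, more directly, one would show that once agents $j\neq i$ follow $\succ^*_j$, the play of the eating game stays inside a subtree on which $s^*_j$ coincides with the list strategy for $\succ^*_j$; granting this, agent $i$'s best reporting deviation in $\Gamma$ is replicated by a strategy deviation against $s^*_{-i}$ and is therefore unprofitable by optimality of $s^*$. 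Establishing one of these consistency claims — in essence, that list strategies are without loss of generality in the eating game — is the heart of the matter; everything else is bookkeeping.
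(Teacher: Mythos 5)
Your proposal is essentially the paper's argument: model the eating procedure as a finite perfect-information extensive-form game, extract a pure subgame-perfect equilibrium by backward induction, and read each agent's preference list off the equilibrium path, padding it with the uneaten houses. The only structural difference is the discretization: the paper cuts time into sub-stages of length $g$ equal to the GCD of the step lengths and lets the agents take turns eating $g$ units in index order, whereas you place decision nodes at house-exhaustion events and sequentialize the simultaneous choices there. Your version is if anything cleaner, since the paper's $g$ is computed from the time steps of a single run of PS while the game tree has to accommodate every profile.

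The ``delicate step'' you flag is genuine, but be aware that the paper does not resolve it either: its proof is explicitly labelled a sketch, and it passes from ``subgame-perfect equilibrium of the eating game'' to ``the constructed preference profile is a Nash equilibrium'' in one sentence, without noting that the backward-induction strategies $s^*_{-i}$ need not coincide with the list strategies for $\succ^*_{-i}$ off the equilibrium path, so that agent $i$'s reporting deviation is not literally a strategy deviation against $s^*_{-i}$. Your two proposed repairs --- breaking ties in the backward induction in a revealed-preference-consistent way so that the selected equilibrium consists of list strategies, or showing directly that play against the lists $\succ^*_{-i}$ stays where $s^*_{-i}$ agrees with them --- are exactly what would be needed to make the sketch rigorous, and neither appears in the paper. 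So: same approach, and where your write-up is explicit it is more careful than the paper's own proof; the one step you leave open is also open in the paper.
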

	\begin{proof}[Sketch]

		Let $t^0,\dots, t^k$ be the $k+1$ different time steps in the PS algorithm. Let $g=\text{GCD}(\{t^{i+1}-t^i\midd i\in \{0,\ldots, k-1\}\})$ where GCD denotes the greatest common divisor. The time interval length $g$ is small enough such that the PS rule can be considered to have $m/g$ stages of duration $g$. Each stage can be viewed as having $n$ sub-stages so that in each stage, agent $i$ eats $g$ units of a house in sub-stage $i$ of a stage.
	In each sub-stage only one agent eats $g$ units of the most favoured house that is available. Hence we now view PS as consisting of a total of $mn/g$ sub-stages and the agents keep coming in order $1,2,\ldots, n$ to eat $g$ units of the most preferred house that is still available. If an agent ate $g$ units of a house in a previous sub-stage then it will eat $g$ units of the same house in the next sub-stage as long as the house has not been fully eaten.
	Consider a perfect information extensive form game tree.
	For a fixed reported preference profile, the PS rule unravels accordingly along a path starting at the root and ending at a leaf. Each level of the tree represents a sub-stage in which a certain agent has his turn to eat $g$ units of his most preferred available house. Note that there is a one-to-one correspondence between the paths in the tree and the ways the PS algorithm can be implemented, depending on the reported preference. 

	A subgame perfect Nash equilibrium is guaranteed to exist for such a game via backward induction:
	starting from the leaves and moving towards the root of the tree, the agent at the specific node chooses an action that maximizes his utility given the actions determined for the children of the node. The subgame perfect Nash equilibrium identifies at least one such path from a leaf to the root of the game. The path can be used to read out the most preferred house of each agent at each point. The information provided is sufficient to construct a preference profile that is in Nash equilibrium. Those houses that an agent did not eat at all can conveniently be placed at the end of the preference list. Such a preference profile is in Nash equilibrium. Hence, a pure Nash equilibrium exists under the PS rule.
	\end{proof}

	 We also know that DL-Nash equilibrium is an SD-Nash equilibrium because if there is an SD deviation, then it is also a DL deviation.
	Our argument for the existence of a Nash equilibrium is constructive. However, naively constructing the extensive form game and then computing a sub-game perfect Nash equilibrium requires exponential space and time. It is an open question whether a sub-game perfect Nash equilibrium or for that matter any Nash equilibrium preference profile can be computed in polynomial time. We can prove the following theorem for the ``threat profile'' whose construction is shown in Algorithm~\ref{algo:2agent-DL-Nash}.

	\begin{theorem}\label{th:threat}
		Under PS and for two agents, there exists a preference profile that is in DL-Nash equilibrium and results in the same assignment as the assignment based on the truthful preferences. Moreover, it can be computed in linear time.
	\end{theorem}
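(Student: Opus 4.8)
Run PS on the truthful profile $(\succ_1,\succ_2)$ once, record the outcome $p$ together with the ``interval structure'' of the run, and then design a reported profile $(\succ_1',\succ_2')$ that replays this exact run but in which each agent keeps the other agent's houses sitting ``just behind'' his own as threats. The starting observation is that with only two agents the run of PS is \emph{synchronized}: the time line $[0,m/2]$ splits into maximal intervals $[t_0,t_1],\dots,[t_{K-1},t_K]$ ($t_0=0$, $t_K=m/2$) and at every $t_\ell$ \emph{both} agents finish their current house and move on simultaneously, because both agents started their current houses at $t_\ell$, a house eaten by both is finished after $1/2$ and a house eaten by one alone after $1$. So each interval is of one of two types: (i) \emph{shared}, both agents eat one house $h$, $t_{\ell+1}-t_\ell=1/2$; or (ii) \emph{split}, agent $1$ eats some $a$ alone and agent $2$ eats some $b\neq a$ alone, $t_{\ell+1}-t_\ell=1$.

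\textbf{The threat profile and outcome preservation.} For agent $i$, build $\succ_i'$ by walking the intervals in chronological order and appending, for a shared interval, the shared house, and for a split interval, agent $i$'s own house followed by the other agent's house (any house not consumed at all — none exist for two agents — is appended at the end). Thus the two reported lists agree on where the shared houses sit and list the two houses of each split interval in opposite orders. \textbf{Step 1:} one shows by induction on the intervals that $PS(\succ_1',\succ_2')$ re-creates the truthful run, hence equals $p$. The only danger is that the other agent's house appended as a threat in a split interval $[t_\ell,t_\ell+1]$ is still available when agent $i$ reaches it; but by synchronization agent $i$ eats his own house of that interval during exactly $[t_\ell,t_\ell+1]$ and agent $j$ eats his house during exactly $[t_\ell,t_\ell+1]$, so agent $i$ reaches the threat house precisely at $t_\ell+1$, when it has just been fully consumed, and skips it. No threat house is ever eaten on the equilibrium path, so $PS(\succ_1',\succ_2')=p$.

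\textbf{The Nash property.} \textbf{Step 2:} Fix agent $j$ at $\succ_j'$ and suppose agent $i$ deviates to $\succ_i''$, obtaining $q$ with $q(i)\succ_i^{DL}p(i)$. Let $h$ be the $\succ_i$-best house with $q(i)(h)\neq p(i)(h)$; then $q(i)(h)>p(i)(h)$ and $q(i)(h')=p(i)(h')$ for every $h'\succ_i h$. As both $p$ and $q$ are two-agent PS outcomes, all entries lie in $\{0,\tfrac12,1\}$, so $p(i)(h)\in\{0,\tfrac12\}$, and we split into two cases. If $p(i)(h)=\tfrac12$, then $h$ is a shared house of the truthful run and occupies the corresponding slot of $\succ_j'$; since agent $j$'s own houses appear in $\succ_j'$ in their truthful chronological order and agent $i$'s deviation agrees with $p$ on every house $i$ prefers to $h$, agent $j$ still reaches $h$ in his list no later than in the truthful run, so he contests $h$ and agent $i$ cannot obtain more than $\tfrac12$ of it — contradiction. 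If $p(i)(h)=0$, then in the truthful run agent $j$ eats $h$ alone in a split interval; for agent $i$ to grab a positive share of $h$ he must fall behind his truthful schedule, at which point one of agent $j$'s \emph{threat} houses in $\succ_j'$ (one of agent $i$'s truthful sole houses) becomes available and gets eaten by agent $j$, so agent $i$ loses a share he had on a house he prefers to $h$, contradicting $q(i)(h')=p(i)(h')$ for $h'\succ_i h$. Hence $(\succ_1',\succ_2')$ is a DL-Nash equilibrium. \textbf{Step 3:} the construction uses one run of PS ($O(m)$ for two agents), recording the interval sequence ($O(m)$), and assembling the two lists by walking the intervals ($O(m)$), so it runs in linear time.

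\textbf{Main obstacle.} The delicate part is Step~2: in the deviated run the event times shift, and for an \emph{arbitrary} reordering $\succ_i''$ one has to quantify exactly how far behind his truthful schedule agent $i$ must fall in order to over-consume $h$, and match this against the position of the relevant house (the shared $h$ itself, or a threat house) in $\succ_j'$, so that agent $j$ provably interferes early enough to kill any DL gain. The two-agent synchronization property keeps this bookkeeping tractable, and the case $p(i)(h)=0$ — showing the threat fires at exactly the right moment — is the most technical point; an alternative route is to carry out the same argument one level down, over the half-houses, via the order-preserving bisection and the alternation policy of Lemma~\ref{l:ps2alt}, where ``behind schedule'' becomes ``has missed a pick''.
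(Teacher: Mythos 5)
Your construction is exactly the paper's ``threat profile'': walking the synchronized intervals of the truthful two-agent PS run and, in a split interval, appending each agent's own house followed by the other agent's house as a threat is precisely what the paper's Algorithm~\ref{algo:2agent-DL-Nash} does by repeatedly popping the tops of the two remaining truthful lists. The Nash argument is also the paper's (a shared house cannot be over-consumed because the opponent contests it simultaneously; a sole house must be eaten on schedule or the threat house listed right behind it in the opponent's report is lost), and your sketch leaves the same quantitative bookkeeping for arbitrary deviations informal as the paper's own inductive proof does.
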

	\begin{proof}
		The proof is by induction over the length of the preference lists constructed.
		The main idea of the proof is that if both agents compete for the same house then they do not have an incentive to delay eating it. If the most preferred houses do not coincide, then both the agents get them with probability one but will not get them completely if they delay eating them.
		
	Let the original preferences of agent 1 and agent 2 be represented by lists $P_1$ and $P_2$. We present an algorithm to compute preferences $Q_1$ and $Q_2$ that are in DL-Nash equilibrium.
	Initialise $Q_1$ and $Q_2$ to empty lists. Now consider the maximal elements  $h$ from $P_1$ and $h'$ from $P_2$.
	Element $h$ is appended to the list $Q_1$ and $h'$ is appended to the list $Q_2$. At the same time  $h$ is deleted from $P_1$ and $h'$ is deleted from $P_2$. Now if $h\neq h'$, then $h'$ is appended to $Q_1$ and $h'$ is appended to $Q_2$. The process is repeated until $Q_1$ and $Q_2$ are complete lists and $P_1$ and $P_2$ are empty lists. The algorithm is described as Algorithm~\ref{algo:2agent-DL-Nash}.

	We now prove that $P_1$ is a DL best response against $P_2$ and $P_2$ is a DL best response against $P_1$. The proof is by induction over the length of the preference lists.
	For the first elements in the preference lists $P_1$ and $P_2$, if the elements coincide, then no agent has an incentive to put the element later in the list since the element is both agents' most preferred house. If the maximal elements do not coincide i.e. $h\neq h'$, then $1$ and $2$ get $h$ and $h'$ respectively with probability one. However they still need to express these houses as their most preferred houses because if they don't, they will not get the house with probability one. The reason is that $h$ is the next most preferred house after $h'$ for agent $2$ and $h'$ is the next most preferred house after $h$ for agent $1$. Agent $1$ has no incentive to change the position of $h'$ since $h'$ is taken by agent $2$ completely before agent $1$ can eat it. Similarly, agent $2$ has no incentive to change the position of $h$ since $h$ is taken by agent $1$ completely before agent $2$ can eat it.
	Now that the positions of $h$ and $h'$ have been completely fixed, we do not need to consider them and we reason in the same manner over the updated lists $P_1$ and $P_2$.\qed
		\end{proof}

	The desirable aspect of the threat profile is that since it results in the same assignment as the assignment based on the truthful preferences, the resultant assignment satisfies all the desirable properties of the PS outcome with respect to the original preferences. Due to Remark~\ref{remark:dl-eu}, we get the following corollary.

	\begin{corollary}
		Under PS and for 2 agents, there exists a preference profile that is Nash equilibrium for any utilities consistent with the ordinal preferences. Moreover it can be computed in linear time.
	\end{corollary}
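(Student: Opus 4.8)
The plan is to obtain the corollary directly from \thmref{th:threat} and \remref{remark:dl-eu}, so that essentially no new machinery is required. First I would recall that \thmref{th:threat} produces, in linear time via \algref{algo:2agent-DL-Nash}, a preference profile $(Q_1,Q_2)$ that is a DL-Nash equilibrium: for each $i\in\{1,2\}$, the report $Q_i$ is a DL best response to $Q_{3-i}$, i.e. no report of agent $i$ yields a PS allocation for agent $i$ that is DL-preferred to the one obtained under $(Q_1,Q_2)$. Moreover this profile induces exactly the truthful PS assignment.

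Second, I would invoke the structural fact behind \remref{remark:dl-eu}: with only two agents, every allocation the PS rule outputs gives each agent a share in $\{0,\tfrac12,1\}$ of each house. Fix an opponent report; the set of PS allocations attainable by agent $i$ is then finite and, crucially, for any two such allocations $p(i)\neq q(i)$ one has $p(i)\succ_i^{DL} q(i)$ if and only if $\sum_h u_i(h)p(i)(h) > \sum_h u_i(h)q(i)(h)$ for every cardinal $u_i$ consistent with $\succ_i$. This is because shifting probability mass between houses shifts it across the $\succ_i$-ordering, the uniform step size $\tfrac12$ makes the lexicographic leading difference determine the sign of the utility difference, and DL is a refinement of SD, so there is no pair of attainable allocations on which DL and some consistent EU ranking disagree. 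Hence a DL best response against a fixed opponent report is an EU best response against that report for \emph{every} consistent utility.

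Applying this to $(Q_1,Q_2)$: since $Q_1$ is a DL best response to $Q_2$ and $Q_2$ is a DL best response to $Q_1$, each $Q_i$ is also an EU best response to $Q_{3-i}$ for whatever utilities consistent with the ordinal preferences the agents hold; therefore $(Q_1,Q_2)$ is a pure Nash equilibrium under PS for any such utilities, and it is computed in linear time by the algorithm of \thmref{th:threat}. The only step deserving a written line rather than a bare citation is the DL$\Leftrightarrow$EU equivalence at the level of best responses, namely that the DL-maximal attainable allocation is simultaneously EU-maximal for every consistent utility; I would spell this out as above and let \thmref{th:threat} and \remref{remark:dl-eu} carry the rest. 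I do not anticipate a genuine obstacle here — the real work was done in proving \thmref{th:threat}.
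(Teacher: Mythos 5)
Your overall derivation is exactly the paper's: the corollary is obtained in one line by combining \thmref{th:threat} (the threat profile is a DL-Nash equilibrium, computable in linear time) with \remref{remark:dl-eu} (for two agents a DL best response and an EU best response coincide, for any utilities consistent with the ordinal preferences). If you simply cite the remark, the argument closes and nothing more is needed.

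The problem is the one step you chose to ``spell out'' instead of citing. Your claimed structural fact --- that for $\{0,\tfrac12,1\}$-valued allocations of equal total mass, $p(i)\succ_i^{DL} q(i)$ holds iff $\sum_h u_i(h)p(i)(h) > \sum_h u_i(h)q(i)(h)$ for \emph{every} consistent $u_i$ --- is false as an arithmetic statement, and your justification (``the uniform step size $\tfrac12$ makes the lexicographic leading difference determine the sign of the utility difference'') does not hold up. Take $h_1\succ_i h_2\succ_i h_3\succ_i h_4\succ_i h_5$, $p(i)=(\tfrac12,0,0,1,1)$ and $q(i)=(0,\tfrac12,\tfrac12,1,\tfrac12)$, both of total mass $\tfrac52$. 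Then $p(i)\succ_i^{DL}q(i)$, but with the consistent utilities $u_i=(10,9,8,1,0)$ one gets expected utilities $6$ versus $9.5$, so $q(i)$ is EU-preferred. A lexicographic lead of $\tfrac12$ on one house can be outweighed by deficits of $\tfrac12$ on several slightly less preferred houses; DL and EU orderings do \emph{not} coincide pairwise on $\{0,\tfrac12,1\}$-allocations. What the paper actually establishes (and what \remref{remark:dl-eu} rests on) is the much weaker statement that the \emph{maxima} coincide, and this is proved via the correspondence with the alternation policy and the Kohler--Chandrasekaran result that the EU-optimal response is the same set of (half-)houses for every consistent utility; combined with the observation that DL is realized by exponential utilities, which are themselves consistent, the DL-optimum and the EU-optimum for any consistent utility must be the same response. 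So either cite the remark as a black box, or reproduce that argument; the pairwise-ordering route you sketched does not work.
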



	\begin{algorithm}[tbh]
		  \caption{Threat profile DL-Nash equilibrium for $2$ agents (which also is an EU Nash equilibrium)}
		  \label{algo:2agent-DL-Nash}
	\small	\renewcommand{\algorithmicrequire}{\wordbox[l]{\textbf{Input}:}{\textbf{Output}:}}
		 \renewcommand{\algorithmicensure}{\wordbox[l]{\textbf{Output}:}{\textbf{Output}:}}
		\begin{algorithmic}
			\REQUIRE $(\{1,2\},H,(\succ_1,\succ_2))$
			\ENSURE The \emph{``threat profile''} $(Q_1,Q_2)$ where $Q_i$ is the preference list of agent $i$ for $i\in \{1,2\}$.
		\end{algorithmic}
				\footnotesize
		\algsetup{linenodelimiter=\,}
		  \begin{algorithmic}[1]
	\STATE Let $P_i$ be the preference list of agent $i\in \{1,2\}$
	\STATE
	Initialise $Q_1$ and $Q_2$ to empty lists.
	\WHILE{$P_1$ and $P_2$ are not empty}
	\STATE Let $h= \text{first}(P_1)$ and $h'=\text{first}(P_2)$
	\STATE Append $h$ to $Q_1$; Append $h'$ to $Q_2$
	\STATE Delete $h$ from $P_1$; Delete $h'$ from $P_2$
	\IF{$h\neq h'$}
	\STATE Append $h'$ to $Q_1$; Append $h$ to $Q_2$;
	\ENDIF
	\ENDWHILE	

	\RETURN $(Q_1,Q_2)$.

		 \end{algorithmic}
		\end{algorithm}

		In this next example, we show how Algorithm~\ref{algo:2agent-DL-Nash} is used to compute a preference profile that is in DL-Nash equilibrium. The example also shows that it can be the case that one preference profile is in DL-Nash equilibrium and the other is not, even if both profiles yield the same outcome.

		\begin{example}[Computing a threat profile]\label{example:manip1}
			\begin{align*}
				\succ_1:\quad& h_1,h_2,h_3,h_4&\succ_2:\quad& h_2,h_3,h_1,h_4
				\end{align*}

		We now use Algorithm~\ref{algo:2agent-DL-Nash} to compute a preference profile $(\succ_1',\succ_2')$ that is in DL-Nash equilibrium:
$\succ_1'=h_1,h_2,h_3,h_4$ and 
		$\succ_2'= h_2,h_1, h_3,h_4$. Note that $PS(\succ_1',\succ_2')=\begin{pmatrix}
		  1 & 0 & 1/2 & 1/2 \\
		  0 & 1 & 1/2 & 1/2
		 \end{pmatrix}.$
			Although $PS(\succ_1,\succ_2)=PS(\succ_1',\succ_2')$, we see that $(\succ_1',\succ_2')$ is in DL-Nash equilibrium but
		$(\succ_1,\succ_2)$ is not!
		
			
		\end{example}


		%

		Next we show how our identified links with sequential allocation allocation of indivisible houses leads us to another Nash equilibrium profile called the \emph{crossout profile}. The algorithm to compute the crossout profile is stated as  Algorithm~\ref{algo:2agent-DL-crossout-Nash}.

			\begin{algorithm}[h!]
				  \caption{Crossover profile DL-Nash equilibrium for $2$ agents (which also is an EU Nash equilibrium)}
				  \label{algo:2agent-DL-crossout-Nash}
				\small
				\renewcommand{\algorithmicrequire}{\wordbox[l]{\textbf{Input}:}{\textbf{Output}:}}
				 \renewcommand{\algorithmicensure}{\wordbox[l]{\textbf{Output}:}{\textbf{Output}:}}
				\begin{algorithmic}
					\REQUIRE $(\{1,2\},H,(\succ_1,\succ_2))$
					\ENSURE The \emph{``crossout profile''} $(Q_1,Q_2)$ where $Q_i$ is the preference list of agent $i$ for $i\in \{1,2\}$.
				\end{algorithmic}
						\footnotesize
				\algsetup{linenodelimiter=\,}
				  \begin{algorithmic}[1]
			\STATE Let $P_i'$ be the order-preserving bisection of preference list of agent $i\in \{1,2\}$
			\STATE
			Initialise $Q_1'$ and $Q_2'$ to empty lists.
			\WHILE{$P_1'$ and $P_2'$ are not empty}
			\STATE Let $h= last(P_1')$; Prepend $h$ to $Q_2'$; Delete $h$ from $P_1'$ and $P_2'$;
						\STATE Let $h= last(P_2')$; Prepend $h$ to $Q_1'$; Delete $h$ from $P_1'$ and $P_2'$;
			\ENDWHILE	
			\STATE extend $Q_1'$ and $Q_2'$ to have the consecutivity property but the same allocation.
			\STATE change $Q_1$ and $Q_2'$ via  order-preserving join of $Q_1'$ and $Q_2'$.
			\RETURN $(Q_1,Q_2)$.
			
				 \end{algorithmic}
				\end{algorithm}

			%
			%

			In Algorithm~\ref{algo:2agent-DL-crossout-Nash}, the Nash equilibrium problem for PS is changed into the same problem for sequential allocation by changing each house into a half house. The idea behind the crossout profile for the sequential allocation setting is that no agent will choose the least preferred object unless it is the only object left. Thus agent $2$ will be forced to get the least preferred object of agent $1$~\citep{LeSt12a,KoCh71a}. In Algorithm~\ref{algo:2agent-DL-crossout-Nash}, we use this idea recursively to build sequences of objects 	$Q_1'$ and $Q_2'$ for each agent that are allocated to them. 
If one agent gets a half house and the other agent gets the other half house, it can be proved that the positions of the half houses in $Q_1'$ and $Q_2'$ are same. This sequence of objects for each agent are then extended to preferences that give the same allocations under sequential allocation and which also satisfy the consecutivity property. The preferences for sequential allocation are then transformed via order-preserving join to obtain the crossover Nash equilibrium profile for the PS rule. By Lemma~\ref{l:altbr2psbr}, the preference profile is in Nash equilibrium. Next we show that the threat profile and crossout profile are different and may also give different assignments.
								\begin{example}[ Crossout profile]\label{example:crossover}
							Consider the following profile.
							\begin{align*}
								\succ_1:\quad& h_1,h_2,h_3,h_4&\succ_2:\quad& h_2,h_3,h_1,h_4
								\end{align*}

						We now use Algorithm~\ref{algo:2agent-DL-crossout-Nash} to compute a preference profile $(\succ_1',\succ_2)'$ that is in DL-Nash equilibrium where 
$\succ_1'= h_2,h_1,h_4,h_3$ and $\succ_2'= h_2,h_3 ,h_4,h_1$. Note that $						PS(\succ_1',\succ_2')=\begin{pmatrix}
													  1 & 1/2 &0 & 1/2 \\
																			  0 & 1/2 & 1 & 1/2
																			 \end{pmatrix}.$ 
The crossout Nash equilibrium profile is different from the threat Nash equilibrium profile for the problem instance. 
						
						

						\end{example}

	The complexity of computing a Nash equilibrium profile for more than two agents still remains open. However we have presented a positive result for two agents --- a case which captures various fair division scenarios.

	\section{Experiments}


	In this section, we examine the \emph{likelihood} that at least one agent would have an incentive to misreport his preferences to get more expected utility. To gain insight into
	this issue we have performed a series of experiments to determine the frequency
	that, for a given number of agents and houses, a profile will have a
	beneficial strategic reporting opportunity for a single agent. \footnote{Independent from our work, \citet{Phil13a} also examined how susceptible PS can be to manipulation. \citet{HKV13a} conducted laboratory experiments which do look at the manipulability of PS mathematically but according to the strategic behaviour of humans.}


	In order to preform this experiment we need to generate preferences and utilities
	for each of the agents.
	We consider two different models to generate profiles.
	(i) In the \emph{Impartial Culture (IC)} model, the assumption is that for each agent and a given number of
	houses, each of the $|H|!$ preference orders over the houses is equally likely ($\frac{1}{|H|!}$). (ii) In the \emph{Uniform Single Peaked (USP)}, the assumption is that
	all single peaked preference profiles are equally likely.  Single peaked preferences
	are a profile restriction introduced by \citet{Blac48a} and well studied in the
	social choice literature.  Informally, in a single peaked profile, given all possible 3-sets of houses, no agent ever ranks some particular house last in all 3 sets that it appears.
	%

	%
	%
	%

	In order to evaluate if an agent has a better response  we need to assign utilities
	to the individual houses for each agent.  While there are a number of ways to model
	utility we have selected the following mild restrictions on utilities in order to gain an understanding of the manipulation opportunities.
	(i) In the \emph{Random} model, we uniformly at random generate a real number between $0$ and $1$ for each
	house that is compatible with the generated preference order.
	We normalize these utilities such that each agent's utility sums to a constant value that is the same
	for all agents.  In our experiments each agent's utility sums to the number of houses in the instance.
	(ii) In the \emph{Borda} model, we assign $|H|-1$ utility to the first house,
	$|H|-2$ to the second house, down to $0$ utility for the least preferred house.
	(iii) In the \emph{Exponential (Exp)} model, we assign utility
	$2^{|H|-1}$ to the first house, $2^{|H|-2}$ to the second house, down to
	$0$ utility for the least preferred house.

	We generated for each pair in $|N| = \{1, \dots, 8\} \times |H| = \{1, \dots, 8\}$ 1,000 profiles according to a utility
	and preference distribution.
	For each of these instances, we searched to see if any agent could get more utility by misreporting his preferences, if so, 
	then we say that profile admitted a manipulation.  Figure \ref{fig:rand-borda} show the
	percentage of instances that were manipulable for each of the domain, utility, number of agent, and number of house combinations (Borda is omitted for space).



	Looking at Figure \ref{fig:rand-borda}, we observe that as the utility and preference models become
	more restrictive, the opportunities for a single agent to manipulate becomes smaller.  The Random-IC experiment
	yields the most frequently manipulable profiles, strictly dominating all the other runs of the experiment for every
	combination except one (Random-USP with 3 houses and 3 agents).  Each experiment with single
	peaked preferences (save one) is dominated by the experiment with the unrestricted preference
	profiles for the same utility model.
	
	\begin{figure}[ht]
	\begin{minipage}[b]{\textwidth}
	\centering
	\includegraphics[width=.3\linewidth]{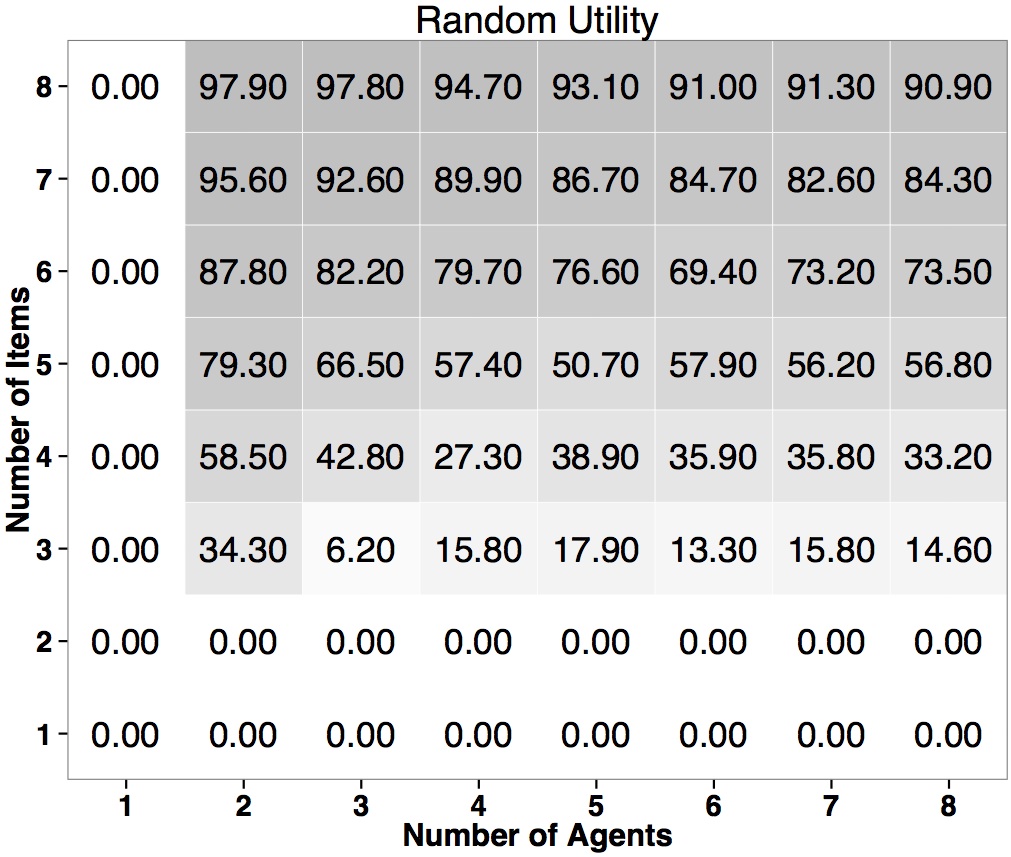}
	\hspace{0.1cm}
	\includegraphics[width=.3\linewidth]{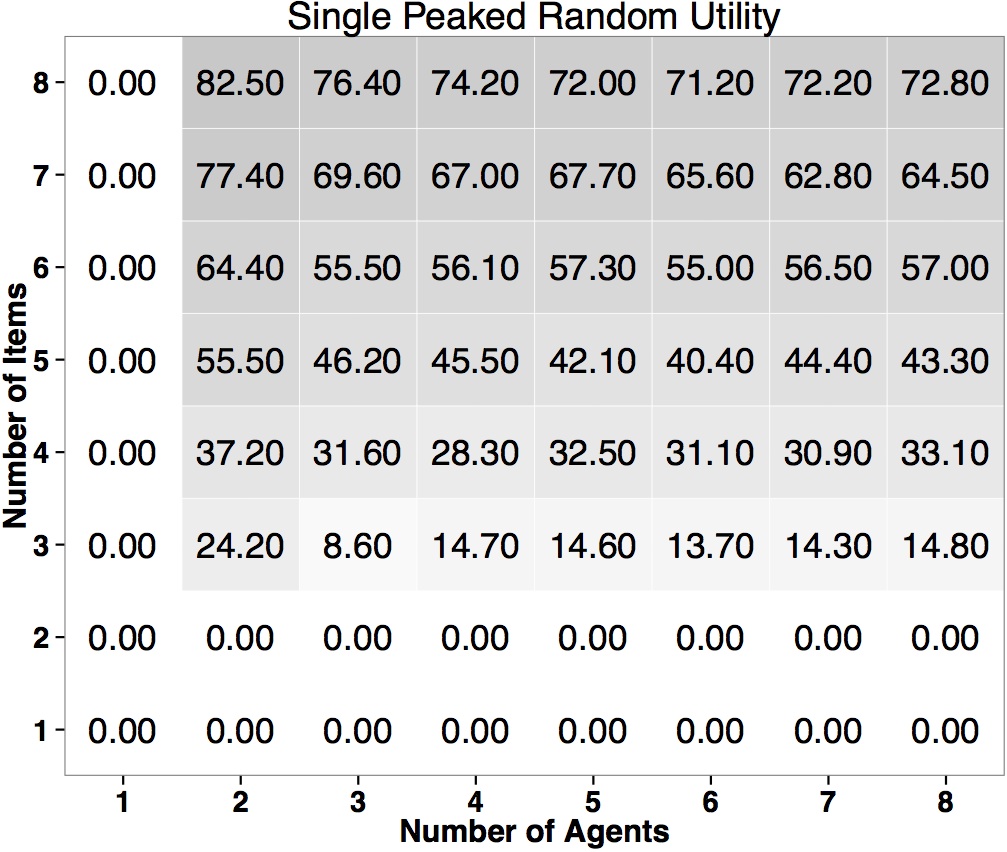}
	\end{minipage}
	\vspace{0.1cm}
	\begin{minipage}[b]{\textwidth}
	\centering
	 \includegraphics[width=.3\linewidth]{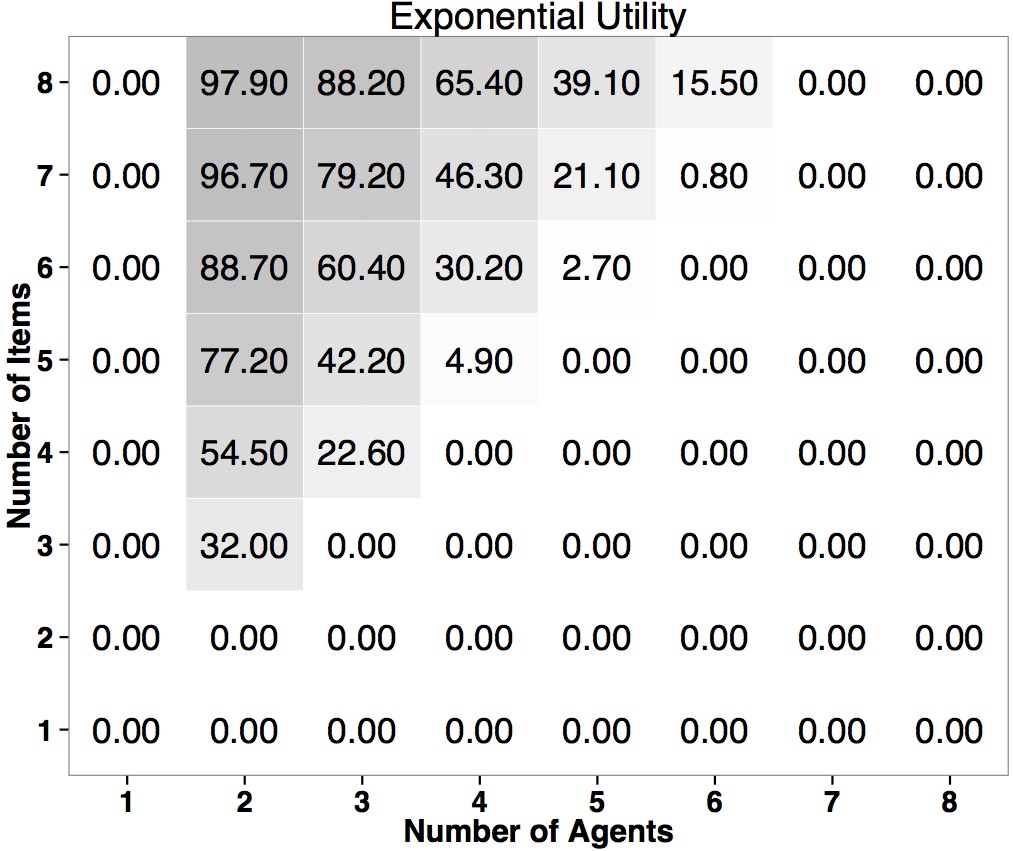}	
	 \hspace{0.1cm}
	\includegraphics[width=.3\linewidth]{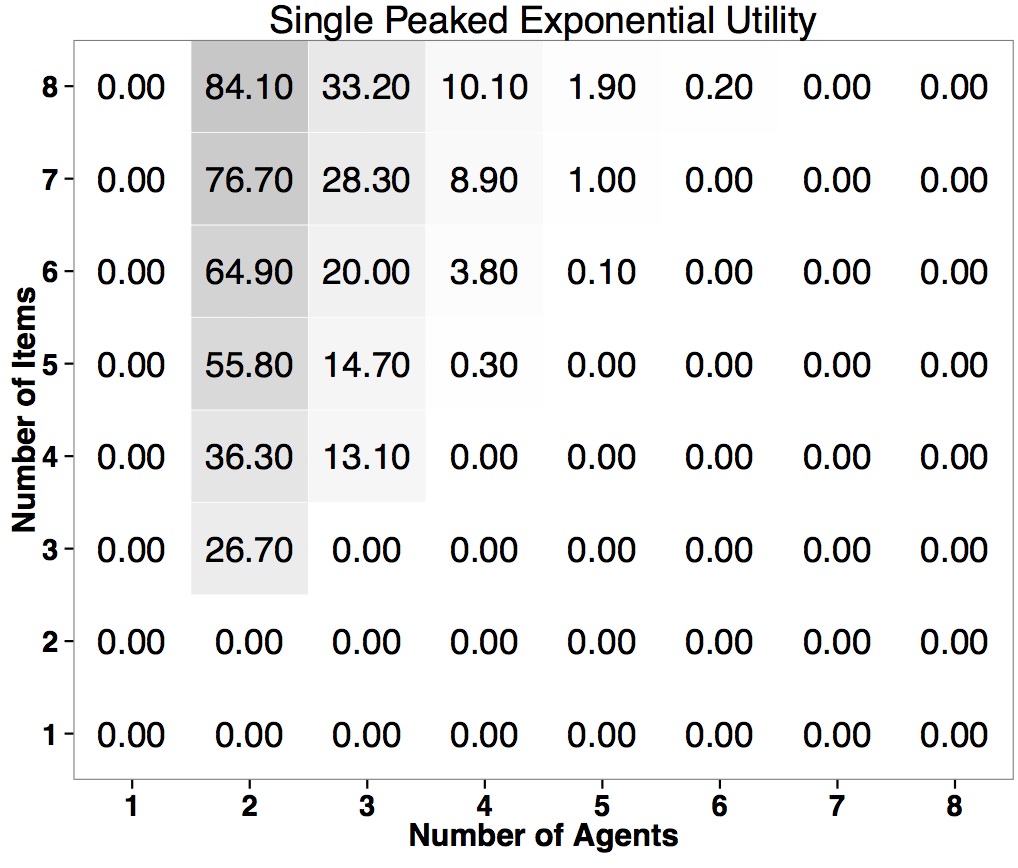}
	\end{minipage}

	 \caption{Heatmap showing the percentage of manipulable instances for Random Utility
	 and the Single Peaked Random Utility models on the left and Exponential Utility
	 and the Single Peaked Exponential Utility models on the right.}
	 \label{fig:rand-borda}
	\end{figure}

	The PS rule is strategyproof with respect to the DL relation in the case where the number of agents and the number of houses
	are equal.  Our experiment with the Exp model (which is similar to the DL relation but not exact) found
	no manipulable instances when the number of agents is less than or equal to the number of houses.
	It is encouraging that the manipulation opportunities for Exp-USP are so low.  In this setting each agent is
	valuing the houses along the same axis of preference and prefers their first choice exponentially more than
	their second choice.
	As the number of
	houses relative to the number of agents grows, the opportunities to manipulate increase, maximizing
	around $99\%$.

	\section{Conclusions}

	We conducted a detailed computational analysis of strategic aspects of the PS rule.
	Our study leads to a number of new research directions. PS is well-defined even for indifferences~\citep{KaSe06a}. It will be interesting to extend our results for strict preferences to the case with ties. Two interesting problems are still open. Firstly, What is the complexity of computing an expected utility best response for more than two agents? The problem is particularly intriguing because even for the related and conceptually simpler setting of discrete allocation, computing an expected utility  best response for more than two agents has remained an open problem~\citep{BoLa11a}.
Another problem is the complexity of computing a Nash equilibrium for more than two agents. 	
It will also be interesting to examine coalitional manipulations and coalitional Nash equilibria.
	Finally, an analysis of Nash dynamics under the PS rule is an intriguing research problem.

	\renewcommand*{\bibfont}{\small}




	\newpage
	\appendix

	\section{Pseudocode of PS}

	We write the formal definition of PS from \citep{Koji09a} as an algorithm. For any $h\in H'\subset H$, let $N(h,H')=\{i\in N\midd a\succ_i b \text{  for every } b\in H'\}$ be the set of agents whose most preferred house in $H'$ is $h$. PS is defined as Algorithm~\ref{algo:subroutine}.

	\begin{algorithm}[htb]
	  \caption{PS}
	  \label{PS}
	\renewcommand{\algorithmicrequire}{\wordbox[l]{\textbf{Input}:}{\textbf{Output}:}}
	 \renewcommand{\algorithmicensure}{\wordbox[l]{\textbf{Output}:}{\textbf{Output}:}}
	\begin{algorithmic}
		\REQUIRE $(N,H,\succ)$
		\ENSURE $p$ the random assignment returned by PS
	\end{algorithmic}
	\algsetup{linenodelimiter=\,}
	  \begin{algorithmic}[1]

	\STATE $s\longleftarrow 0$ ($s$ is the stage of the algorithm)
	\STATE $H^O\longleftarrow H$; $t^0\longleftarrow 0$; $p_{ih}^0\longleftarrow 0$ for all $i\in N$ and $h\in H$.

	\WHILE{$H^s\neq \emptyset$}
	\STATE $t^{s+1}(h)=\sup\{t\in [0,|H|]\midd \sum_{i\in N} p_{ih}^s+|N(h,H^s)|(t-t^s)<1\}$
	\STATE $t^{s+1}\longleftarrow \min_{h\in H^s}t^s(h)$
	\STATE $H^s=H^s\setminus \{h\in H^{s-1}\midd t(h)=t^s\}$
	\FOR{all $i\in N$ and $h\in H$}
	\IF{$ i\in N(h,H^s)$}
	\STATE  $p_{ih}^{s+1} \longleftarrow  p_{ih}^{s}+t^{s+1}-t(s) $
	\ELSE
	\STATE $ p_{ih}^{s+1} \longleftarrow  p_{ih}^{s} $
	\ENDIF
	\ENDFOR

	\STATE $s\longleftarrow s+1$
	\ENDWHILE

	\RETURN $p=p^s$

	 \end{algorithmic}
	\label{algo:subroutine}
	\end{algorithm}

		\section{Expected utility best response for the alternation policy}\label{s:alt_bestresp}
		In this section we recall the best response algorithm
		proposed in \citep{KoCh71a} as we will use it to derive
		the best response algorithm for the $\ps$ algorithm.

		We denote the algorithm from \citep{KoCh71a}  $\BestEUResAlgo$.
		%
		%
		In particular, we describe $\BestEUResAlgo$ for the special case $k_i=1$ and $n_i=2$ so that we
		follow the alternation policy. We also assume that the number of houses is even
		as this is sufficient for our purposes.  These restrictions simplify the
		algorithm.


		Following Kohler and Chandrasekaran~\citep{KoCh71a}, we use a  matrix $V = V_{i,j}$, $i=1,2$, $j=1,\ldots,m$, where $V_{i,j}$
	represents the utility value that the $i$-th
	player will gain if he selects the $h_j$ object.
	In our case, we  assume that $V_{i,j} = u_{i,j}$, $i=1,2$, $j=1,\ldots,m$,
	such that $u_{i,j} \in \mathbb{R}$ and $u_{i,j} > u_{i,j'}$ iff $h_j \pref_i h_{j'}$.
	As $\pref_2$ ranks houses $h_j$, $j=1,\ldots,m$, lexicographically, we have $V_{2,j} \geq V_{2,j+1}$, $j=1,\ldots,m-1$.
	Algorithm~\ref{algo:BestEUResAlgo} shows a pseudocode for the simplified version of  $\BestEUResAlgo$.

	 	\begin{algorithm}[htb]
	 	  \caption{$\BestEUResAlgo$ for sequential allocation for two agents}
	 	  \label{algo:BestEUResAlgo}
	 	\small
	 	\renewcommand{\algorithmicrequire}{\wordbox[l]{\textbf{Input}:}{\textbf{Output}:}}
	 	 \renewcommand{\algorithmicensure}{\wordbox[l]{\textbf{Output}:}{\textbf{Output}:}}
	 	\begin{algorithmic}
	 		\REQUIRE $(V_{i,j}, i=1,2, j=1,\ldots,m)$
	 		\ENSURE the set of houses allocated to agent $1$ as a result of his best response.
	 	\end{algorithmic}
	 	\algsetup{linenodelimiter=\,}
	 	  \begin{algorithmic}[1]
	\small
	\FOR{$k\in [1,m/2]$}
	 \STATE $ I^k \longleftarrow \{h_{2k-1},h_{2k}\}$
	\ENDFOR
	\IF {$\prefinvfn{1}{h_1} < \prefinvfn{1}{h_2}$ }
	\STATE ${J}^1 \longleftarrow \{h_1\}$
	\ELSE
	\STATE ${J}^1 \longleftarrow \{h_2\}$
	\ENDIF
	\FOR{$k\in [2,m/2]$}
	 \STATE ${J}^k \longleftarrow \{h_{2k-1},h_{2k}\}$
	\ENDFOR 	
	 	\STATE $\overline{J}^1\longleftarrow J^1$
	 \FOR{$k\in [2,m/2]$}
	\STATE $\overline{J}^k \longleftarrow \{h_{j}| h_{j} \in \overline{J}^{k-1} \cup J^k; V_{1j} \geq k \mathrm{th\ maximal\ of\ } \{V_{1l}| h_{l}  \in \overline{J}^{k-1} \cup J^k\}\}$
	\ENDFOR
	 \RETURN $\overline{J}^{m/2}$.
	 	 \end{algorithmic}
	 	\end{algorithm}

	%
	%
		We refer to $\overline{J}^k$  as an ordered set formed at the $k$th stage of $\BestEUResAlgo$.
		The ordered set $\overline{J}^{m/2}$ is the optimal set of houses for agent $1$ to choose, and agent $1$ must choose
		them in the lexicographic  order.
		We denote $\BestEURes = \BestEUResAlgo(V)$. Note that the number of houses in $\BestEURes$ is $m/2$.


		\begin{example}~\label{exm:kohler_run}
		Consider two agents with preferences $\pref_1 = h_5, h_6, h_1, h_3, h_4, h_2$
		and  $\pref_2 = h_1, h_2, h_3, h_4, h_5, h_6$.
	First, we form a matrix $V$. We select arbitrary numbers $u_{i,j}$ that satisfy conditions above, e.g.

	\[
	V =\begin{pmatrix}
		  4&1&3&2&6&5\\
		   6&5&4&3&2&1
	 \end{pmatrix}
	\]

		The following table shows an execution of the algorithm
		on this example over profiles $\pref_1$ and $\pref_2$.
	\begin{center}
			\scalebox{0.8}{
		$\begin{array}{|c|c|c|}
		\hline
		 I^1 & I^2& I^3  \\
		\hline
		\hline
		   \{h_1, h_2\} & \{h_3, h_4\} & \{h_5, h_6\}\\
		\hline
		\hline
		 J^1 & J^2& J^3  \\
		\hline
		\hline
		   \{h_1 \} & \{h_3, h_4\} & \{h_5, h_6\}\\
		\hline
		\hline
		 \overline{J}^1 & \overline{J}^2& \overline{J}^3  \\
		\hline
		\hline
		  \{h_1 \}  & \{h_1, h_3 \} &  \{h_1, h_5, h_6 \}  \\
		   \hline
		\end{array}
		 $
		}
	\captionof{table}{An execution of $\BestEUResAlgo$ on Example~\ref{exm:kohler_run}.}
	\end{center}

		$\BestEURes = \overline{J}^{3} =  \{h_1, h_5, h_6 \}$.
		\end{example}

		Given $\BestEURes$ we define a profile that corresponds to the best response
		$\prefbest_1$.
		By $\BestEURes(i)$ we refer to the house at the $i$th position.
	 First, we rank houses in $\BestEURes$ in the same order as they occur in $\BestEURes$, so that
	$\prefbest_1 = (\BestEURes(1),\ldots, \BestEURes(m/2)$.
	Then, after $\BestEURes(m/2)$, we rank houses that agent 2 gets in the same order as agent 2 obtains them. 
		In Example~\ref{exm:kohler_run}, $\prefbest_1 = h_1, h_5 ,h_6, h_2, h_3, h_4$.

	\section{A best response without the  consecutivity property (example)}
		Next we provide an example that shows
	that a best response returned by Algorithm~\ref{algo:BestEUResAlgo} over  $\prefcloned_1$ and $\prefcloned_2$ might not have the  consecutivity property.
		\begin{example}\label{exm:transeubr_alt2ps}
		Consider two agents from Example~\ref{exm:eqalt2ps}.
		We recall that if we split all houses into halves
		then we obtain profiles:
		$\prefcloned_1 = h_5^1,h_5^2,h_6^1,h_6^2,h_1^1,  h_1^2 ,h_3^1,h_3^2,h_4^1,h_4^2,h_2^1,h_2^2$ and
		$\prefcloned_2 = h_1^1  ,  h_1^2 , h_2^1 , h_2^2 , h_3^1 , h_3^2  , h_4^1  , h_4^2  , h_5^1  , h_5^2  , h_6^1 , h_6^2.$

	 A matrix $V$ is the following
	\setcounter{MaxMatrixCols}{20}
	\[
	V =\begin{pmatrix}
		  4&4&1 &1&3 &3&2 &2&6 &6&5 &5\\
		   6&6&5 &5&4 &4&3 &3&2 &2&1 &1
	 \end{pmatrix}
	\]

		Table~\ref{exm:noncons} shows an execution of $\BestEUResAlgo$
		over profiles $\prefcloned_1$ and $\prefcloned_2$.
		$\BestEURes = \overline{J}^{6} =  \{h_1^1, h_3^1, h_5^1, h_5^2, h_6^1, h_6^2\}$.
		We extend $\BestEURes$ with houses that are not allocated to agent $1$ and obtain

		\begin{align*}
		 {\prefclonedbest_1} =   h_1^1 , h_3^1 , h_5^1 , h_5^2 , h_6^1 , h_6^2 , h_1^2 , h_2^1 , h_2^2 ,  h_3^2 , h_4^1 , h_4^2.
		\end{align*}
		\begingroup
	\everymath{\scriptstyle}
	\begin{center}
	\large
			\scalebox{0.7}{	
	$\begin{array}{|c|c|c|c|c|c|}

		\hline
		 I^1 & I^2& I^3 & I^4 & I^5& I^6 \\
		\hline
		\hline
		   \{h_1^1, h_1^2\} & \{h_2^1, h_2^2\} & \{h_3^1, h_3^2\}&  \{h_4^1, h_4^2\} & \{h_5^1, h_5^2\} & \{h_6^1, h_6^2\}\\
		\hline
		\multicolumn{6}{c}{} \\
		\hline
		 J^1 & J^2& J^3  & J^4 & J^5& J^6\\
		\hline
		\hline
		   \{h_1^1 \} & \{h_2^1, h_2^2\} & \{h_3^1, h_3^2\}&  \{h_4^1, h_4^2\} & \{h_5^1, h_5^2\} & \{h_6^1, h_6^2\}\\
		\hline
		\multicolumn{6}{c}{} \\
		\hline
		 \overline{J}^1 & \overline{J}^2& \overline{J}^3  &  \overline{J}^4 & \overline{J}^5& \overline{J}^6  \\
		\hline
		\hline
		  \{h_1^1 \}  & \{h_1^1, h_2^1 \} &  \{h_1^1, h_3^1, h_3^2 \}   &  \{h_1^1, h_3^1, h_3^2, h_4^1 \}  &  \{h_1^1, h_3^1, h_3^2, h_5^1, h_5^2 \}  &  \{h_1^1, h_3^1, h_5^1, h_5^2, h_6^1, h_6^2 \}\\
		   \hline
		\end{array}
		 $
		}
	\captionof{table}{An execution of $\BestEUResAlgo$ over profiles $\prefcloned_1$ and $\prefcloned_2$.\label{exm:noncons}}
	\end{center}
	\endgroup
	Unfortunately,  ${\prefclonedbest_1}$  does not have the consecutivity property and Lemma~\ref{l:altbr2psbr} can not be applied.
	Note that agent 2 gets $ \{h_1^2, h_2^1, h_2^2, h_3^2, h_4^1, h_4^2\}$.
		\end{example}

			In the next section, we show that we can always find another ${\prefclonedbest_1}$ that has the consecutivity property.

		\section{Expected utility best response for the PS mechanism (full proof).}


		In this section, we demonstrate that given $\prefcloned_1$ and $\prefcloned_2$
		we can always find the expected utility best response to $\prefcloned_2$ that has
		the consecutivity property. To do so, we first run $\BestEUResAlgo$
		to obtain $\BestEURes$. Then we demonstrate that it can be
		modified and extended to a profile over $\hclone$ that has the
		consecutivity property.

	 Given  $\BestEURes$, we denote 	the ordered set of houses allocated to agent $j$ $\bestalloc_{j}$, $j=1,2$.
	Note that $\bestalloc_{1} = \BestEURes$.
	Then $\bestalloc_1(i)$ and $\bestalloc_2(i)$
		are houses that are allocated to agent $1$ and agent $2$, respectively,
		in the $i$th round of the alternation policy.
		\begin{example}
		Consider Example~\ref{exm:transeubr_alt2ps}.
	\[ \bestalloc_1= \{h_1^1, h_3^1, h_5^1, h_5^2, h_6^1, h_6^2\}.\] and \[\bestalloc_2= \{h_1^2, h_2^1, h_2^2, h_3^2, h_4^1, h_4^2\}.\]
		\end{example}
	 We say that
	$\bestalloc$ has the  consecutivity property iff for all $h_i^1,h_i^2 \in \bestalloc$, $h_i^1$ and $h_i^2$
	are ordered consecutively.
		We say that a half-house of $h_i$ is allocated to agent $1$ if and only if agent $1$
		gets $h_i^1$ and agent $2$ gets $h_i^2$.
		We say that a full-house $h_i$ is allocated to agent $1$ if and only if agent $1$
		gets $h_i^1$ and $h_i^2$.


	In the proof we often consider an ordered set of houses $\{h_{i_1}^j, \ldots, h_{i_p}^j\}$
	that obeys the following property: $\{h_{i_1}^j \succ_2 \ldots \succ_2 h_{i_p}^j\}$.
	We will say these houses are lexicographically ordered as agent 2 orders his houses
	w.r.t. the lexicographic order by our assumption in Section~\ref{s:connection_alt_ps}.

	First, we give an overview of the construction. Our construction is motivated by an
	observation that if $\bestalloc_1$ and $\bestalloc_2$ have the consecutivity property
	and half houses are obtained by agent 1 and 2 at the same round then it is
	straightforward to extend  $\bestalloc_1$ to ${\prefclonedbest_1}$ over $\hclone$ that has the
	consecutivity property. Consider the following example.

	\begin{example}
	Suppose  $ \bestalloc_{1} = \{h_2^1, h_2^2, h_3^1, h_4^1, h_4^2, h_6^1\}$
	and  $ \bestalloc_{2} = \{h_1^1, h_1^2, h_3^2, h_5^1, h_5^2, h_6^2\}$.
	Note that half-houses are allocated
	in the same rounds in $\bestalloc_1$ and $\bestalloc_2$.
	The $1$st agent expected utility best response profile is
	$ {\prefclonedbest_1} = h_2^1 , h_2^2 , h_3^1  , h_4^1  , h_4^2 , h_6^1  { ,  h_1^1  , h_1^2 , {h_3^2}, h_5^1  ,  h_5^2 , { h_6^2 }  } $.
	Note that $ {\prefclonedbest_1}$ does not have consecutivity property.

	Next we demonstrate how to change $ {\prefclonedbest_1}$ so that it has the consecutivity property and leads to the same allocation.
	For each half house $h_i^1$ allocated to agent $1$ we rank $h_i^2$ right after $h_i^1$. We keep houses that are not allocated
		to agent $1$ in the end of the profile.
	In this example, we rank $h_3^2$ and $h_6^2$ after $h_3^1$ and $h_6^1$, respectively.
	We obtain the following profile:
	 ${\prefclonedbest_1} = h_2^1 , h_2^2 , h_3^1 , \mathbf{h_3^2} , h_4^1  , h_4^2 , h_6^1  , \mathbf{ h_6^2 } { ,  h_1^1  , h_1^2 , h_5^1  ,  h_5^2 } $.
		Note that inserting $h_i^2$  after $h_i^1$  does not change the allocation
		as we know that $h_i^2$ is allocated to agent $2$  at the same round as $h_i^1$
		is allocated to agent $1$. Hence, $h_i^2$ will never be the top element
		for agent $1$ at any round and  ${\prefclonedbest_1}'$ gives the same allocation as
		$\bestalloc_1$.
		\end{example}

	Based on this observation, the goal of the construction is to transform $\bestalloc_{1}$ is such a way that half houses
	of $h_i$ that are allocated to different agents are allocated to them in the same round
	while preserving allocations of both agents.
	To do so, we prove that an allocation of half houses and full houses in an execution of $\BestEUResAlgo$ follows simple patterns.
	The first property concerns full houses :  halves of full houses allocated to an agent are always allocated in consecutive rounds.
	The second key property concerns half houses.  Let $H_{G_j} = \{h_{i_1}^j,\ldots, h_{i_p}^j\}$ be the lexicographically  ordered set
	 of half houses allocated to agent $j$, $j=1,2$.
	  Then allocation of half houses obeys the following order:
	agent 1 gets $h_{i_1}^1$ at round $k_{t_1}$ then, possibly in later round $k_{t_1}'$, agent 2 gets
	$h_{i_1}^2$. Next, agent 1 gets $h_{i_2}^1$ at round $k_{t_2}$ then, possibly in later round $k_{t_2}'$, agent 2 gets
	$h_{i_1}^2$, and so on.  In other words, half houses are allocated  to agents in lexicographic order and each half houses
	is allocated to both agents before the next half houses is allocated.
	Based on these properties, we will prove that we can delay an allocation of $h_{i_h}^1$ to agent 1 till round $k_{t_h}'$ and preserve the allocations.

		We need to prove several useful properties
		of $\BestEURes$.
	%

		The next proposition states that if only half of $h_i$
		is allocated to agent $1(2)$ then this half is $h_i^1(h_i^2)$.
		We use this observation to simplify notations.

		\begin{proposition}\label{prop:halvesorder}
		If $h_i^j$ is allocated to agent $1$
		and  $h_i^{j\%2 +1}$ is allocated to agent $2$
		then $h_i^1$ is allocated to agent $1$ and $h_i^2$ is allocated to agent $2$.
		\end{proposition}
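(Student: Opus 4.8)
The plan is to read the conclusion straight off the recursive description of $\BestEUResAlgo$ (Algorithm~\ref{algo:BestEUResAlgo}) run on $\prefcloned_1$ and $\prefcloned_2$. Concretely, I would prove, by induction on the stage index $k$, the invariant that for every half-house index $i\le k$, if $h_i^2\in\overline{J}^k$ then $h_i^1\in\overline{J}^k$; since $\BestEURes$ is the ordered set $\overline{J}$ produced at the last stage, this invariant yields $h_i^2\in\BestEURes\Rightarrow h_i^1\in\BestEURes$, \ie agent~$1$ is never allocated the second half of a house without also being allocated its first half. Before the induction I would record two facts immediate from the order-preserving bisection and the description of $\BestEUResAlgo$: after re-indexing so that $\prefcloned_2$ lists the half-houses as $h_1^1,h_1^2,h_2^1,h_2^2,\dots$, we have $I^k=\{h_k^1,h_k^2\}$, $J^1=\{h_1^1\}$ and $J^k=I^k$ for $k\ge 2$; and, since $\prefcloned_1$ places $h_i^1$ immediately before $h_i^2$, in the top-$k$ selection forming $\overline{J}^k$ from $\overline{J}^{k-1}\cup J^k$ the half-house $h_i^1$ is never discarded while $h_i^2$ is kept (equivalently $V_{1,h_i^1}\ge V_{1,h_i^2}$, with ties resolved in the direction of $\prefcloned_1$).

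The base case $k=1$ holds since $\overline{J}^1=J^1=\{h_1^1\}$ contains no half-house of the form $h_i^2$. For the step, assume the invariant for $\overline{J}^{k-1}$ and suppose $h_i^2\in\overline{J}^k$ for some $i\le k$. If $i<k$, then $h_i^2\notin J^k$ because $J^k\subseteq\{h_k^1,h_k^2\}$, so $h_i^2\in\overline{J}^{k-1}$, and the induction hypothesis gives $h_i^1\in\overline{J}^{k-1}\subseteq\overline{J}^{k-1}\cup J^k$. If $i=k$, then $k\ge 2$ (the subcase $i=k=1$ cannot occur since $h_1^2\notin\overline{J}^1$) and $h_k^1,h_k^2\in J^k\subseteq\overline{J}^{k-1}\cup J^k$. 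In both cases $h_i^1$ lies among the candidates from which $\overline{J}^k$ is selected, and $h_i^1$ ranks above $h_i^2$ in $\prefcloned_1$; hence whenever $h_i^2$ survives the top-$k$ cut, so does $h_i^1$, giving $h_i^1\in\overline{J}^k$ and closing the induction.

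Applying the invariant at the last stage, $h_i^2\in\BestEURes$ forces $h_i^1\in\BestEURes$. Now suppose the hypothesis of the proposition: $h_i^j$ is allocated to agent~$1$, \ie $h_i^j\in\BestEURes$, while the other half $h_i^{j\%2+1}$ is allocated to agent~$2$, \ie $h_i^{j\%2+1}\notin\BestEURes$. If $j=2$ this reads $h_i^2\in\BestEURes$ and $h_i^1\notin\BestEURes$, contradicting the invariant; hence $j=1$, which is precisely the claim that $h_i^1$ goes to agent~$1$ and $h_i^2$ goes to agent~$2$. The point I would be most careful about is the tie $V_{1,h_i^1}=V_{1,h_i^2}$ inside the ``$k$th maximal'' step of $\BestEUResAlgo$: it has to be resolved consistently with $\prefcloned_1$, which by the order-preserving bisection always lists $h_i^1$ before $h_i^2$, so that $h_i^1$ is retained whenever $h_i^2$ is; with this convention the argument is routine, and without it the statement fails only through an inessential relabelling of the two identical halves of a house.
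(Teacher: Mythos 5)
Your proof is correct and formalizes exactly the argument the paper gives in one line: since the order-preserving bisection places $h_i^1$ immediately before $h_i^2$ in $\prefcloned_1$, the top-$k$ selection in $\BestEUResAlgo$ can never retain $h_i^2$ while discarding $h_i^1$, so agent 1 never receives the second half of a house without the first. Your explicit induction over the stages $\overline{J}^k$ and your remark on tie-breaking between the two equal-utility halves are just careful elaborations of the same idea.
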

		\begin{proof}
		Follows from  $\BestEUResAlgo$ and $\prefcloned_1$ as between $h_i^1$ and $h_i^2$
		 agent $1$ always prefers $h_i^1$.
		\end{proof}

		The next lemma shows that for all full houses allocated to $i$,
		both halves are allocated in consecutive rounds.
		\begin{proposition}\label{prop:consec}
		If a full-house of $h_i$ is allocated to $1(2)$ then $h_i^1$ and $h_i^2$ are allocated to $1(2)$
		in two consecutive rounds.
		\end{proposition}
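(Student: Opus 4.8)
The plan is to reduce the claim to one structural fact --- the order in which each agent consumes its allocated half-houses during the alternation policy --- together with the trivial observation that the two halves of a full house are adjacent in $\prefcloned_2$.

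First I would determine, for each agent, in which round a given allocated half-house is received. By the description of $\BestEUResAlgo$ (\appref{s:alt_bestresp}), agent~$1$'s allocated set is $\bestalloc_1=\BestEURes$, and --- as recorded there --- agent~$1$ realises this set by consuming its members in lexicographic order, \ie in the order in which they appear in $\prefcloned_2 = h_1^1,h_1^2,h_2^1,h_2^2,\dots,h_m^1,h_m^2$, one per round; hence, when $\bestalloc_1$ is listed in the order of $\prefcloned_2$, its $r$th element is exactly the half-house agent~$1$ obtains in round~$r$. For agent~$2$ the same conclusion holds for an elementary reason: agent~$2$ always takes his $\prefcloned_2$-most preferred still-available half-house, and a half-house once taken is never released, so agent~$2$'s successive picks form a strictly decreasing sequence with respect to $\prefcloned_2$; therefore listing $\bestalloc_2$ in the order of $\prefcloned_2$ again places the half-house obtained in round~$r$ in position~$r$. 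In both cases, the round in which a half-house $h\in\bestalloc_j$ is received equals the rank of $h$ inside $\bestalloc_j$ under the $\prefcloned_2$-ordering.

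Now fix a full house $h_i$ allocated to agent~$j\in\{1,2\}$, so that $h_i^1,h_i^2\in\bestalloc_j$. By the order-preserving bisection, $h_i^1$ and $h_i^2$ occupy the consecutive positions $2i-1$ and $2i$ of $\prefcloned_2$, so no half-house lies strictly between them; hence $h_i^1$ and $h_i^2$ are consecutive inside $\bestalloc_j$ under the $\prefcloned_2$-ordering, and by the previous paragraph agent~$j$ receives them in two consecutive rounds. Since $h_i^1$ precedes $h_i^2$ in both $\prefcloned_1$ and $\prefcloned_2$, it is $h_i^1$ that is received in the earlier of the two rounds.

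The one step that is not immediate from the definitions of \secref{s:connection_alt_ps} is the assertion that agent~$1$ obtains $\BestEURes$ by consuming its houses in $\prefcloned_2$-order, and I expect this to be the only delicate point. I would quote it from Kohler and Chandrasekaran~\citep{KoCh71a}, where it appears as the statement that the houses of the optimal set $\overline J$ must be chosen in the lexicographic order; a self-contained proof would proceed by an exchange argument --- if agent~$1$ consumed some house of $\bestalloc_1$ in a round while a $\prefcloned_2$-earlier house $h'\in\bestalloc_1$ remained available, then agent~$2$, who prefers $h'$ to every house that agent~$1$ eventually leaves behind, would claim $h'$ first, contradicting $h'\in\bestalloc_1$ --- and its delicate part would be bookkeeping the rounds in which agent~$2$ consumes houses outside $\bestalloc_1$.
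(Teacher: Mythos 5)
Your proof is correct and follows essentially the same route as the paper's (much terser) argument: agent~$1$'s case rests on the fact that $\BestEUResAlgo$ has agent~$1$ consume $\bestalloc_1$ in lexicographic ($\prefcloned_2$) order, and agent~$2$'s case on the adjacency of $h_i^1,h_i^2$ in $\prefcloned_2$ combined with the greedy nature of the alternation policy. You simply make explicit the round-equals-rank bookkeeping that the paper leaves implicit.
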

		\begin{proof}
		For agent $1$ it follows from  construction of $\bestalloc_{1}$ in   $\BestEUResAlgo$.
		For agent $2$ it follows from the definition $\prefcloned_2$, as  $h_i^1$ and $h_i^2$ are ordered
	consecutively in $\prefcloned_2$, and the fact that we use the
		alternation policy to obtain $\bestalloc_2$.
		\end{proof}
	%

		We denote $H_{G_j} = \{h_{i_1}^j,\ldots, h_{i_p}^j\}$ the lexicographically ordered set
		of half-houses allocated to agent $j$, $j=1,2$.
		We show that utilities of these houses decrease monotonically given this order.

		\begin{proposition}\label{prop:utility_order}
		$V_{1h_{i_1}^1} > \ldots > V_{1h_{i_p}^1}$ for $h_{i_1}^1,\ldots, h_{i_p}^1 \in H_{G_1}$.
		\end{proposition}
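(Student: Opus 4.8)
The plan is a proof by contradiction that uses only the coarsest feature of $\BestEUResAlgo$: the fact that it builds $\bestalloc_1=\BestEURes=\overline{J}^m$ stage by stage, where at stage $k\ge 2$ the set $\overline{J}^k$ is obtained from $\overline{J}^{k-1}\cup I^k$ (with $I^k=\{h_k^1,h_k^2\}$) by retaining the $k$ half-houses of largest value for agent~$1$. Since $h_k^1$ and $h_k^2$ are fresh, $|\overline{J}^{k-1}\cup I^k|=k+1$, so exactly one half-house is discarded at stage~$k$, and that half-house has value no larger than any half-house retained in $\overline{J}^k$. A half-house $h_j^1$ lies in $H_{G_1}$ precisely when $h_j^1\in\overline{J}^m$ --- hence $h_j^1$ is never discarded and so $h_j^1\in\overline{J}^{k}$ for every stage $k$ from the stage $j$ at which it enters onwards --- and $h_j^2\notin\overline{J}^m$ --- hence $h_j^2$ is discarded at some stage $\kappa_j$, which, since $h_j^2$ only enters at stage $j$, satisfies $\kappa_j\ge j$.

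With $H_{G_1}=\{h_{i_1}^1,\dots,h_{i_p}^1\}$ and $i_1<\dots<i_p$, I would assume for contradiction that $V_{1h_{i_a}^1}<V_{1h_{i_b}^1}$ for some $a<b$; then $i_a<i_b$, so $i_b\ge 2$ and $\kappa_{i_b}-1\ge i_b-1\ge i_a$. The crux is to examine stage $\kappa_{i_b}$, at which $h_{i_b}^2$ is discarded. The half-house $h_{i_a}^1$ entered at stage $i_a\le\kappa_{i_b}-1$ and is never discarded, so $h_{i_a}^1\in\overline{J}^{\kappa_{i_b}-1}\subseteq\overline{J}^{\kappa_{i_b}-1}\cup I^{\kappa_{i_b}}$; moreover $h_{i_a}^1$ is not the half-house discarded at that stage, that being $h_{i_b}^2\ne h_{i_a}^1$, so $h_{i_a}^1$ is retained in $\overline{J}^{\kappa_{i_b}}$. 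Since the discarded half-house $h_{i_b}^2$ has value at most that of any retained one, $V_{1h_{i_a}^1}\ge V_{1h_{i_b}^2}=V_{1h_{i_b}^1}$; because agent~$1$ has strict preferences (so distinct houses have distinct values) and $h_{i_a}\ne h_{i_b}$, this forces $V_{1h_{i_a}^1}>V_{1h_{i_b}^1}$, contradicting the assumption. Hence $V_{1h_{i_1}^1}>\dots>V_{1h_{i_p}^1}$.

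The one place that needs care --- and the main obstacle --- is the bookkeeping claim that at stage $\kappa_{i_b}$ exactly one half-house is discarded and it is one of minimum value among the candidates $\overline{J}^{\kappa_{i_b}-1}\cup I^{\kappa_{i_b}}$. This follows directly from the defining rule of $\overline{J}^k$ in $\BestEUResAlgo$, the only subtlety being value ties; but the only ties in agent~$1$'s values are between the two halves $h_j^1,h_j^2$ of one house, and these are broken in favour of $h_j^1$ (as in $\prefcloned_1$), so in particular once $h_{i_b}^2$ fails to be retained at stage $\kappa_{i_b}$ every retained half-house (including $h_{i_a}^1$) has value at least $V_{1h_{i_b}^2}$. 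No finer structural facts about $\BestEUResAlgo$ (such as Propositions~\ref{prop:halvesorder}--\ref{prop:consec}) are required for this statement.
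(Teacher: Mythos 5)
Your argument is correct and is essentially the paper's own proof: both rest on the observation that $\BestEUResAlgo$ discards exactly one minimum-value candidate from $\overline{J}^{k-1}\cup I^k$ at each stage, so if $h_{i_b}^2$ is ever discarded while a never-discarded $h_{i_a}^1$ of strictly smaller value is among the candidates, the greedy rule is violated. Your version is marginally tidier in that it jumps directly to the discard stage $\kappa_{i_b}$ (and handles the tie between the two halves of a house explicitly), whereas the paper first argues $h_{t'}^2$ enters $\overline{J}^{t'}$ and is removed later, but the underlying idea is identical.
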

		\begin{proof}
		By contradiction, suppose that $h_{t'}^1$ is a half-house that violates
		the statement: $V_{1h_{t'-1}^1} <  V_{1h_{t'}^1}$. The equality is not possible
		as we have strict preferences over houses.
		We denote $t=t'-1$ to simplify notations.
		From $\BestEUResAlgo$, it follows that $h_{t}^1$ was added to $\overline{J}_t$
		from  ${J}_t = \{h_{t}^1, h_{t}^2\}$ at the stage $t$
		and $h_{i_{j'}}^1$ was added to $\overline{J}_{t'}$
		from  ${J}_{t'} = \{h_{t'}^1, h_{t'}^2\}$ at the stage $t'$. As
		$h_{t}^1 \succ_2 h_{t'}^1$, $t < t'$. In other words,
		$h_{t'}^1$ was added to the $\BestEURes$ after
		$h_{t}^1$. As $V_{1h_{t}^1} < V_{1h_{t'}^1} = V_{1h_{t'}^2}$,
		$h_{t'}^2$ is also added to $\overline{J}_{t'}$
		at the stage $t'$. As $h_{t'}^1$ is half-house allocated to agent $1$,
		$h_{t'}^2$  was removed from $\overline{J}_{t''}$ at some later stage $t''$.
		However, it can not be removed before $h_{t}^1$
		which has a smaller utility. This leads to a
		contradiction as $h_{t}^1 \in \bestalloc_1$
		and $h_{t'}^2 \notin \bestalloc_1$.
		\end{proof}

		The next lemma shows that $\bestalloc_1$ is point-wise at most as good as $\bestalloc_2$
		with respect agent 2 preferences.

		\begin{lemma}\label{l:ordering}
		$\bestalloc_2(k) \succ_2 \bestalloc_1(k)$ or $\bestalloc_1(k)$ and $\bestalloc_2(k)$
	are halves of the same house $k=1,\ldots,2m$.
		\end{lemma}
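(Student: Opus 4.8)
The plan is to argue by contradiction, tracking the exact positions at which the two agents consume half-houses during the alternation policy. Enumerate all half-houses in agent~$2$'s order as $g_1 \succ_2 g_2 \succ_2 \cdots$; by the order-preserving bisection, $g_{2i-1}=h_i^1$ and $g_{2i}=h_i^2$ for every~$i$. Write $a_k := \bestalloc_1(k)$ and $b_k := \bestalloc_2(k)$ for the half-houses consumed by agents~$1$ and~$2$ in round~$k$. Because agent~$1$ plays the best response, it consumes the elements of $\BestEURes$ in lexicographic order \citep{KoCh71a}, so $a_1 \succ_2 a_2 \succ_2 \cdots$; because agent~$2$ is truthful, at each of its turns it takes the lexicographically first still-available half-house, so $b_1 \succ_2 b_2 \succ_2 \cdots$ and, crucially, at the instant agent~$2$ picks $b_k$ every half-house that is $\succ_2 b_k$ has already been consumed. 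Fix a round~$k$ and suppose, for contradiction, that $a_k \succ_2 b_k$ and that $a_k,b_k$ are not the two halves of a single house; since $\succ_2$ totally orders half-houses and $a_k \ne b_k$, this is exactly the negation of the claim for round~$k$.

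The first step is to pin $b_k$ down exactly. Just before agent~$2$'s turn in round~$k$, the consumed half-houses are precisely $a_1,\dots,a_k,b_1,\dots,b_{k-1}$, that is, $2k-1$ of them, and each one is $\succ_2 b_k$ --- an $a_i$ because $a_i \succ_2 a_k \succ_2 b_k$ when $i<k$ and $a_k \succ_2 b_k$ when $i=k$, a $b_i$ because agent~$2$'s picks are $\succ_2$-decreasing. Conversely, agent~$2$'s greedy choice of $b_k$ forces every half-house $\succ_2 b_k$ to lie in this set. Hence exactly $2k-1$ half-houses are $\succ_2 b_k$, so $b_k = g_{2k} = h_k^2$.

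The second step is to locate $h_k^1 = g_{2k-1}$, which, being $\succ_2 b_k$, is one of $a_1,\dots,a_k,b_1,\dots,b_{k-1}$. If $g_{2k-1} = a_i$, then $i<k$ --- otherwise $\{a_k,b_k\} = \{h_k^1,h_k^2\}$, which was excluded --- so $a_i \succ_2 a_k$, which places $a_k$ strictly after $g_{2k-1} = a_i$ and strictly before $b_k = g_{2k}$ in agent~$2$'s order, impossible since nothing lies between $g_{2k-1}$ and $g_{2k}$. If $g_{2k-1} = b_i$ with $i \le k-1$, then at agent~$2$'s turn in round~$i$ the $2k-2$ half-houses $g_1,\dots,g_{2k-2}$, all $\succ_2 g_{2k-1}$, must already be consumed, whereas only $2i-1 \le 2k-3$ half-houses are consumed at that moment, again impossible. (When $k=1$ the second case is vacuous and the first forces $\{a_1,b_1\} = \{h_1^1,h_1^2\}$, contradicting our assumption.) Either way we reach a contradiction, which proves the lemma.

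I expect the only delicate point to be the first step: converting ``$a_k$ precedes $b_k$ in agent~$2$'s order'' into the rigid equality $b_k = g_{2k}$ rests on the observation that at each of agent~$2$'s turns the consumed half-houses form exactly an initial segment of agent~$2$'s order, and this uses only the two monotonicity facts noted above, not any finer structure of $\BestEUResAlgo$. Propositions~\ref{prop:halvesorder} and~\ref{prop:consec} can be used to streamline the second step --- e.g.\ Proposition~\ref{prop:consec} yields $b_{k-1} = g_{2k-1}$ immediately in the case $g_{2k-1} = b_i$ --- but the counting argument above already disposes of every case.
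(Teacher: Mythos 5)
Your proof is correct, and it takes a genuinely different (and tighter) route than the paper's. The paper proceeds by induction on rounds and, assuming $a_k\succ_2 b_k$ with $a_k,b_k$ not halves of one house, asserts that there must be some earlier round at which $a_k$ would have been agent~2's top available choice, so agent~2 would already have eaten it; the existence of that earlier round is left informal, and it is exactly the step that degenerates when $a_k$ and $b_k$ \emph{are} the two halves of $h_k$. You avoid induction entirely: the $2k-1$ half-houses consumed just before agent~2's $k$-th pick are exactly those agent~2 prefers to $b_k$, which forces $b_k=g_{2k}=h_k^2$, and the contradiction then comes from the impossibility of placing the partner half $g_{2k-1}=h_k^1$ among the earlier picks --- it either collides with the excluded same-house case, traps $a_k$ strictly between $g_{2k-1}$ and $g_{2k}$, or violates the same count applied at an earlier round of agent~2. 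What your version buys is precision: the only inputs are the two monotonicity facts (agent~1 eats the elements of $\BestEURes$ in lexicographic order; agent~2's greedy picks are $\succ_2$-decreasing and leave no preferred item unconsumed), and the ``consumed items form an initial segment of agent~2's order'' invariant, which the paper's argument uses only implicitly, is made explicit. What the paper's induction buys is brevity. Both arguments share the one external premise that $\BestEURes$ is consumed in lexicographic order; the paper also invokes this without proof in the appendix (in the proof of Lemma~\ref{l:strict_order_of_halves}), and you correctly source it to Kohler and Chandrasekaran.
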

		\begin{proof} By induction on the number of rounds. The base case holds trivially as
		$\bestalloc_2(1)$ is in $\{h_{1}^1, h_{1}^2\}$
		and $h_{1}^1 \succ_2 \bestalloc_1(1) $ or $\bestalloc_1(1) = h_1^1$ and
	$\bestalloc_2(1) = h_1^2$.

		Assume that the statement holds for $i-1$ rounds. Consider the $i$th round.

		Suppose, by contradiction, $h:= \bestalloc_1(i) \succ_2 \bestalloc_2(i)=:h'$
	and $h$ and $h'$ are not halves of the same house.
		As  $h$ and $h'$ are allocated
		houses at the  $i$th  round then these houses are top preferences
		of agent $1$ and agent $2$, respectively, after $i-1$th round.
		As $h \succ_2 h'$, there exists a round $i' < i$ such that $h$
		is the top preference of agent $2$ at this round. Moreover,
		$h$ is available to agent $2$ at this round as agent $1$ only requests
		it at the $i$th round. Hence, $h$ will be allocated to
		agent $2$ at the $i'$th round. This contradicts the assumption
		that $h$ is allocated to agent $1$.
		\end{proof}

		The next result is the key result the section on computing the best EU response. We consider half-houses $G_{H_1} =  \{h_{i_1}^1,\ldots, h_{i_p}^1\}$ allocated to agent $1$ and $G_{H_2} =  \{h_{i_1}^2,\ldots, h_{i_p}^2\}$ allocated to agent $2$.
	$G_{H_1}$ and $G_{H_2}$ are lexicographically ordered.
	We show that, first, $h_{i_1}^1$ and $h_{i_1}^2$
		are allocated to agent $1$ and agent $2$, respectively, after that, $h_{i_2}^1$ and $h_{i_2}^2$ are allocated and so on.

		\begin{lemma}\label{l:strict_order_of_halves}
		Suppose houses in $G_{H_1}$  are allocated in rounds $k_{i_1}^1,\ldots, k_{i_p}^1$
		and  houses in $G_{H_2}$  are allocated in rounds $k_{i_1}^2,\ldots, k_{i_p}^2$.
		Then  $k_{i_1}^1 < k_{i_1}^2 <  k_{i_2}^1 < k_{i_2}^2 < \ldots < k_{i_p}^1 < k_{i_p}^2$.
		\end{lemma}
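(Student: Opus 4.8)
The plan is an induction on $h=1,\dots,p$ that, at step $h$, establishes $k^1_{i_h} < k^2_{i_h}$ and, when $h<p$, also $k^2_{i_h} < k^1_{i_{h+1}}$; chaining these yields the displayed order once we know that agent~1 consumes $h_{i_1}^1,\dots,h_{i_p}^1$ in exactly that order and agent~2 consumes $h_{i_1}^2,\dots,h_{i_p}^2$ in exactly that order. The latter two facts are immediate: agent~2 plays his cloned preference truthfully and lexicographically, so among the half-houses allocated to him he picks $h_{i_1}^2\succ_2 h_{i_2}^2\succ_2\cdots$ in that order; and since agent~1 consumes $\bestalloc_1=\BestEURes$ in the order it is listed, which by \propref{prop:utility_order} is strictly decreasing in $V_1$ on $h_{i_1}^1,\dots,h_{i_p}^1$, he consumes these in that order as well. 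Throughout I will use the order-preserving bisection ($h_i^1\succ_2 h_i^2$ and $h_i^1\succ_1 h_i^2$), \propref{prop:halvesorder} (a split house gives $h_i^1$ to agent~1 and $h_i^2$ to agent~2, so there is no other kind of split), and \propref{prop:consec} (the two halves of a full house are consumed by its owner in consecutive rounds); together these say that every round of the alternation policy allocates either a split half or a full-house half.

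For $k^1_{i_h} < k^2_{i_h}$: in round $k^1_{i_h}$ agent~1 removes $h_{i_h}^1$, and since agent~2 strictly prefers $h_{i_h}^1$ to $h_{i_h}^2$ and only agent~1 ever touches $h_{i_h}^1$, agent~2 cannot reach $h_{i_h}^2$ before round $k^1_{i_h}$. To upgrade this to a strict inequality I apply \lemref{l:ordering} at round $k^1_{i_h}$: either $\bestalloc_2(k^1_{i_h})\succ_2 \bestalloc_1(k^1_{i_h})=h_{i_h}^1\succ_2 h_{i_h}^2$, so agent~2's pick in that round is strictly more preferred than $h_{i_h}^2$ and hence $k^2_{i_h}>k^1_{i_h}$; or $\bestalloc_1(k^1_{i_h})$ and $\bestalloc_2(k^1_{i_h})$ are the two halves of one house, which by \propref{prop:halvesorder} must be $h_{i_h}^1$ and $h_{i_h}^2$ and forces $k^1_{i_h}=k^2_{i_h}$. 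This coincidence is the only obstruction to strictness, and I would either rule it out under the relevant hypotheses or record that in that boundary case the two halves are simply consumed in the same round.

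For $k^2_{i_h} < k^1_{i_{h+1}}$ (the substantive inequality), suppose instead $k^1_{i_{h+1}}\le k^2_{i_h}$. Reading consumptions in round order, by round $k^2_{i_h}$ agent~1 has already taken $h_{i_1}^1,\dots,h_{i_{h+1}}^1$ together with every full-house half he will ever own that is lexicographically below $h_{i_{h+1}}^1$ (he consumes $\bestalloc_1$ in lexicographic order), whereas agent~2 has taken only $h_{i_1}^2,\dots,h_{i_h}^2$. I would then count both agents' consumptions against the cloned houses lexicographically at most $h_{i_h}^2$ — the halves of $h_1,\dots,h_{i_h}$ — classifying each of $h_1,\dots,h_{i_h}$ as full-to-agent-1, full-to-agent-2, or split (no fourth type, by \propref{prop:halvesorder}), and invoke \lemref{l:ordering} as a balance condition (agent~2's $r$th consumed house is lexicographically earlier than agent~1's $r$th, barring the two-halves exception) to contradict $k^1_{i_{h+1}}\le k^2_{i_h}$. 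The combinatorial heart is controlling, between rounds $k^1_{i_h}$ and $k^1_{i_{h+1}}$, how many full houses agent~1 ``catches up on'' relative to the full houses agent~2 has already banked, and showing this is enough to keep $h_{i_{h+1}}^1$ out of agent~1's consumed set before round $k^2_{i_h}$.

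The main obstacle is precisely this between-house counting. The within-house inequality and the two monotonicities drop out directly from \lemref{l:ordering}, \propref{prop:utility_order}, and agent~2's truthful lexicographic play; but the round-by-round ``agent~1 has caught up on enough full houses'' estimate requires a genuine refinement of the bookkeeping behind \propref{prop:consec}, \propref{prop:utility_order}, and \lemref{l:ordering}, and is exactly where the structure imposed by $\BestEUResAlgo$ on how $\bestalloc_1$ interleaves full and split houses is used. Once that estimate is in hand the induction closes, giving $k^1_{i_1} < k^2_{i_1} < \cdots < k^1_{i_p} < k^2_{i_p}$.
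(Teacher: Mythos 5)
Your decomposition is sensible, and the within-house inequality $k_{i_h}^1< k_{i_h}^2$ does come out of \lemref{l:ordering} and \propref{prop:halvesorder} roughly as you describe (modulo the same-round boundary case you leave unresolved). But the between-house inequality $k_{i_h}^2< k_{i_{h+1}}^1$ --- which you yourself call ``the combinatorial heart'' and defer to an estimate to be supplied later --- is a genuine gap, and it cannot be closed with the toolkit you list. The properties you plan to combine (\lemref{l:ordering}, \propref{prop:consec}, \propref{prop:utility_order}, agent 2's truthful lexicographic play, and the classification of houses into full/split) are all simultaneously satisfiable by allocations that violate the interleaving. Concretely, take $m=10$ with $\prefcloned_2$ lexicographic, and let agent 1 take $h_5^1$ in round 1, $h_6^1$ in round 2, and then the full houses $h_7,\dots,h_{10}$ in rounds 3--10; agent 2 then consumes $h_1,\dots,h_4$ in rounds 1--8 and only reaches $h_5^2$ and $h_6^2$ in rounds 9--10. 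Point-wise domination holds in every round, full houses are taken in consecutive rounds, the split halves $h_5^1,h_6^1$ carry decreasing $V_1$-values, and \propref{prop:halvesorder} holds --- yet $k_{h_5}^1<k_{h_6}^1<k_{h_5}^2$. So no ``balance condition'' derived from \lemref{l:ordering} can deliver the contradiction you want: what excludes this configuration is not bookkeeping but the optimality of $\bestalloc_1$. Indeed this allocation is not a best response, because agent 1 could take $h_5^2$ in round 2 instead of $h_6^1$, thereby obtaining all of $h_5$ and forfeiting only his half of the less valuable $h_6$.

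That exchange is precisely how the paper proves the lemma. It takes the first half-house $h_t^1$ violating the ordering, so that agent 1 receives another split half $h_{t'}^1$ at a round strictly between $k_t^1$ and $k_t^2$; it notes via \propref{prop:utility_order} that $V_{1h_t^2}=V_{1h_t^1}>V_{1h_{t'}^1}$; and its Claim~5 shows that requesting $h_t^2$ at round $k_{t'}^1$ instead of $h_{t'}^1$ perturbs agent 2's subsequent picks only in that agent 2 eventually absorbs $h_{t'}^1$, so agent 1 ends up with $\bestalloc_1\cup\{h_t^2\}\setminus\{h_{t'}^1\}$ --- a strict improvement contradicting that $\BestEURes$ is a best response. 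If you wish to keep your inductive framing, the step establishing $k_{i_h}^2<k_{i_{h+1}}^1$ must invoke the best-response property of $\bestalloc_1$ through such an exchange argument; refining the counting behind \lemref{l:ordering}, \propref{prop:consec} and \propref{prop:utility_order} will not substitute for it.
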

		\begin{proof}
		By contradiction, suppose that $h_{t}^1$ is the first half-house
		allocated to agent $1$ that violates the statement so that $k_{t}^1 < k_{t'}^1 < \ldots < k_{t}^2$.
		In other words, first, agent $1$ gets $h_t^1$ at the $k_{t}^1$th round
		and $h_{t'}^1$, which is a half of another house $h_{t'}$, at the $k_{t'}^1$th round, and later
		agent $2$ gets $h_t^2$ at the $k_{t}^2$th round.

		\begin{claim}\label{Claim 1.}
		The following inequality holds: $$h_{t}^1 \succ_2 h_{t'}^1.$$
		\end{claim}
	\begin{proof} This follows from the fact that houses in $\bestalloc_1 = \BestEURes$ are lexicographically ordered
		and the fact that $h_{t}^1$ is allocated before $h_{t'}^1$ to agent $1$.
		\end{proof}

	\begin{claim}
		\label{Claim 2.} The following inequality holds:
		$$ k_{t}^2 <  k_{t'}^{2}.$$
		\end{claim}
	\begin{proof}
	This follows from the structure $\prefcloned_2$ and $h_{t}^1 \succ_2 h_{t'}^1$ (Claim~\ref{Claim 1.}).
	\end{proof}
	From Claim~\ref{Claim 2.} and our assumption hypothesis we have
		$$k_{t}^1 < k_{t'}^1 < \ldots < k_{t}^2 <  k_{t'}^{2}.$$

		Suppose, $h_p^i$ and $h_q^i$ are allocated to agent $1$ at rounds
		$k_{t}^2$ and $k_{t'}^2$, respectively.

		\begin{claim}
		\label{Claim 3.} The following inequality holds:
		$$h_{t}^1 \succ_2  h_{t'}^1 \succ_2   h_p^i \succ_2 h_q^i.$$

		\end{claim}
		\begin{proof}
		Follows from Claim~\ref{Claim 2.}, $k_{t}^1 < k_{t'}^1 < \ldots < k_{t}^2 <  k_{t}^{2'}$,
		and the fact that houses in $\bestalloc_1 = \BestEURes$ are lexicographically ordered.
		\end{proof}

		We schematically show an allocation in the relevant rounds in the following table.
		The top part of the table shows allocation at rounds
		$k_{t}^1, k_{t'}^1, k_{t}^2 $  and $k_{t}^{2'}$.
		We use $\bullet$ to indicate that a house
		is allocated at a certain round but its label is not important for the proof.
	\begin{center}
		\scalebox{0.5}{
		\Large	
		$\begin{array}{|cccccccccccc|}
		\hline
		\multicolumn{12}{|c|}{\mathrm{Rounds}} \\

		&  \ldots & k_{t}^1& \ldots & k_{t'}^1  & \ldots & k_{t}^2 &  k_{t}^2+1& \ldots & k_{t'}^2-1 & k_{t'}^2  & \ldots \\
		\hline
	\multicolumn{12}{|c|}{\mathrm{An\ allocation\ obtained\ from\ } $\BestEUResAlgo$} \\

		\hline
		\bestalloc_1& \{\ldots  & h_{t}^1& \ldots & h_{t'}^1  & \ldots & h_{p}^i & \bullet&\ldots&\bullet& h_{q}^i & \ldots \} \\
		\bestalloc_2&  \{\ldots &  \bullet & \ldots & \bullet  & \ldots & h_{t}^2 & h_{s}^i&\ldots& h_{r}^i& h_{t'}^2 & \ldots \}\\
		\hline
		\multicolumn{12}{|c|}{\mathrm{ New\ allocation } } \\
		\hline
		 \bestalloc_1\cup \{h_{t}^2\} \setminus \{h_{t'}^1\}&\{\ldots  & h_{t}^1&\ldots  & h_{t}^2  & \ldots & h_{p}^1 & \bullet&\ldots&\bullet& h_{q}^1 & \ldots \} \\
		 \bestalloc_2\cup \{h_{t'}^1\} \setminus \{h_{t}^2\}& \{\ldots &   \bullet & \ldots &  \bullet &  \ldots&  h_{s}^i&\bullet& \ldots & h_{t'}^1& h_{t'}^2 & \ldots \}\\\hline

		\end{array}
		 $
		}
	\captionof{table}{A schematic representation of the proof of Claim~\ref{Claim 5.}}
	\end{center}

	\begin{claim}
		\label{Claim 4.} The following inequality holds:
		$$V_{1h_{t}^1} \geq V_{1h_{t'}^1}.$$
		\end{claim}
		\begin{proof}
		Follows from Claim~\ref{Claim 1.} and Proposition~\ref{prop:utility_order}.
	\end{proof}

		Next we show that agent $1$ can improve his outcome by deviating from $\bestalloc_1$ and obtain a contradiction to the assumption that $\bestalloc_1$  is a best response.

		\begin{claim}
		\label{Claim 5.}  If agent $1$ requests $h_t^2$ instead of $h_{t'}^1$ at the $k_{t'}^1$ round
		then agent $1$ improves its outcome.
	\end{claim}

		\begin{proof}
		First, we note that $h_t^2$ is available for agent $1$ at the $k_{t'}^1$ round.
		Indeed, by our assumption $k_{t'}^1 < k_{t}^2$, hence,
		the house $h_{t}^2$ is available to agent $1$ at round $k_{t'}^1$.
		Second, we show that  even if agent $1$ takes $h_t^2$ instead of $h_{t'}^1$ at the $k_{t'}^1$th round,
		agent $1$ can get all houses $\bestalloc_1 \setminus \{h_{t'}^1\}$. This shows that
		agent $1$ improves his outcome.

		From Claim~\ref{Claim 3.}, $h_{t'}^1 \succ_2 h_p^i$.
	From the structure of  $\prefcloned_2$ we know that $ \ldots \succ_2 h_{t'}^1 \succ_2 h_{t'}^2 \succ_2 \ldots$.
		Hence, due to Lemma~\ref{l:ordering}, during rounds  $k_{t}^2, \ldots,  k_{t'}^2-1$,
		the top houses of agent $2$ are ranked higher than $h_p^i$ in his profile.
		Also, the house $h_t^2$ is not available to agent $2$ at the $k_{t}^2$ round.
		Hence, agent $2$ is allocated the same houses in rounds  $k_{t}^2, \ldots,  k_{t'}^2-2$  as he was allocated before
		the change during rounds  $k_{t}^2+1, \ldots,  k_{t'}^2-1$ (see the second part of the table above).

		Consider the round $k_{t'}^2-1$.  As agent $1$ was not allocated $h_{t'}^1$
		at the $k_{t'}^1$th round, $h_{t'}^1$ is available for agent $2$
		at the $k_{t'}^2-1$-th round. Hence, agent $2$ is allocated $h_{t'}^1$
		at the $k_{t}^{2'}-1$-th round and $h_{t'}^2$ at the $k_{t'}^2$th round.
		The remaining rounds are identical to allocation using $\bestalloc_1$.
		The new allocation of agent $1$ is $\bestalloc_1  \cup \{h_{t}^2\} \setminus \{h_{t'}^1\}$
		which is strictly better than $\bestalloc_1$.
	\end{proof}
		Claim~\ref{Claim 5.} shows that  agent $1$ can improve his outcome and $\BestEURes$ is not a best response.
		This leads to a contradiction.\qed
		\end{proof}

		We denote  $\bestalloc_1^{-1}(h_i^{2})$ the round when $h_i^{2}$ is allocated.
		\begin{definition}\label{def:matching}
		A pair $\bestalloc_1$ and $\bestalloc_2$ has the \emph{matching} property if and only if for each pair of half-houses
		$h_i^1$ and $h_i^2$ such that $h_i^1 \in \bestalloc_1$ and $h_i^{2} \in \bestalloc_2$,
		 we have $\bestalloc_1^{-1}(h_i^1) = \bestalloc_2^{-1}(h_i^{2})$.
		\end{definition}

		\begin{example}\label{exm:matching}
		Consider
		$\bestalloc_1 = \{h_1^1,h_5^1, h_5^2,  h_3^1,  h_6^1, h_6^2\}$
		and
		$\bestalloc_2= \{h_1^2, h_2^1, h_2^2, h_3^2, h_4^1, h_4^2\}$.
		These profiles have the matching property as
		$\bestalloc_1^-1(h_1^1) = \bestalloc_2^{-1}(h_1^{2})$
		and
		$\bestalloc_1^-1(h_3^1) = \bestalloc_2^{-1}(h_3^{2})$.

		Consider
		$\bestalloc_1 = \{h_1^1, h_3^1, h_5^1, h_5^2, h_6^1, h_6^2\}$
		and
		$\bestalloc_2= \{h_1^2, h_2^1, h_2^2, h_3^2, h_4^1, h_4^2\}$.
		These profiles do not have the matching property as $\bestalloc_1^-1(h_3^1)\neq \bestalloc_2^{-1}(h_3^{2})$.
		\end{example}

		\begin{lemma}\label{l:re_ordering}
		For any $\bestalloc_1$ there exists $\bestalloc_1'$ that has the consecutivity property
		and such the pair $\bestalloc_1'$ and $\bestalloc_2$ has the matching property.
		Moreover, the allocation obtained by agent $1$ using $\bestalloc_1'$ is the same
		as the allocation obtained using $\bestalloc_1$.
		\end{lemma}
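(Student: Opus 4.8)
The plan is to build $\bestalloc_1'$ from $\bestalloc_1$ by a sequence of \emph{delays}: for each half-house $h_{i_h}^1$ that agent~1 receives (by Proposition~\ref{prop:halvesorder} it is always the ``$1$'' half), we postpone agent~1's acquisition of $h_{i_h}^1$ from its original round $k_{i_h}^1$ to the round $k_{i_h}^2$ at which agent~2 receives $h_{i_h}^2$; concretely, within the round-segment $[k_{i_h}^1,k_{i_h}^2]$ we cyclically shift agent~1's grabs so that the houses originally taken in rounds $k_{i_h}^1+1,\dots,k_{i_h}^2$ are each taken one round earlier and $h_{i_h}^1$ is taken last, in round $k_{i_h}^2$. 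By Lemma~\ref{l:strict_order_of_halves} we have $k_{i_1}^1<k_{i_1}^2<k_{i_2}^1<k_{i_2}^2<\dots$, so the segments $[k_{i_h}^1,k_{i_h}^2]$ are pairwise disjoint and the delays can be applied one at a time without interfering with one another. The matching property then holds by construction, since after its delay $h_{i_h}^1$ is received by agent~1 in exactly round $k_{i_h}^2=\bestalloc_2^{-1}(h_{i_h}^2)$.

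The core of the proof is to show that a single delay changes neither agent's allocation, which I would establish by induction over the rounds of the alternation policy. Two facts drive it. First, $h_{i_h}^1$ stays available to agent~1 until round $k_{i_h}^2$: since $h_{i_h}^1$ immediately precedes $h_{i_h}^2$ in $\prefcloned_2$ and agent~2 takes $h_{i_h}^2$ only in round $k_{i_h}^2$, agent~2's greedy scan does not reach $h_{i_h}^1$ earlier, so it is still free when agent~1 finally requests it. Second — and this is why agent~2 is unaffected — the same fact gives $\bestalloc_2(r)\succ_2 h_{i_h}^1$ for every round $r<k_{i_h}^2$, so the continued availability of $h_{i_h}^1$ inside the segment never diverts agent~2 from his original picks; since the only other change inside the segment is that agent~1 takes one house one round sooner, an inductive bookkeeping of which houses are taken after each round — using Lemma~\ref{l:ordering} exactly as in the proof of Claim~\ref{Claim 5.} of Lemma~\ref{l:strict_order_of_halves} — shows that agent~2 still receives $\bestalloc_2(r)$ in every round and agent~1 receives the same multiset of houses, with $h_{i_h}^1$ merely postponed. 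Hence the allocation agent~1 obtains from $\bestalloc_1'$ equals the one from $\bestalloc_1$, and $\bestalloc_2$ is unchanged.

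It remains to verify that $\bestalloc_1'$ has the consecutivity property. For a half-house only $h_{i_h}^1\in\bestalloc_1'$, so the condition is vacuous there; for a house $h_x$ fully allocated to agent~1, Proposition~\ref{prop:consec} places $h_x^1$ and $h_x^2$ in two consecutive rounds of $\bestalloc_1$, hence adjacent in $\bestalloc_1$, and a cyclic shift keeps the two halves adjacent whenever both lie in the shifted (disjoint) segment. The one delicate point is a full-house pair straddling the right endpoint $k_{i_h}^2$ of a segment; the natural fix is to rule this out — using that $\bestalloc_2(k_{i_h}^2)=h_{i_h}^2$ forces $h_{i_h}^2\succ_2 \bestalloc_1(k_{i_h}^2)$, together with agent~1's best-response optimality (otherwise agent~1 could instead take $h_{i_h}^2$ in round $k_{i_h}^2$ and complete the full house $h_{i_h}$ for a strictly better outcome) — or else to absorb that pair into the shifted segment. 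The step I expect to be the real obstacle is the second one, certifying that agent~2's allocation is untouched, together with this consecutivity edge case at segment boundaries; everything else is bookkeeping layered on top of Lemmas~\ref{l:ordering} and~\ref{l:strict_order_of_halves} and Propositions~\ref{prop:halvesorder}--\ref{prop:utility_order}.
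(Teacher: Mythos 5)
Your proposal is essentially the paper's own argument: it too delays each singleton half-house $h_{t}^1$ from round $k_t^1$ to round $k_t^2$, shifts the intervening (necessarily full-house) picks one round earlier, and uses Lemma~\ref{l:strict_order_of_halves} to see that only full houses occupy the intervening rounds so that consecutivity and both allocations are preserved. The boundary case you flag (a full-house pair straddling $k_t^2$) is exactly the point the paper disposes of by asserting the segment length is even, so your extra care there is compatible with, and if anything slightly more explicit than, the published proof.
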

		\begin{proof}
		We set $\bestalloc_1' = \bestalloc_1$. Note that $\bestalloc_1'$
		has the consecutivity property as  $\bestalloc_1$ does as
	by Proposition~\ref{prop:consec} if a full-house of $h_i$ is allocated to $1(2)$ then $h_i^1$ and $h_i^2$ are allocated to $1(2)$
		in two consecutive rounds.

		Suppose, the pair $\bestalloc_1'$ and $\bestalloc_2$
		satisfies the statement up to round $k_{t}^1$.
		As $\bestalloc_1'$ has the consecutivity property,
		only the matching property can fail:
		$h_t^1$ is allocated to agent $1$ at the $k_{t}^1$ round and
		$h_t^2$  is allocated to agent $2$  at the $k_{t}^2$ round and
		$k_{t}^1 < k_{t}^2$.

		We show that we can move $h_t^1$ to round $k_{t}^2$
		and move all houses allocated during round $k_{t}^1+1, \ldots, k_{t}^2$
		one round forward in $\bestalloc_1'$. These shifts preserve the same allocation
		for agent $1$ and agent $2$ and the consecutivity property.

		By Lemma~\ref{l:strict_order_of_halves} we know that
		none of the half-houses are allocated to agent $1$
		during rounds $k_{t}^1+1,\ldots, k_{t}^2$.
		Hence, only full houses are allocated between these rounds.
		This means that the number of rounds between
		$k_{t}^1+1$ and $ k_{t}^2$ is even or 0.

		We also observe that none of the half houses
		are allocated to agent $2$ between rounds $k_{t}^1+1$ and $k_{t}^2$
		as $h_{t}^2$ is the first half-house allocated
		to agent $2$ after round $k_{t}^1$. Moreover,
		agent $2$ is not allocated houses greater  than $h_{t}^2$
		during rounds $k_{t}^1+1, \ldots, k_{t}^2$.

		We move the house $h_{t}^1$ to the position $k_{t}^2$
		and shift all houses in positions  $k_{t}^1+1$ and $ k_{t}^2$
		one round forward in $\bestalloc_1'$. Note that we preserve
		consecutivity property as all halves are moved together.

		After the move, agent $1$ still gets the same houses
		in rounds $k_{t}^1,\ldots,k_{t}^2$ as
		shifted houses are allocated even in earlier rounds
		compared to $\bestalloc_1$ and agent $1$ is allocated
		$h_{t}^1$ in the same round as agent $2$.
		Hence, allocations up to the round $k_{t}^2$
		are identical for $\bestalloc_1$ and $\bestalloc_1'$
		and both consecutivity and matching properties hold.

		We repeat the argument for the next half-house
		that violates the statement.
		\end{proof}

		\begin{example}\label{exm:swap}
		$\bestalloc_1 = \{h_1^1, h_3^1, h_5^1, h_5^2, h_6^1, h_6^2\}$
		and
		$\bestalloc_2= \{h_1^2, h_2^1, h_2^2, h_3^2, h_4^1, h_4^2\}$.
		We do not need to move $h_1^1$ as it is matched with $h_1^2$.
		We move $h_3^1$ to the fourth round so that  it is allocated
		at the same round as $h_3^2$.
	\begin{center}	
	\scriptsize
	$\begin{array}{|ccccccc|}
		\hline
		\multicolumn{7}{|c|}{\mathrm{Rounds}} \\

		 & 1 & 2& 3& 4 & 5 & 6 \\
		\hline
	\multicolumn{7}{|c|}{\mathrm{An\ allocation\ obtained\ from\ } $\BestEUResAlgo$} \\
		\hline
		\bestalloc_1 & \{ h_1^1,  & \mathbf{h_3^1 },& h_5^1, & h_5^2, & h_6^1, & h_6^2\}\\
		\bestalloc_2 & \{ h_1^2, & h_2^1, & h_2^2,& h_3^2,& h_4^1,& h_4^2\} \\
		\hline
		\multicolumn{7}{|c|}{\mathrm{New\ allocation\ with\ the \ matching\ property}} \\
		\hline
		\bestalloc_1' & \{ h_1^1,  & h_5^1, & h_5^2, & \mathbf{h_3^1}, &  h_6^1, & h_6^2\}\\
		\bestalloc_2 &  \{h_1^2, & h_2^1, & h_2^2,& h_3^2,& h_4^1,& h_4^2 \}\\
		\hline
		\end{array}
		 $
	\captionof{table}{A schematic representation of Example~\ref{exm:swap}.}
	\end{center}
		\end{example}

	A proof of Lemma~\ref{l:re_ordering} gives an correctness argument
	for lines~\ref{alg:delaybegin_loop}--\ref{alg:delayend_loop} in Algorithm~\ref{algo:2agent-EU-BR}.
	In these lines we put half-houses allocated to agent 1 later in the ordering
	to ensure that the matching property holds, i.e. agents obtain half-houses in the same rounds.

		\begin{lemma}\label{l:re_cons}
		Consider $\bestalloc_1$ and $\bestalloc_2$
		that satisfy  consecutivity and matching properties.
		Then there exists a preference  $\prefclonedbest_1$ over $\hclone$
		for agent $1$ that has the  consecutivity property and gives
		the same allocation as $\bestalloc_1$.
		\end{lemma}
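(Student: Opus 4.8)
The plan is to build $\prefclonedbest_1$ explicitly from $\bestalloc_1$ and $\bestalloc_2$ by moving, for each house that is split between the two agents, the half that agent~$2$ receives so that it sits immediately behind its partner in agent~$1$'s list, and then to argue that this reshuffling is invisible to the alternation policy. Formally, write $r\mapsto\bestalloc_1(r)$ and $r\mapsto\bestalloc_2(r)$ for the round‑by‑round picks of the alternation policy that defines $\bestalloc_1$ and $\bestalloc_2$ (agent~$2$ reporting $\prefcloned_2$, agent~$1$ reporting the list $L_0$ that lists $\bestalloc_1(1),\dots,\bestalloc_1(m)$ followed by $\bestalloc_2(1),\dots,\bestalloc_2(m)$). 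I obtain $\prefclonedbest_1$ from $L_0$ by the following surgery: for every house $h_i$ with $h_i^1\in\bestalloc_1$ and $h_i^2\in\bestalloc_2$ (a ``split'' house), delete $h_i^2$ from the tail of $L_0$ and reinsert it immediately after $h_i^1$; every other half-house keeps its position, so in particular the halves of houses allocated entirely to agent~$2$ stay in the tail, in $\bestalloc_2$'s order.

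First I would verify that $\prefclonedbest_1$ has the consecutivity property. If $h_i$ is allocated entirely to agent~$1$, then $h_i^1,h_i^2\in\bestalloc_1$ and by Proposition~\ref{prop:consec} they occupy consecutive rounds, hence are consecutive in $L_0$; neither is a split half, so no surgery is performed at that spot and they remain consecutive in $\prefclonedbest_1$. If $h_i$ is allocated entirely to agent~$2$, Proposition~\ref{prop:consec} gives its two halves consecutive in $\bestalloc_2$, which is preserved in the tail. If $h_i$ is split, the surgery places $h_i^2$ directly after $h_i^1$; since distinct split houses are reinserted at distinct positions these operations do not interfere, so the pair stays consecutive. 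Hence $\prefclonedbest_1$ satisfies consecutivity.

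It remains to show that agent~$1$'s allocation under $\prefclonedbest_1$ against $\prefcloned_2$ is again $\bestalloc_1$. I would prove by induction on $k$ that in round $k$ of the alternation policy with $\prefclonedbest_1$, agent~$1$ picks $\bestalloc_1(k)$ and agent~$2$ picks $\bestalloc_2(k)$. Granting the claim up to round $k$: when agent~$1$ moves in round $k+1$, the half-houses strictly preceding $\bestalloc_1(k+1)$ in $\prefclonedbest_1$ are precisely $\bestalloc_1(1),\dots,\bestalloc_1(k)$ together with the reinserted $h_i^2$ for the split houses whose $h_i^1$ lies among $\bestalloc_1(1),\dots,\bestalloc_1(k)$; the former have been taken by agent~$1$, and the latter have been taken by agent~$2$ because, by the matching property (Definition~\ref{def:matching}), agent~$2$ receives $h_i^2$ in the very round ($\le k$) in which agent~$1$ received $h_i^1$. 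So agent~$1$'s first available house is $\bestalloc_1(k+1)$, and it is still available, as it belongs to $\bestalloc_1$, which is disjoint from $\bestalloc_2$, and has index $>k$. Then agent~$2$ moves against the set of remaining half-houses $\hclone\setminus(\{\bestalloc_1(1),\dots,\bestalloc_1(k+1)\}\cup\{\bestalloc_2(1),\dots,\bestalloc_2(k)\})$, which is exactly the set he faces at his turn in round $k+1$ of the run defining $\bestalloc_1,\bestalloc_2$; since he reports $\prefcloned_2$ in both, he again picks $\bestalloc_2(k+1)$. This completes the induction, so agent~$1$ ends with the set $\bestalloc_1$, as required.

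The step I expect to be the crux — and the one for which the matching hypothesis is indispensable — is the assertion inside the induction that agent~$1$ never actually selects a reinserted half-house $h_i^2$: a priori, promoting $h_i^2$ high in agent~$1$'s list could let him grab it and spoil the allocation. The matching property is exactly what forces $h_i^2$ to have already been consumed by agent~$2$ before agent~$1$'s pointer reaches it, and the parallel-run bookkeeping above, which keeps the pool of unallocated half-houses synchronized with the run defining $\bestalloc_1,\bestalloc_2$ at each of agent~$2$'s turns, is the clean way to turn that intuition into a proof.
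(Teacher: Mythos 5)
Your construction is exactly the paper's: keep agent~1's half-houses in the order of $\bestalloc_1$, reinsert each split house's second half immediately after its first half, park the remainder at the tail, and use the matching property to argue that each reinserted $h_i^2$ has already been consumed by agent~2 before agent~1's pointer reaches it. Your explicit round-by-round induction just fills in detail the paper leaves implicit, so the proposal is correct and follows essentially the same approach.
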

		\begin{proof}
		Given $\bestalloc_1$ that satisfies properties in the statement
		of the lemma, we build a preference  $\prefclonedbest_1$ in the following way.
		We keep houses as they are ordered in  $\bestalloc_1$. For each half-house $h_i^1$
		allocated to agent $1$ we rank $h_i^2$ right after $h_i^1$. We put houses that are not allocated
		to agent $1$ in an arbitrary order, keeping halves together, at the end of the profile.
		Note that inserting $h_i^2$  after $h_i^1$ does not change the allocation
		as we know that $h_i^2$ is allocated to agent $2$ in the same round as $h_i^1$
		is allocated to agent $1$. Hence, $h_i^2$ will never be the top element
		for agent $1$ at any round. Hence,  $\prefclonedbest_1$ gives the same allocation as
		$\bestalloc_1$.
		\end{proof}
	A proof of Lemma~\ref{l:re_cons} provides a correctness argument for lines~\ref{alg:conbegin_loop}--
	\ref{alg:conend_loop}  in Algorithm~\ref{algo:2agent-EU-BR}. In these lines we move half-houses obtained by agent 2 right after corresponding half-houses obtained by agent 2.

		By Lemma~\ref{l:altbr2psbr}, given ${\prefclonedbest_1}$ which is the best response for $\prefcloned_2$,
		that satisfies the consecutivity property, $\prefbest_1$ obtained by the order-preserving join from  ${\prefclonedbest_1}$ is the best response for $\pref_2$
		using $\ps$.

	\begin{theorem}
		For the case of two agents and the PS rule, a DL best response and an EU best response are equivalent.
	\end{theorem}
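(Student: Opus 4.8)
The plan is to reduce the theorem to one structural fact about the set $\mathcal P$ of allocations agent~$1$ can reach by varying his report, agent~$2$ being fixed: \textbf{(A)} there is an allocation $p^\star\in\mathcal P$ that strictly SD-dominates every other allocation in $\mathcal P$. First I would record the easy normalisation that every DL best response and every EU best response may be taken to have length $m$ (appending the remaining houses at the end of a shorter list can only help, and strictly so whenever it yields a positive fraction), so that every allocation of $\mathcal P$ in play sums to $m/n=m/2$; for allocations of equal total, $p$ strictly SD-dominating $q$ is \emph{equivalent} to $\sum_h u_1(h)p(h)>\sum_h u_1(h)q(h)$ for \emph{every} $u$ consistent with $\succ_1$ (Abel summation, using that the total term cancels). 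Granting \textbf{(A)}: every $q\in\mathcal P\setminus\{p^\star\}$ is then strictly EU-worse than $p^\star$ for every consistent $u$, and also DL-worse than $p^\star$ since DL refines SD; as DL is a total order, $p^\star$ is the unique DL-maximal element of $\mathcal P$, hence the outcome of every DL best response, and it is likewise the unique EU-maximal element of $\mathcal P$ for every consistent $u$, hence the outcome of every EU best response. Thus $\succ_1'$ is a DL best response iff $PS(\succ_1',\succ_2)(1)=p^\star$ iff $\succ_1'$ is an EU best response, which is the theorem (and recovers \remref{remark:dl-eu}).

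To prove \textbf{(A)} I would combine the two algorithms already analysed. A DL best response is an SD best response (\secref{sec:dl}), so its outcome $p^{DL}$ is not SD-dominated by any member of $\mathcal P$. On the other side, fix any $u$ consistent with $\succ_1$ and run \algref{algo:2agent-EU-BR}: by \lemref{l:ps2alt}, \lemref{l:altbr2psbr}, \lemref{l:re_ordering} and \lemref{l:re_cons} it returns an EU best response whose outcome $p^{EU}_u$ is the order-preserving join of $\bestalloc_1=\BestEURes$. The crucial observation is that $p^{EU}_u$ does not depend on the particular values of $u$: \propref{prop:halvesorder}, \propref{prop:consec}, \propref{prop:utility_order}, \lemref{l:ordering} and \lemref{l:strict_order_of_halves} already describe, using only the ordinal data $\succ_1,\succ_2$, which half-houses $\BestEUResAlgo$ allocates to agent~$1$, in which rounds, and in which lexicographic order; the only way the numerical utilities could enter is through a tie between the two halves of a single house, and consecutivity breaks every such tie the same way. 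Hence $p^{EU}_u$ is one and the same allocation for all consistent $u$, so it maximises expected utility simultaneously for all of them, i.e.\ it strictly SD-dominates every allocation of $\mathcal P$. Since $p^{DL}$ is not SD-dominated we get $p^{DL}=p^{EU}_u$, and this common allocation is the required $p^\star$.

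I expect the $u$-independence step to be the real obstacle, since the ``obvious'' strategies do not have this property: with agent~$1$'s true order $h_1,h_2,h_3,h_4$ and agent~$2$'s order $h_2,h_3,h_1,h_4$, reporting truthfully gives agent~$1$ the allocation $(1,0,\tfrac12,\tfrac12)$, whereas reporting $h_2,h_1,h_3,h_4$ gives $(1,\tfrac12,0,\tfrac12)$, which is strictly DL- and SD-better, so neither ``report truthfully'' nor ``report the top $k$ houses first'' maximises the prefix sums. What must be checked carefully is that the combinatorial description of $\bestalloc_1$ produced by $\BestEUResAlgo$ is genuinely forced by $\succ_1$ and $\succ_2$ alone. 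A self-contained alternative that avoids the algorithm is to prove \textbf{(A)} directly by induction on $|H|$ using the two-agent recursion for PS (agent~$1$'s reported favourite either equals agent~$2$'s current favourite, in which case that house is split, or it does not, in which case both agents receive full houses): one shows that the two resulting branch-optimal allocations are SD-comparable and that the SD-larger of the candidate branch optima SD-dominates all of $\mathcal P$ — and it is precisely this last case analysis, deciding which branch wins, where the content sits.
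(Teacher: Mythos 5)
Your argument is correct, but it bridges EU and DL differently from the paper. The paper's own proof is two lines: (i) for two agents every PS fraction lies in $\{0,\tfrac12,1\}$, so the DL order on reachable allocations can be encoded as the EU order for exponential utilities; (ii) the appendix shows that the EU best response produced by \algref{algo:2agent-EU-BR} is the same for every utility function consistent with $\succ_1$ (\remref{remark:dl-eu}); chaining (i) and (ii) gives the theorem. You keep ingredient (ii) --- the $u$-independence of $\BestEUResAlgo$'s output, which you correctly identify as the real content and correctly trace to \propref{prop:halvesorder}--\lemref{l:re_cons} (the values $V_{1j}$ enter only through comparisons, which strict preferences determine except for the tie between the two halves of one house, fixed by the consecutivity/ordering convention) --- but you replace (i) by an SD argument: a single allocation $p^\star$ that is EU-optimal for \emph{every} consistent $u$ must weakly SD-dominate all of $\mathcal P$, and strictly so by Abel summation once totals are normalised to $m/2$; since the paper notes in \secref{sec:dl} that a DL best response is an SD best response, the DL outcome cannot be SD-dominated and must equal $p^\star$. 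Your route is longer but buys something: it never uses the $\{0,\tfrac12,1\}$ structure, it yields the stronger statement that one allocation is the unique optimum under DL, SD, and EU for all consistent utilities simultaneously, and it handles the ``every best response'' quantifier explicitly --- the equal-totals Abel-summation step you spell out is exactly what the paper's exponential-utility step silently relies on (ratio-$2$ utilities alone would not separate a prefix gain from an arbitrary tail without it). Your closing ``self-contained induction on $|H|$'' alternative is only a sketch and would need the branch comparison worked out, but the main argument stands without it.
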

	\begin{proof}
		For two agents, PS assigns probabilities from the set $\{0,1/2,1\}$. Hence DL preferences can be represented by the EU preferences where the utility are exponential: the utility of a more preferred house is twice the utility of the next preferred house. Hence a response if a DL best response if it is an EU best response for exponential utilities. On the other we have shown that for two agents and the PS rule, an EU best response is the same for any utilities compatible with the preferences. Hence for two agents, an EU best response for any utilities is the same as the EU best response for exponential utilities which in turn is the same as a DL best response.
	\end{proof}

	\section{Proof of Theorem~\ref{th:cycle}}


	\begin{proof}
	Using a computer program we have found the following 15 step sequence which leads
	to a cycling of the preference profile.  We use $U$ to denote
	the matrix of utilities of the agents over the items such that
	$U[1][1]$ is the utility of agent $1$ for house $h_1$.
	We use $P$ to represent the reported profile of each agent,
	$P[i][j]$ denotes the $j$th most preferred house of agent $i$.
	Note that $P$ starts as the truthful reporting in our example.
	We use $PS[i][j]$ to represent the fraction of house $j$ 
	that is eaten by agent $i$.  We use $EU[i]$ to be the expected
	utility of agent $i$.

	\noindent
	The initial preferences and utilities of the agents are

	\begin{minipage}{.4\textwidth}
	\[
	P_0=\begin{pmatrix}
		h_2 & h_3 & h_1 & h_4 & h_6 & h_5 \\
		h_6 & h_5 & h_2 & h_1 & h_4 & h_3 \\
		h_3 & h_6 & h_2 & h_1 & h_5 & h_4
	 \end{pmatrix}
	\]
	\end{minipage}
	\begin{minipage}{.48\textwidth}
	\[
	 U_0 = \begin{pmatrix}
		3 & 5 & 4 & 2 & 0 & 1 \\
		2 & 3 & 0 & 1 & 4 & 5 \\
		2 & 3 & 5 & 0 & 1 & 4
	 \end{pmatrix}.
	\]
	\end{minipage}
	\bigskip

	\noindent
	This yields the following allocation and utilities at the start

	\noindent
	\begin{minipage}{.30\textwidth}
	\[
	P_0=\begin{pmatrix}
		h_2 & h_3 & h_1 & h_4 & h_6 & h_5 \\
		h_6 & h_5 & h_2 & h_1 & h_4 & h_3 \\
		h_3 & h_6 & h_2 & h_1 & h_5 & h_4
	 \end{pmatrix}
	\]
	\end{minipage}
	\begin{minipage}{.48\textwidth}
	\[
	 PS_0 = \begin{pmatrix}
		1/2 	& 1	 & 0	 & 1/2	& 0		& 0 \\
		0 	& 0	 & 0	 & 1/4	& 3/4		& 1 \\
		1/2 	& 0	 & 1	 & 1/4	& 1/4		& 0
	 \end{pmatrix}
	\]
	\end{minipage}
	\begin{minipage}{.2\textwidth}
	\[
	 EU_0 = \begin{pmatrix}
		7.5  \\
	 	8.25  \\
	  	6.25
	 \end{pmatrix}.
	\]
	\end{minipage}
	\bigskip

	\noindent
	In Step 1, agent 3 changes his report and improves his utility.

	\noindent
	\begin{minipage}{.30\textwidth}
	\[
	P_1=\begin{pmatrix}
		 h_2 & h_3 & h_1 & h_4 & h_6 & h_5 \\
		 h_6 & h_5 & h_2 & h_1 & h_4 & h_3 \\
		 h_6 & h_3 & h_1 & h_2 & h_4 & h_5
	 \end{pmatrix}
	\]
	\end{minipage}
	\begin{minipage}{.48\textwidth}
	\[
	 PS_1 = \begin{pmatrix}
		5/12&1&1/4&1/3&0& 0 \\
	   1/6&0& 0&1/3&1&1/2 \\
	   5/12&0&3/4&1/3&0&1/2
	 \end{pmatrix}
	\]
	\end{minipage}
	\begin{minipage}{.2\textwidth}
	\[
	 EU_1 = \begin{pmatrix}
		7.9167  \\
	 	71667  \\
	  	6.5833
	 \end{pmatrix}.
	\]
	\end{minipage}
	\bigskip

	\noindent
	In Step 2, agent 1 changes his report in response.

	\noindent
	\begin{minipage}{.30\textwidth}
	\[
	P_2=\begin{pmatrix}
		h_3 & h_2 & h_1 & h_4 & h_5 & h_6 \\
		 h_6 & h_5 & h_2 & h_1 & h_4 & h_3 \\
		 h_6 & h_3 & h_1 & h_2 & h_4 & h_5
	 \end{pmatrix}
	\]
	\end{minipage}
	\begin{minipage}{.48\textwidth}
	\[
	 PS_2 = \begin{pmatrix}
		1/24&7/8&3/4&1/3&0& 0\\
	   1/24&1/8&0&1/3&1&1/2\\
	  11/12&0&1/4&1/3&0&1/2
	 \end{pmatrix}
	\]
	\end{minipage}
	\begin{minipage}{.2\textwidth}
	\[
	 EU_2 = \begin{pmatrix}
		8.1667  \\
	 	7.2917  \\
	  	5.0833
	 \end{pmatrix}.
	\]
	\end{minipage}
	\bigskip

	\noindent
	In Step 3, agent 3 again changes his report.

	\noindent
	\begin{minipage}{.3\textwidth}
	\[
	P_3=\begin{pmatrix}
		h_3 & h_2 & h_1 & h_4 & h_5 & h_6 \\
		 h_6 & h_5 & h_2 & h_1 & h_4 & h_3 \\
		 h_3 & h_6 & h_2 & h_1 & h_5 & h_4
	 \end{pmatrix}
	\]
	\end{minipage}
	\begin{minipage}{.48\textwidth}
	\[
	 PS_3 = \begin{pmatrix}
		1/2&5/8&1/2&3/8&0& 0\\
	    0& 0& 0&5/16&15/16&3/4\\
	   1/2&3/8&1/2&5/16&1/16&1/4
	 \end{pmatrix}
	\]
	\end{minipage}
	\begin{minipage}{.2\textwidth}
	\[
	 EU_3 = \begin{pmatrix}
		7.3750  \\
	 	7.8125  \\
	  	5.6875
	 \end{pmatrix}.
	\]
	\end{minipage}
	\bigskip

	\noindent
	In Step 4, agent 1 reacts again.

	\noindent
	\begin{minipage}{.3\textwidth}
	\[
	P_4=\begin{pmatrix}
		 h_2 & h_1 & h_3 & h_4 & h_5 & h_6 \\
		 h_6 & h_5 & h_2 & h_1 & h_4 & h_3 \\
		 h_3 & h_6 & h_2 & h_1 & h_5 & h_4
	 \end{pmatrix}
	\]
	\end{minipage}
	\begin{minipage}{.48\textwidth}
	\[
	 PS_4 = \begin{pmatrix}
		1/2&1& 0&1/2&0& 0\\
	    0& 0& 0&1/4&3/4&1\\
	   1/2&0& 1&1/4&1/4&0
	 \end{pmatrix}
	\]
	\end{minipage}
	\begin{minipage}{.2\textwidth}
	\[
	 EU_4 = \begin{pmatrix}
		7.500  \\
	 	8.250  \\
	  	6.250
	 \end{pmatrix}.
	\]
	\end{minipage}
	\bigskip

	\noindent
	In Step 5, agent 3 reacts again.

	\noindent
	\begin{minipage}{.3\textwidth}
	\[
	P_5=\begin{pmatrix}
		h_2 & h_1 & h_3 & h_4 & h_5 & h_6 \\
		 h_6 & h_5 & h_2 & h_1 & h_4 & h_3 \\
		 h_6 & h_2 & h_3 & h_1 & h_4 & h_5
	 \end{pmatrix}
	\]
	\end{minipage}
	\begin{minipage}{.48\textwidth}
	\[
	 PS_5 = \begin{pmatrix}
		7/8&3/4&1/16&5/16&0& 0 \\
	   1/8&0& 0&3/8&1&1/2 \\
	    0&1/4&15/16&5/16&0&1/2
	 \end{pmatrix}
	\]
	\end{minipage}
	\begin{minipage}{.2\textwidth}
	\[
	 EU_5 = \begin{pmatrix}
		7.250  \\
	 	7.125  \\
	  	7.4375
	 \end{pmatrix}.
	\]
	\end{minipage}
	\bigskip

	\noindent
	In Step 6, agent 2 reacts.

	\noindent
	\begin{minipage}{.3\textwidth}
	\[
	P_6=\begin{pmatrix}
		h_2 & h_1 & h_3 & h_4 & h_5 & h_6 \\
		 h_6 & h_2 & h_1 & h_5 & h_3 & h_4 \\
		 h_6 & h_2 & h_3 & h_1 & h_4 & h_5
	 \end{pmatrix}
	\]
	\end{minipage}
	\begin{minipage}{.48\textwidth}
	\[
	 PS_6 = \begin{pmatrix}
		1/2&2/3&1/4&1/2&1/12&0 \\
	   1/2&1/6&0& 0&5/6&1/2 \\
	    0&1/6&3/4&1/2&1/12&1/2
	 \end{pmatrix}
	\]
	\end{minipage}
	\begin{minipage}{.2\textwidth}
	\[
	 EU_6 = \begin{pmatrix}
		6.833  \\
	 	7.333  \\
	  	6.333
	 \end{pmatrix}.
	\]
	\end{minipage}
	\bigskip

	\noindent
	In Step 7, agent 3 reacts.

	\noindent
	\begin{minipage}{.3\textwidth}
	\[
	P_7=\begin{pmatrix}
		h_2 & h_1 & h_3 & h_4 & h_5 & h_6 \\
		 h_6 & h_2 & h_1 & h_5 & h_3 & h_4 \\
		 h_6 & h_3 & h_1 & h_2 & h_5 & h_4
	 \end{pmatrix}
	\]
	\end{minipage}
	\begin{minipage}{.48\textwidth}
	\[
	 PS_7 = \begin{pmatrix}
		1/2&3/4&1/8&5/8&0& 0\\
	   1/2&1/4&0&3/16&9/16&1/2\\
	    0& 0&7/8&3/16&7/16&1/2
	 \end{pmatrix}
	\]
	\end{minipage}
	\begin{minipage}{.2\textwidth}
	\[
	 EU_7 = \begin{pmatrix}
		7.00  \\
	 	6.6875  \\
	  	6.8125
	 \end{pmatrix}.
	\]
	\end{minipage}
	\bigskip

	\noindent
	In Step 8, agent 1 changes his report.

	\noindent
	\begin{minipage}{.3\textwidth}
	\[
	P_8=\begin{pmatrix}
		h_2 & h_3 & h_1 & h_4 & h_5 & h_6 \\ 
	 	h_6 & h_2 & h_1 & h_5 & h_3 & h_4 \\
	 	h_6 & h_3 & h_1 & h_2 & h_5 & h_4
	 \end{pmatrix}
	\]
	\end{minipage}
	\begin{minipage}{.48\textwidth}
	\[
	 PS_8 = \begin{pmatrix}
		5/24&3/4&3/8&2/3&0& 0\\
	   7/12&1/4&0&1/6&1/2&1/2\\
	   5/24&0&5/8&1/6&1/2&1/2
	 \end{pmatrix}
	\]
	\end{minipage}
	\begin{minipage}{.2\textwidth}
	\[
	 EU_8 = \begin{pmatrix}
		7.2083  \\
	 	6.5833  \\
	  	6.0417
	 \end{pmatrix}.
	\]
	\end{minipage}
	\bigskip

	\noindent
	In Step 9, agent 2 reacts.

	\noindent
	\begin{minipage}{.3\textwidth}
	\[
	P_9=\begin{pmatrix}
		h_2 & h_3 & h_1 & h_4 & h_5 & h_6 \\
	 	h_6 & h_2 & h_5 & h_1 & h_3 & h_4 \\
	 	h_6 & h_3 & h_1 & h_2 & h_5 & h_4
	 \end{pmatrix}
	\]
	\end{minipage}
	\begin{minipage}{.48\textwidth}
	\[
	 PS_9 = \begin{pmatrix}
		1/2&3/4&3/8&3/8&0& 0 \\
	    0&1/4&0&5/16&15/16&1/2 \\
	   1/2&0&5/8&5/16&1/16&1/2
	 \end{pmatrix}
	\]
	\end{minipage}
	\begin{minipage}{.2\textwidth}
	\[
	 EU_9 = \begin{pmatrix}
		7.5  \\
	 	7.3125  \\
	  	6.1875
	 \end{pmatrix}.
	\]
	\end{minipage}
	\bigskip

	\noindent
	In Step 10, agent 3 reacts again.

	\noindent
	\begin{minipage}{.3\textwidth}
	\[
	P_{10}=\begin{pmatrix}
		h_2 & h_3 & h_1 & h_4 & h_5 & h_6 \\
	 	h_6 & h_2 & h_5 & h_1 & h_3 & h_4 \\
	 	h_3 & h_1 & h_2 & h_5 & h_4 & h_6
	 \end{pmatrix}
	\]
	\end{minipage}
	\begin{minipage}{.48\textwidth}
	\[
	 PS_{10} = \begin{pmatrix}
		1/2&1& 0&1/2&0& 0 \\
	    0& 0& 0&1/4&3/4&1 \\
	   1/2&0& 1&1/4&1/4&0
	 \end{pmatrix}
	\]
	\end{minipage}
	\begin{minipage}{.2\textwidth}
	\[
	 EU_{10} = \begin{pmatrix}
		7.5  \\
	 	8.25  \\
	  	6.25
	 \end{pmatrix}.
	\]
	\end{minipage}
	\bigskip

	\noindent
	In Step 11, agent 2 reacts.

	\noindent
	\begin{minipage}{.3\textwidth}
	\[
	P_{11}=\begin{pmatrix}
		h_2 & h_3 & h_1 & h_4 & h_5 & h_6 \\
	 	h_5 & h_2 & h_3 & h_6 & h_1 & h_4 \\
	 	h_3 & h_1 & h_2 & h_5 & h_4 & h_6
	 \end{pmatrix}
	\]
	\end{minipage}
	\begin{minipage}{.48\textwidth}
	\[
	 PS_{11} = \begin{pmatrix}
		1/2&1& 0&1/2&0& 0 \\
	    0& 0& 0& 0& 1& 1 \\
	   1/2&0& 1&1/2&0& 0
	 \end{pmatrix}
	\]
	\end{minipage}
	\begin{minipage}{.2\textwidth}
	\[
	 EU_{11} = \begin{pmatrix}
		7.5  \\
	 	9.0  \\
	  	6.0
	 \end{pmatrix}.
	\]
	\end{minipage}
	\bigskip

	\noindent
	In Step 12, agent 3 reacts.

	\noindent
	\begin{minipage}{.3\textwidth}
	\[
	P_{12}=\begin{pmatrix}
		h_2 & h_3 & h_1 & h_4 & h_5 & h_6 \\
	 	h_5 & h_2 & h_3 & h_6 & h_1 & h_4 \\
	 	h_3 & h_2 & h_5 & h_6 & h_1 & h_4
	 \end{pmatrix}
	\]
	\end{minipage}
	\begin{minipage}{.48\textwidth}
	\[
	 PS_{12} = \begin{pmatrix}
		2/3&1& 0&1/3&0& 0 \\
	   1/6&0& 0&1/3&1&1/2 \\
	   1/6&0& 1&1/3&0&1/2
	 \end{pmatrix}
	\]
	\end{minipage}
	\begin{minipage}{.2\textwidth}
	\[
	 EU_{12} = \begin{pmatrix}
		7.6667  \\
	 	7.1667  \\
	  	7.3333
	 \end{pmatrix}.
	\]
	\end{minipage}
	\bigskip

	\noindent
	In Step 13, agent 2 reacts.

	\noindent
	\begin{minipage}{.3\textwidth}
	\[
	P_{13}=\begin{pmatrix}
		h_2 & h_3 & h_1 & h_4 & h_5 & h_6 \\
	 	h_6 & h_2 & h_3 & h_5 & h_1 & h_4 \\
	 	h_3 & h_2 & h_5 & h_6 & h_1 & h_4
	 \end{pmatrix}
	\]
	\end{minipage}
	\begin{minipage}{.48\textwidth}
	\[
	 PS_{13} = \begin{pmatrix}
		2/3&1& 0&1/3&0& 0 \\
	   1/6&0& 0&1/3&1/2&1 \\
	   1/6&0& 1&1/3&1/2&0
	 \end{pmatrix}
	\]
	\end{minipage}
	\begin{minipage}{.2\textwidth}
	\[
	 EU_{13} = \begin{pmatrix}
		7.6667  \\
	 	7.6667  \\
	  	5.8333
	 \end{pmatrix}.
	\]
	\end{minipage}
	\bigskip

	\noindent
	In Step 14, agent 3 reacts again.

	\noindent
	\begin{minipage}{.3\textwidth}
	\[
	P_{14}=\begin{pmatrix}
		h_2 & h_3 & h_1 & h_4 & h_5 & h_6 \\
	 	h_6 & h_2 & h_3 & h_5 & h_1 & h_4 \\
	 	h_3 & h_1 & h_2 & h_5 & h_4 & h_6
	 \end{pmatrix}
	\]
	\end{minipage}
	\begin{minipage}{.48\textwidth}
	\[
	 PS_{14} = \begin{pmatrix}
		1/2&1& 0&1/2&0& 0 \\
	    0& 0& 0&1/4&3/4&1 \\
	   1/2&0& 1&1/4&1/4&0
	 \end{pmatrix}
	\]
	\end{minipage}
	\begin{minipage}{.2\textwidth}
	\[
	 EU_{14} = \begin{pmatrix}
		7.5  \\
	 	8.25  \\
	  	6.25
	 \end{pmatrix}.
	\]
	\end{minipage}
	\bigskip

	\noindent
	In Step 15, agent 2 reacts once more to agent 3.

	\noindent
	\begin{minipage}{.3\textwidth}
	\[
	P_{15}=\begin{pmatrix}
		h_2 & h_3 & h_1 & h_4 & h_5 & h_6 \\
	 	h_5 & h_2 & h_3 & h_6 & h_1 & h_4 \\
	 	h_3 & h_1 & h_2 & h_5 & h_4 & h_6
	 \end{pmatrix}
	\]
	\end{minipage}
	\begin{minipage}{.48\textwidth}
	\[
	 PS_{15} = \begin{pmatrix}
		1/2&1& 0&1/2&0& 0 \\
	    0& 0& 0& 0& 1& 1 \\
	   1/2&0& 1&1/2&0& 0
	 \end{pmatrix}
	\]
	\end{minipage}
	\begin{minipage}{.2\textwidth}
	\[
	 EU_{15} = \begin{pmatrix}
		7.5  \\
	 	9.0  \\
	  	6.0
	 \end{pmatrix}.
	\]
	\end{minipage}
	\bigskip

	\noindent
	This last step is the same profile as step 11, which means we have 
	cycled.
	\end{proof}

	\end{document}